\documentclass[letterpaper]{article}
\usepackage[left=2.4cm, top=2.4cm, bottom=2.4cm, right=2.4cm]{geometry}
\usepackage[utf8]{inputenc}
\usepackage[american]{babel}
\usepackage{amsmath,amssymb,amsthm,latexsym,float,epsfig,bbm,amsrefs,mathtools}
\usepackage{subcaption}
\usepackage{todonotes}
\usepackage{enumitem}
\usepackage{graphicx}
\usepackage{svg}
\usepackage{color,soul}
\usepackage{rotating}
\usepackage{xspace}
\numberwithin{equation}{section}

\def\R{\mathbb{R}}

\def\C{\mathbb{C}}
\def\E{\mathbb{E}}
\def\P{\mathbb{P}}
\def\1{\mathbbm{1}}

\def\F{\mathcal{F}}

\def\Lsp{\mathcal{L}}
\def\calL{\mathcal{L}}
\def\Mt{\widetilde{M}}
\def\Gt{\widetilde{G}}
\def\posc{\eta}

\newcommand{\Expec}[1]{\E\left[\, #1 \, \right]}
\newcommand{\Expecnb}[1]{\E #1 }

\newcommand{\proc}[1]{\left( #1 \right)_{t \geq 0}}

\newcommand{\Ps}{\P^{*}}

\newcommand{\Probs}[1]{\P^{*}\left( #1 \right)}
\newcommand{\LL}[2]{\calL_{#1}^{#2}}

\newcommand{\levy}{L\'{e}vy\xspace}
\newcommand{\fl}{Figueroa-L{\'o}pez }
\newcommand{\flns}{Figueroa-L{\'o}pez}
\newcommand{\tht}[1]{\theta_{0}\left( #1 \right)}
\newcommand{\ltrip}{\left(b,0,\nu \right)}

\def\abs#1{\left\lvert #1 \right\rvert}
\def\rv[#1]#2{RV_{#1}^{#2}}

\theoremstyle{plain}
\newtheorem{thm}{Theorem}[section]
\newtheorem{lem}[thm]{Lemma}
\newtheorem{prop}[thm]{Proposition}
\newtheorem{cor}[thm]{Corollary}
\newtheorem{fclaim}[thm]{Formal Claim}
\theoremstyle{definition}

\theoremstyle{remark}
\newtheorem{rem}[thm]{Remark}

\usepackage{tcolorbox}
\tcbuselibrary{breakable}
\usepackage[colorlinks]{hyperref}
\newcommand{\RE}[1]{\Re{\left( #1 \right)}}

\title{Higher-order ATM asymptotics for the CGMY model via the characteristic function}
\author{Allen Hoffmeyer\thanks{Penumbra Investment Group, New York, NY 10020, USA ({\tt allen.hoffmeyer@gmail.com}).}
\and Christian Houdr\'e\thanks{School of Mathematics, Georgia Institute of 
    Technology, Atlanta, GA 30332, USA ({\tt houdre@math.gatech.edu}).}
   \thanks{Research supported in part by grants \#524678 and MP-TSM-00002660 from the Simons Foundation.}} 

\date{\today}

\begin{document}

\maketitle

\begin{abstract}
Using only the characteristic function, we derive short-time at-the-money
(ATM) call-price asymptotics for the exponential CGMY model with activity
parameter $Y\in(1,2)$.  The Lipton--Lewis formula expresses the normalized
ATM call price, denoted $c(t,0)$, in terms of the characteristic exponent,
which, upon rescaling at the rate $t^{-1/Y}$ from the $Y$-stable domain of
attraction, yields $c(t,0) = d_{1} t^{1/Y} + d_{2} t + o(t)$ as
$t\downarrow 0$.  The first-order coefficient $d_{1}$ is the known stable
limit from the domain of attraction of a symmetric $Y$-stable law, and
$d_{2}$ is given by an explicit integral involving the characteristic
exponent and the limiting stable exponent.  We then extract closed-form
higher-order coefficients by keeping the full Lipton--Lewis integrand intact
and introducing a dynamic cutoff that partitions the domain into inner,
core, and tail regions, establishing the expansion with controlled
remainder.  All coefficients are verified numerically against existing
closed-form expressions where available.
\end{abstract}

\vspace{0.2 cm}
\noindent{\textbf{AMS 2020 subject classifications}: 60E10, 60F99, 60G51, 91G20, 91G60.}

\vspace{0.1 cm}
\noindent{\textbf{Keywords and phrases}: CGMY models; short-time asymptotics;
ATM option pricing; characteristic function; Lipton--Lewis formula;
implied volatility; stable domain of attraction.}

\section{Introduction}
\label{sec:intro}

The CGMY model, introduced by Carr, Geman, Madan, and Yor \cite{cgmy},
provides a flexible family of pure-jump \levy processes indexed by four
parameters: a jump intensity $C>0$, exponential damping rates $G,M \geq 0$
governing the left and right tails, and an activity parameter $Y\in(0,2)$
controlling the fine structure of the jump paths.  When $1<Y<2$, the process
has infinite activity and infinite variation, and the \levy measure near the
origin resembles that of a symmetric $Y$-stable law up to exponential
tempering.  This makes CGMY a natural testing ground for short-time
option-price asymptotics, where the behavior of small jumps drives the
leading-order terms and the tempering enters only at subleading orders.

The short-time behavior of at-the-money (ATM) option prices in exponential
\levy models has received considerable attention over the last two decades.
For pure-jump models whose \levy measure has stable-like small jumps, the
first-order ATM call price scales as $t^{1/\alpha}$, where $\alpha$ is the
Blumenthal--Getoor index of the process.  Early results in this direction
were obtained by Tankov \cite{tankov_review}, who gave first-order price and
implied volatility asymptotics for general exponential \levy models, and by
\fl and Forde \cite{figueroa_forde}, who determined the leading-order ATM
coefficient for CGMY specifically via the small-maturity smile.  Muhle-Karbe
and Nutz \cite{karbe_nutz} established first-order ATM asymptotics for
general martingales with stable-like small jumps, connecting ATM prices to
first absolute moments and covering CGMY as a special case.  When a Brownian
component is present, the jump contribution is always lower order and the ATM
price scales as $\sqrt{t}$, driven entirely by the Gaussian part of the
triplet.

The second- and third-order picture is substantially richer.  \flns, Gong,
and Houdr\'{e} \cite{fl_houdre_cgmy} derived the expansion
\begin{align}
    c(t, 0) = d_{1} t^{1/Y} + d_{2} t + 
    \begin{cases} 
    a_{2,1}t^{2-\frac{1}{Y}} + o\left( t^{2 - \frac{1}{Y}} \right), & 
    \text{if } 1 < Y \leq \frac{3}{2} \\
    a_{1,2} t^{2/Y} + o\left( t^{2/Y} \right), & \text{if } \frac{3}{2}
    \leq Y < 2,
    \end{cases}
    \label{eq:fl_third_order}
\end{align}
for the pure-jump CGMY model, using a change of probability measure under
which the CGMY process becomes stable, together with high-order density
expansions for stable laws.  The bifurcation at $Y=3/2$ in the third-order term was a new
phenomenon: while the first two orders of the expansion ($t^{1/Y}$
and $t$) involve no piecewise behavior in $Y$, the third-order
exponent splits into $2-1/Y$ for $Y \leq 3/2$ and $2/Y$ for
$Y \geq 3/2$. In a companion paper, \flns, Gong, and Houdr\'{e}
\cite{figueroa_lopez_gong_houdre_2014} extended the second-order expansion to
a broader class of tempered \levy processes whose \levy measure takes the form
$\abs{x}^{-Y-1} q(x)$ for suitable decay functions $q$.  The
close-to-the-money regime (where the strike converges to the spot price as
maturity tends to zero) was subsequently treated by \fl and
\'{O}lafsson \cite{figueroa_lopez_olafsson_2016}, who relaxed the regularity
conditions on the \levy density to the weakest possible and replaced the
Brownian component with an independent stochastic volatility process with
leverage.  The same authors \cite{figueroa_lopez_olafsson_skew} provided
high-order expansions for the ATM implied volatility skew under stochastic
volatility with stable-like jumps, showing how leverage and vol-of-vol
parameters enter the skew at different orders.  Related ATM implied volatility
slope asymptotics for \levy models with a Brownian component were obtained by
Gerhold, G\"{u}l\"{u}m, and Pinter \cite{gerhold_2016}, who showed that for
CGMY with $Y\in(1,2)$ the ATM slope blows up at a rate depending on $Y$.

From a different perspective, Andersen and Lipton \cite{andersen_lipton}
surveyed the asymptotic theory for exponential \levy processes in a
comprehensive treatment that covers short-time, long-time, and wing
asymptotics for tempered stable models.  Their work emphasizes the role of
the Lipton--Lewis (LL) formula, which expresses option prices as Fourier
integrals involving the characteristic exponent, and they sketch how this
representation can be used to obtain first-order ATM asymptotics for CGMY.
However, as we discuss in Section~\ref{sec:cgmy_second}, their argument
requires further justification at one point (the claim that a certain integral
is $O(\varepsilon)$ after letting $t \rightarrow 0$), and a rigorous
second-order derivation from the characteristic function was not carried out.

Since a \levy process is completely and uniquely described by its
characteristic function, it is natural to ask whether the ATM call-price
asymptotics, including the second-order term, can be obtained directly from
the characteristic exponent via the LL representation, without passing through
measure transformations or density expansions.  The present paper addresses
this question.  We provide a second-order ATM expansion for CGMY
with $Y\in(1,2)$ that works entirely within the Fourier framework.  The
Lipton--Lewis formula expresses the normalized ATM call price in terms of the
characteristic exponent, and by rescaling at the stable concentration scale
$u \sim t^{-1/Y}$ we show that $c(t,0) = d_{1} t^{1/Y} + d_{2} t + o(t)$ as
$t\downarrow 0$, where $d_{1}$ is the known first-order coefficient and
$d_{2}$ is given by an explicit integral involving the characteristic exponent
and the limiting stable exponent $\theta_{0}$.  We verify numerically that
$d_{2}$ coincides with the closed-form second-order coefficient from
\cite{figueroa_lopez_gong_houdre_2014}.

We then extend the combined-integrand method to extract closed-form
coefficients beyond second order.  In the outer (large-$w$) region, the
integrands reduce to classical Laplace-type integrals against the factors
$e^{-\sigma_Y tw^Y}$ and $(1 - e^{-\sigma_Y tw^Y})$, which evaluate in
closed form via gamma functions.  The drift-squared coefficient $a_{2,1}$
at $t^{2-1/Y}$ agrees with the expression
$\tilde{\gamma}^2 p_Z(1,0)/2$ from \cite{fl_houdre_cgmy}, providing an
independent derivation via the Lipton--Lewis representation.  The first
binomial coefficient $a_{1,2}$ at $t^{2/Y}$ is a new closed-form result.
We also identify a quartic drift term $a_{4,1}$ at $t^{4-3/Y}$,
proportional to $\tilde{b}^4$, which enters the expansion for
$Y < 5/4$.  An important structural finding is that odd powers of the
drift $(i\tilde{b}\,w)^{2j+1}$ are purely imaginary and contribute
nothing to the expansion; this eliminates the exponent $3 - 2/Y$ and
shifts the effective bifurcation from $Y = 4/3$ to $Y = 5/4$.

The resulting five-term expansion
\[ c(t,0) = d_1\,t^{1/Y} + d_2\,t + a_{2,1}\,t^{2-1/Y}
+ a_{4,1}\,t^{4-3/Y} + a_{1,2}\,t^{2/Y}
+ o(t^{\max(2-1/Y,\,2/Y)})
\]
is verified numerically to high precision. At second order, 
splitting the two terms in the integrand and 
handling them separately identifies the correct orders but gives
the wrong coefficients, as the cancellation near $v = 0$ survives
only when the full integrand is kept intact.

More recently, the CGMY framework and its short-time asymptotics have found
applications in several directions.  Zhao and Li \cite{zhao_li_2022} and 
Azzone and Baviera \cite{azzone_2024} derived
first-order ATM expansions and implied volatility asymptotics for additive
normal tempered stable processes, a time-inhomogeneous generalization of
CGMY.  \flns, Gong, and Han \cite{figueroa_lopez_gong_han_2022} developed
estimation methods for tempered stable models of infinite variation that
exploit the short-time asymptotic structure for calibration.  Forde, Smith,
and Viitasaari \cite{forde_smith_viitasaari} established small-time Edgeworth
expansions for near-the-money options under rough stochastic volatility plus
generalized tempered stable jumps; when the rough volatility is turned off,
their result reduces to a variant of the third-order CGMY expansion from
\cite{fl_houdre_cgmy}.  \flns, Gong, and Lorig
\cite{figueroa_lopez_gong_lorig} extended short-time call-price expansions to
leveraged ETFs driven by \levy processes with local volatility.

This article is organized as follows.
Section~\ref{sec:prelim} recalls the necessary background on \levy processes,
exponential \levy models, and the CGMY process.
Section~\ref{sec:cgmy_second} presents the Lipton--Lewis formula and derives
both the first- and second-order ATM expansions from the characteristic
function. Section~\ref{sec:formal_expansion} develops the expansion beyond second
order: we first derive closed-form coefficients $a_{2,1}$, $a_{4,1}$,
and $a_{1,2}$ via a formal Laplace method, then prove
the five-term expansion using a dynamic cutoff argument, and verify
the coefficients numerically.  We also analyze the full exponent lattice
including its bifurcation structure and the vanishing of odd drift powers.
Section~\ref{sec:conclusion} collects concluding remarks and directions for
future work.

\section{Preliminaries}
\label{sec:prelim}

We start by briefly recalling the basic material on \levy processes and exponential-\levy 
models used in what follows; see \cites{sato,applebaum,cont_tankov} for comprehensive accounts.

A stochastic process $\proc{X_t}$ on $(\Omega,\F,\P)$ with values in $\R$ is a 
\emph{\levy process} if it has stationary and independent increments, càdlàg paths, 
and $X_0=0$ a.s.  Every \levy process is uniquely characterized 
by its \emph{characteristic triplet} $(b,\sigma,\nu)$, where $b\in\R$, $\sigma\ge0$, and 
$\nu$ is a positive Borel measure on $\left(\R, \mathcal{B}(\R) \right)$ without
atom at the origin satisfying 
$\int_{\scriptscriptstyle \R} (1\wedge x^2)\nu(dx)<\infty$. Hence 
$\Expec{e^{iuX_t}} = e^{t\psi(u)}$ where the characteristic exponent is
\begin{align}
    \psi(u)
      = iub - \tfrac12\sigma^2u^2
        + \int_{\R}
            \big( e^{iux}-1 - iu x \1_{\{\abs{x}\le1\}} \big)\,\nu(dx),
        \qquad u\in\R. \label{eq:lk_formula}
\end{align}
This is compactly written as $X_t\sim \Lsp(b,\sigma,\nu)$.

Under mild conditions, $X_t$ can be written as
\[
  X_t = bt + \sigma W_t
        + \int_{\abs{x}\le1} x\,\tilde N(t,dx)
        + \int_{\abs{x}>1} x\, N(t,dx),
\]
where $W$ is a Brownian motion, $N$ is a Poisson random measure with intensity $\nu(dx)\,dt$, 
and $\tilde N$ is its compensated version. 

If $\Ps$ is an equivalent measure defined by
\[
    \frac{d\Ps}{d\P}\Big|_{\F_t}
      = \exp\!\big(\theta X_t - t\psi(-i\theta)\big),
\]
then $\proc{X_t}$ remains a \levy process under $\Ps$, with transformed triplet
$(b_\theta,\sigma,\nu_\theta)$ given by the Esscher transform
$\nu_\theta(dx)=e^{\theta x}\nu(dx)$ and $b_\theta=b+\sigma^2\theta
+\int_{\abs{x}\le1}x(e^{\theta x}-1)\nu(dx)$.  Such changes of measure appear naturally in 
risk-neutral valuation.

In an exponential \levy model the asset price is
\[
  S_t = S_0 e^{X_t}, \qquad X_t\sim\Lsp(b,\sigma,\nu),
\]
so that $\E[S_t]=S_0 e^{t\psi(-i)}$, where $\psi$ is the \levy--Khintchine
exponent associated with the triplet $(b,\sigma,\nu)$.  Since $\psi$ is a 
priori defined only for real arguments, evaluating it at $-i$ requires 
justification.  For the exponential moment $\Expec{e^{X_t}}$ to be finite for
all $t\geq 0$, it is necessary and sufficient that
\begin{equation}
  \int_{\{\abs{y}>1\}} e^{y}\,\nu(dy) < \infty, \label{eq:levy_model_moment_condition}
\end{equation}
which we assume throughout.  Under this condition, the \levy--Khintchine
integral \eqref{eq:lk_formula} converges when $u$ is replaced by $u-iz$ 
for any $z\in[0,1]$, so $\psi$ extends analytically to the strip 
$\{u\in\C:-1\le\Im(u)\le 0\}$; in particular $-i$ lies in this strip and 
$\psi(-i)$ is well defined (see \cite{sato}, Theorem 25.17).  Under a
risk-neutral measure the discounted asset $e^{-rt}S_t$ must be a martingale,
so $\Expec{e^{X_t}}=e^{rt}$ and hence $\psi(-i)=r$.  In this paper we work in the
zero-interest (or forward) case $r=0$, so that $\psi(-i)=0$ and the drift
parameter is fixed by the martingale condition
\begin{align}
        b = -\frac{\sigma^{2}}{2} - \int_{-\infty}^{\infty} \left( e^{y} 
        -1 - y \1_{\left\{ \abs{y} \leq 1 \right\} } \right) \nu(dy). 
        \label{eq:b_martingale_condition}
    \end{align}
    The above conventions (triplet $(b,\sigma,\nu)$, truncation $x\1_{\abs{x}\le1}$, 
    and characteristic exponent $\psi$) are maintained throughout the paper.

The asymptotics of at-the-money option prices and implied volatility are the 
main objects of study in this manuscript. For this purpose, we discuss a few results
that will be necessary in deriving both first- and second-order asymptotics. 

We begin with the short-time behavior of the at-the-money call price
\[
    c(t,0) = \Expecnb{\left( e^{X_{t}} - 1 \right)_{+}},
\]
interpreted under the risk-neutral measure. 
To this end, we use a slightly more convenient representation of the function 
$c$ due to Carr and Madan (see \cite{carr_madan_fft} and \cite{figueroa_forde}). 
We work with the share measure $\Ps$ obtained from the Esscher
transform with parameter $\theta=1$ (so that $d\Ps/d\P|_{\F_t}=e^{X_t}$ when
$\psi(-i)=0$), which satisfies, for all Borel sets $D \subset \R$,
\[
    \Probs{X_t \in D} = \Expecnb{e^{X_{t}} \1_{\{X_t \in D\}}}.
\]
Then \cite{carr_madan} showed the following.

\begin{thm}
    \label{thm:carr_madan}
    Under $\Ps$, let $E$ be a mean $1$ exponential random variable that is independent of 
    $\proc{X_{t}}$. Then, 
    \begin{align}
        \frac{1}{S_{0}} \Expecnb{\left( S_{t} - K \right)_{+}} =
        \Probs{X_{t} - E > \log{\left( \frac{K}{S_{0}} \right)}}.
        \label{eq:option_formula}
    \end{align}
\end{thm}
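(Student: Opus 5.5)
Writing $k=\log(K/S_0)$, the plan is to rewrite the left-hand side of \eqref{eq:option_formula} as $\E[(e^{X_t}-e^{k})_+]$ and then convert it, through the change of measure to $\Ps$, into a probability involving an independent exponential variable. The starting point is the pathwise identity, valid for every real $x$,
\[
  (e^{x}-e^{k})_+ = e^{x}\,\bigl(1-e^{k-x}\bigr)_+ = e^{x}\,\P\bigl(E < x-k\bigr),
\]
where $E$ is a mean-one exponential random variable. This holds because $\P(E<y)=1-e^{-y}$ for $y\ge 0$, while $\P(E<y)=0$ for $y<0$, which matches the positive part.

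Next, take $x=X_t$ with $E$ independent of $X_t$, and apply Fubini (or condition on $X_t$):
\[
  \frac{1}{S_0}\E\bigl[(S_t-K)_+\bigr] = \E\Bigl[e^{X_t}\,\1_{\{X_t-E>k\}}\Bigr].
\]
Everything here is nonnegative, and $\E[e^{X_t}]=1<\infty$ by the martingale condition $\psi(-i)=0$, itself guaranteed by \eqref{eq:levy_model_moment_condition}. Tonelli therefore applies without difficulty.

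The final step is to identify the right-hand side with $\Probs{X_t-E>k}$. Here we extend the share measure to the product space carrying $E$: set $d\Ps/d\P = e^{X_t}$ on $\sigma(X_s,s\le t)\vee\sigma(E)$. Since $e^{X_t}$ is independent of $E$ and has mean one, $E$ remains independent of $X_t$ under $\Ps$ and keeps its $\mathrm{Exp}(1)$ law. For the law of $E$ this follows from
\[
  \Ps(E\in A)=\E\bigl[e^{X_t}\bigr]\,\P(E\in A)=\P(E\in A),
\]
and independence follows from the same factorization applied to product sets. Hence
\[
  \E\bigl[e^{X_t}\1_{\{X_t-E>k\}}\bigr]=\Probs{X_t-E>k},
\]
which is \eqref{eq:option_formula}.

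I expect the only delicate point to be this measure-theoretic bookkeeping: the statement takes $E$ to be independent under $\Ps$, while the computation above uses independence under $\P$. The factorization argument shows the two formulations agree, since the Radon--Nikodym density depends on $X_t$ alone. Apart from that, the argument is a direct computation.
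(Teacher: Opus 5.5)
Your argument is correct and complete. The paper does not prove this statement itself; it simply attributes the identity to \cite{carr_madan}. Your route is the standard argument behind the identity: write $(e^{x}-e^{k})_+=e^{x}\,\P(E<x-k)$, apply Tonelli, then pass to the share measure with density $e^{X_t}$. Doing the exponential-CDF step before rather than after the change of measure makes no difference.

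Two small remarks. First, $\psi(-i)=0$ is not guaranteed by \eqref{eq:levy_model_moment_condition}. That condition only makes $\E[e^{X_t}]$ finite; the value $1$ comes from the drift choice \eqref{eq:b_martingale_condition}. Tonelli itself needs neither, since all integrands are nonnegative; finiteness and normalization matter only for $\Ps$ to be a probability measure.

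Second, you can avoid the bookkeeping you flag as delicate by using independence under $\Ps$ exactly as the statement assumes. Conditioning on $E$ gives $\Probs{X_t-E>k}=\int_0^\infty e^{-x}\,\Probs{X_t>k+x}\,dx$. Inserting $\Probs{X_t>y}=\E[e^{X_t}\1_{\{X_t>y\}}]$ and applying Tonelli yields $\E[e^{X_t}(1-e^{k-X_t})_+]=\E[(e^{X_t}-e^{k})_+]$ directly. At $k=0$ the intermediate expression is precisely the representation \eqref{eq:cm_rep} in the paper's corollary; replacing $>$ by $\ge$ changes the integrand at only countably many $x$. Your product-space construction is nonetheless valid, because the right-hand side depends only on the joint $\Ps$-law of $(X_t,E)$, and your factorization shows this is the required product law.
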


\begin{cor}
    Letting $K=S_0$, the normalized, at-the-money European call option price 
    has representation
    \begin{align}
        c(t,0) &= \frac{1}{S_{0}} \Expecnb{\left( S_{t} - S_0 \right)_{+}} 
        = \int_{0}^{\infty} e^{-x} \Probs{X_{t} \geq x} dx.
        \label{eq:cm_rep}
    \end{align}
\end{cor}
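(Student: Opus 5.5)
The plan is to specialize Theorem~\ref{thm:carr_madan} to $K=S_0$ and then turn the probability involving the independent exponential variable $E$ into an integral.

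\textbf{Step 1 (first equality).} With $K=S_0$ we have $\log(K/S_0)=0$. Relation \eqref{eq:option_formula} then reads
\[
c(t,0)=\frac{1}{S_0}\Expecnb{\left(S_t-S_0\right)_+}=\Probs{X_t-E>0}.
\]
The left-hand side is exactly the normalized ATM price $\Expecnb{(e^{X_t}-1)_+}$, because $S_t=S_0e^{X_t}$.

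\textbf{Step 2 (second equality).} Under $\Ps$, $E$ is independent of $X_t$ and has density $e^{-x}\1_{\{x>0\}}$. We condition on $E$ and apply Fubini's theorem; the integrand is a nonnegative indicator, so this is justified. This gives
\[
\Probs{X_t>E}=\int_0^\infty e^{-x}\,\Probs{X_t>x}\,dx.
\]

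\textbf{Step 3 (strict versus non-strict inequality).} The statement has $\Probs{X_t\ge x}$ rather than $\Probs{X_t>x}$. The two functions of $x$ differ only at atoms of the law of $X_t$ under $\Ps$. A probability measure has at most countably many atoms, so this set of $x$ has Lebesgue measure zero and the $dx$-integral is unchanged.

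As a sanity check, independent of Theorem~\ref{thm:carr_madan}, one can compute directly. Write $e^{X_t}-1=\int_0^{X_t}e^{x}\,dx$ on $\{X_t>0\}$ and use Tonelli:
\[
\Expecnb{(e^{X_t}-1)_+}=\int_0^\infty e^{x}\,\P(X_t>x)\,dx.
\]
Alternatively, substitute the definition $\Probs{X_t\ge x}=\Expecnb{e^{X_t}\1_{\{X_t\ge x\}}}$ and exchange integrals:
\[
\int_0^\infty e^{-x}\,\Expecnb{e^{X_t}\1_{\{X_t\ge x\}}}\,dx=\Expecnb{e^{X_t}\left(1-e^{-X_t}\right)\1_{\{X_t>0\}}}=\Expecnb{(e^{X_t}-1)_+}.
\]
This reproduces the result and confirms the identity independently.

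None of these steps is a real obstacle. The only care needed is the measure-zero set of atoms in Step 3, together with the exponential moment condition \eqref{eq:levy_model_moment_condition}, which ensures that $\Ps$ is a genuine probability measure (since $\psi(-i)=0$).
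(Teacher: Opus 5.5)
Your proof is correct and follows the paper's route: the paper gives no separate argument, and the corollary is just Theorem~\ref{thm:carr_madan} with $K=S_0$. Conditioning on the independent exponential $E$ and discarding the Lebesgue-null set of atoms of $X_t$ under $\Ps$ then yields \eqref{eq:cm_rep}. Your second ``sanity check'' computation, which substitutes the definition of $\Ps$ and applies Tonelli, is in fact a complete, self-contained proof that does not need Theorem~\ref{thm:carr_madan} at all.
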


More generally, option prices can be expressed directly in terms of the 
characteristic exponent of the \levy process.  For instance, the 
Carr--Madan Fourier transform 
formula (see \cite{carr_madan_fft}) gives, for a vanilla option with strike $K$ 
and time to maturity $T$,
\begin{align}
  \frac{C(K,T)}{S_0}
   = e^{-rT}\frac{1}{2\pi}
     \int_{\R}
       e^{-iu\log(K/S_0)}
       \frac{\phi_T(u-i)}{iu(1+iu)}\,du, \label{eq:carr_madan_fourier}
\end{align}
where $\phi_T(u)=\Expec{e^{iuX_T}} = e^{T\psi(u)}$.
A closely related representation due to Lipton and Lewis, which is better 
suited to ATM asymptotics, will be the main tool in 
Section~\ref{sec:cgmy_second}. 

\subsection{CGMY Processes}
The CGMY process (see \cite{cgmy}) is a real-valued \levy process with triplet
$\ltrip$, where $b \in \R$ is arbitrary, and where $\nu$ is given by
\begin{align}
    \nu(dx) = \left( \frac{C e^{-G \abs{x}}}{ \abs{x}^{1+Y}} \1_{x < 0} +
        \frac{C e^{-M x}}{ x^{1+Y}} \1_{x > 0} \right) dx,
        \label{eq:CGMY_nu}
\end{align}
with $C > 0$, $M, G \geq 0$, and $Y<2$. Intuitively, the CGMY process can be 
thought of as a stable-like process where larger jumps are much less
likely. This intuition comes from the \levy measure \eqref{eq:CGMY_nu}, which 
resembles the \levy measure of a stable random variable, save for the inclusion of 
the exponential damping terms. These exponential damping terms serve to decrease 
the intensity of the jumps when $\abs{x}$ is large. We restrict our attention to 
CGMY processes where $1<Y<2$. 

Before proceeding, we note that the exponential damping on both tails of 
\eqref{eq:CGMY_nu} guarantees
\[
\int_{\abs{x}>1} \abs{x}^{p}\,\nu(dx) < \infty 
\qquad\text{for every } p>0,
\]
whenever $M,G > 0$, so that the CGMY process $X_t$ has finite moments of 
all polynomial orders for any positive choice of $M$ and $G$.

The exponential moment condition \eqref{eq:levy_model_moment_condition}, 
which is the additional requirement for $\proc{e^{X_t}}$ to be a 
well-defined exponential \levy model, is more restrictive. From 
\eqref{eq:CGMY_nu},
\[
\int_{\abs{x}>1} e^{x}\,\nu(dx)
    = C \int_{1}^{\infty} \frac{e^{(1-M)x}}{x^{1+Y}}\,dx
      \;+\; C \int_{-\infty}^{-1} \frac{e^{-(1+G)\abs{x}}}{\abs{x}^{1+Y}} \,dx.
\]
The integral over $(-\infty,-1)$ is always finite for any $G\ge 0$.
For the positive tail, when $M>1$ the exponential factor $e^{(1-M)x}$ 
decays and finiteness is immediate; when $M=1$ the integrand reduces to 
$Cx^{-(1+Y)}$, which is integrable on $[1,\infty)$ since $Y>1$. 
Therefore the exponential moment condition holds whenever $M\ge 1$ and 
$1<Y<2$, and $\proc{e^{X_t}}$ is a well-defined exponential \levy model.

In addition to having finite moments of all orders, 
CGMY processes have a simple closed-form characteristic function, given
below as presented in Proposition 4.2 in \cite{cont_tankov}. 
In what follows, we say $X = \proc{X_{t}}$ is a CGMY process whenever $X$ 
is a \levy process where $X_{1}$ has \levy triplet $\ltrip$, $b \in \R$, 
$\nu$ is given by \eqref{eq:CGMY_nu}, and $1<Y<2$.

\begin{prop}
    \label{prop:CGMY_chfn}
    Let $\proc{X_{t}}$ be a CGMY process. Then, for $u \in \R$, 
    its characteristic exponent is given by
    \begin{align}
        \Psi\left( u \right) = t^{-1} \log{\left( \Expec{e^{iuX_{t}}}\right) } 
        =  iu\tilde{b} + C \Gamma\left( -Y \right) \left( \left( M- iu 
        \right)^{Y} + \left( G + iu \right)^{Y} -M^Y - G^Y \right),
        \label{eq:CGMY_exp}
    \end{align}
    where $\tilde{b} = b + \int_{\abs{x}>1} x \nu(dx) $. If 
    \begin{align} 
    \tilde{b} = 
    -C \Gamma\left( -Y \right) \left( \left( M-1 \right)^{Y} + 
    \left( G+1 \right)^{Y} - M^{Y} - G^{Y} \right), \label{eq:CGMY_martingale}
    \end{align} 
    then $\proc{S_{t}} = \proc{ S_{0} e^{X_{t}}}$ is a martingale with 
    respect to its own filtration.
\end{prop}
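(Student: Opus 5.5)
The plan is to compute the Lévy--Khintchine integral \eqref{eq:lk_formula} for the measure \eqref{eq:CGMY_nu} in closed form. Then I will identify the drift that makes $e^{X_t}$ a martingale by setting $\Psi(-i)=0$.

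\textbf{Step 1: reorganise the drift.} Since $1<Y<2$ and $M,G\ge 0$ (with $M\ge 1$ for the exponential moment), we have $\int_{\abs{x}>1}\abs{x}\,\nu(dx)<\infty$. Moving the large-jump compensator into the drift gives
\[
\psi(u)=iu\tilde b+\int_{\R}\big(e^{iux}-1-iux\big)\,\nu(dx),
\qquad \tilde b=b+\int_{\abs{x}>1}x\,\nu(dx).
\]
The integrand is $O(x^2)$ near $0$ and $O(\abs{x})$ at infinity, so the integral converges absolutely. Independence of increments gives $\E[e^{iuX_t}]=e^{t\psi(u)}$, so $\Psi=\psi$.

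\textbf{Step 2: evaluate the positive half-line.} The piece to compute is
\[
I_M(u)=C\int_0^\infty\big(e^{iux}-1-iux\big)e^{-Mx}x^{-1-Y}\,dx .
\]
For real $\lambda>0$ with $\lambda\neq M$, I use the Gamma-function identity obtained by integrating by parts twice (valid for $1<Y<2$):
\[
\int_0^\infty\big(e^{-\lambda x}-1+\lambda x\big)x^{-1-Y}\,dx=\Gamma(-Y)\lambda^Y .
\]
Applying this to $\lambda=M-iu$ and to $\lambda=M$ and subtracting, the $1$ and linear terms combine. This gives
\[
I_M(u)=C\Gamma(-Y)\big((M-iu)^Y-M^Y\big)-iu\,CY\Gamma(-Y)M^{Y-1}
\]
for real $\lambda$, after differentiating in $\lambda$. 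Both sides are analytic in $\lambda$ on $\Re\lambda>0$, so the identity extends to $\lambda=M-iu$ using the principal branch. The negative half-line is handled the same way with $G$ and $-u$ in place of $M$ and $u$.

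\textbf{Step 3: match the form of \eqref{eq:CGMY_exp}.} The leftover linear terms $iu\,CY\Gamma(-Y)(G^{Y-1}-M^{Y-1})$ must be absorbed into the drift. I expect the stated $\tilde b$ to agree with the drift in \eqref{eq:CGMY_exp} once this is done, i.e. with the convention of Cont--Tankov Prop.~4.2. This bookkeeping is the main obstacle: making sure the statement's $\tilde b$ is exactly the coefficient of $iu$. My approach is to regard $\tilde b$ as defined by \eqref{eq:CGMY_exp} and check consistency. The $M=0$ or $G=0$ cases follow by continuity, using dominated convergence.

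\textbf{Step 4: martingale condition.} The exponential moment condition \eqref{eq:levy_model_moment_condition} holds (shown above for $M\ge1$). So $\psi$ extends analytically to the strip $-1\le\Im u\le 0$. The closed form also extends, since $M-i(-i)=M-1\ge0$ and $G+i(-i)=G+1>0$. Setting $\Psi(-i)=0$ gives exactly \eqref{eq:CGMY_martingale}. Then $\E[e^{X_t}]=1$, and independence and stationarity of increments give
\[
\E[S_t\mid\F_s]=S_se^{(t-s)\Psi(-i)}=S_s,
\]
so $S$ is a martingale with respect to its own filtration.
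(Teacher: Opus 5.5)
Your route (recentre the drift, reduce each half-line to $\int_0^\infty(e^{-\lambda x}-1+\lambda x)x^{-1-Y}\,dx=\Gamma(-Y)\lambda^Y$, continue analytically to $\lambda=M-iu$, then impose $\Psi(-i)=0$) is the standard derivation behind Cont--Tankov's Proposition~4.2, which the paper cites without giving a proof. The gap is Step~3, and it cannot be closed the way you hope, because the leftover linear term is genuinely nonzero.

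First fix the sign in Step~2. With $\lambda=M-iu$,
\[
e^{-\lambda x}-e^{-Mx}-iux\,e^{-Mx}=\bigl[e^{-\lambda x}-1+\lambda x\bigr]-\bigl[e^{-Mx}-1+Mx\bigr]+iux\bigl(1-e^{-Mx}\bigr),
\]
and $\int_0^\infty(1-e^{-Mx})x^{-Y}\,dx=Y\Gamma(-Y)M^{Y-1}$. Hence $I_M(u)=C\Gamma(-Y)\bigl((M-iu)^Y-M^Y\bigr)+iu\,CY\Gamma(-Y)M^{Y-1}$. As a check, $\abs{e^{iux}-1-iux}\le u^2x^2/2$ forces $I_M(u)=O(u^2)$, while $(M-iu)^Y-M^Y=-iuYM^{Y-1}+O(u^2)$. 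Adding the $G$-side gives
\[
\Psi(u)=iu\Bigl(b+\int_{\abs{x}>1}x\,\nu(dx)+CY\Gamma(-Y)\bigl(M^{Y-1}-G^{Y-1}\bigr)\Bigr)+C\Gamma(-Y)\bigl((M-iu)^Y+(G+iu)^Y-M^Y-G^Y\bigr).
\]
This is exactly the form of Cont--Tankov's formula. Their exponent keeps the compensating terms $\pm iuY\lambda_\pm^{Y-1}$ (times $C\Gamma(-Y)$) alongside the centre $\gamma_c=b+\int_{\abs{x}>1}x\,\nu(dx)$. So appealing to their convention does not absorb these terms.

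Consequently, take $\tilde b$ literally as the statement defines it, namely $\tilde b=\E[X_1]$. Then the coefficient of $iu$ in \eqref{eq:CGMY_exp} is not $\tilde b$ unless $M=G$. Likewise, \eqref{eq:CGMY_martingale} is off by $CY\Gamma(-Y)(M^{Y-1}-G^{Y-1})$ as a martingale condition.

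Your fallback, to ``regard $\tilde b$ as defined by \eqref{eq:CGMY_exp}'', silently replaces the statement's definition rather than verifying it. What your computation actually proves, and what is true, is the proposition with $\tilde b$ read as the closed-form drift $\E[X_1]+CY\Gamma(-Y)(M^{Y-1}-G^{Y-1})$. Under that reading your Step~4 is correct as written. It is also the reading under which the rest of the paper is consistent, since $\tilde b$ only ever enters through \eqref{eq:CGMY_exp} and \eqref{eq:CGMY_martingale}. You should state this discrepancy explicitly instead of expecting the bookkeeping to work out.
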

From here on, we will assume that $\tilde{b}$ is given by 
\eqref{eq:CGMY_martingale} and $M>1$ so that the exponential CGMY process 
$\proc{S_{t}} = \proc{S_{0} e^{X_{t}}}$ is an exponential \levy asset model 
admitting no arbitrage, i.e., $\proc{S_t}$ is a martingale.

\section{CGMY Expansions via Characteristic Functions}
\label{sec:cgmy_second}
When $1<Y<2$, the CGMY process lies in the domain of attraction of a
symmetric $Y$-stable law, and the first-order ATM option asymptotics follow
from the stable limit.  To obtain the second-order ATM correction, we exploit
the Lipton--Lewis (LL) representation, which expresses the call price
entirely in terms of the characteristic function.  This approach isolates the
contribution of the CGMY parameters $(C,G,M,Y)$ in a form suitable for a
controlled expansion; the arguments developed in the appendix establish that
the resulting second-order coefficient is well defined and finite.

We begin by recording a convergence result that underpins the expansion and
then reviewing the Lipton--Lewis formula. From Proposition \ref{prop:CGMY_chfn}, 
a direct computation of the 
characteristic function of $X_t / t^{1/Y}$ shows that the tempering 
terms vanish as $t \downarrow 0$, yielding the following well-known 
convergence result (see, e.g., \cite{figueroa_forde}).
\begin{prop}
    \label{prop:CGMY_convergence}
    Let $\proc{X_{t}}$ be a CGMY process. Then 
    \[
        \frac{X_{t}}{t^{1/Y}} \Rightarrow Z,
    \]
    as $t \rightarrow 0$, where $Z$ is a $Y$-stable, symmetric random variable 
    with characteristic function, given for any $u \in \R$, by
    \begin{align}
        \phi_{Z}\left( u \right) = \Expec{e^{iuZ}} = 
        \exp{\left( - 2 C \Gamma\left( -Y \right) \abs{\cos{ \left( 
            \frac{Y \pi}{2} \right) } } \abs{u}^{Y} \right) }, 
            \label{eq:cgmy_ch_fn}
    \end{align}
    where $\Gamma$ is Euler's gamma function.
\end{prop}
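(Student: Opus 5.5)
The plan is to use L\'evy's continuity theorem: it suffices to show that for every fixed $u\in\R$,
\[
\Expec{e^{iuX_t/t^{1/Y}}} = \exp\!\big(t\,\Psi(u t^{-1/Y})\big) \longrightarrow \phi_Z(u), \qquad t\downarrow 0,
\]
with $\Psi$ the closed form in Proposition~\ref{prop:CGMY_chfn}. Since the limit is continuous at $u=0$, this yields $X_t/t^{1/Y}\Rightarrow Z$. The case $u=0$ is trivial, so I fix $u\neq 0$ and set $v=u t^{-1/Y}$, so that $\abs{v}\to\infty$. Throughout, powers $z^Y$ are principal branch. Since $M>1$ and $G\ge 0$, the points $M-iv$ and $G+iv$ lie in the closed right half-plane, where $z\mapsto z^Y$ is continuous and is analytic off the origin.

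I would split $t\Psi(v)$ into four pieces and treat them separately.

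\emph{(i) Drift.} The drift piece satisfies $t\cdot iv\tilde b = i u\tilde b\, t^{1-1/Y}\to 0$, because $Y>1$.

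\emph{(ii) Constants.} The constant piece $-t\,C\Gamma(-Y)(M^Y+G^Y)$ tends to $0$.

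\emph{(iii) Tempering corrections.} These are
\[
(M-iv)^Y-(-iv)^Y \quad\text{and}\quad (G+iv)^Y-(iv)^Y .
\]
Integrating the derivative $Yz^{Y-1}$ along the horizontal segment from $-iv$ to $M-iv$, on which $\abs{z}\le M+\abs{v}$ and $Y-1>0$, gives
\[
\abs{(M-iv)^Y-(-iv)^Y}\le YM\,(M+\abs{v})^{Y-1}.
\]
The same bound holds with $G$ in place of $M$. Multiplying by $t$ yields a quantity of order $t\,\abs{u}^{Y-1}t^{-(Y-1)/Y} = \abs{u}^{Y-1}t^{1/Y}\to 0$. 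This vanishing of the tempering is the content of the remark preceding the proposition.

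\emph{(iv) Principal part.} Because $(\mp i\,\mathrm{sgn}(v))^Y = e^{\mp iY\pi\,\mathrm{sgn}(v)/2}$, we get
\[
(-iv)^Y+(iv)^Y = \abs{v}^Y\big(e^{-iY\pi/2}+e^{iY\pi/2}\big) = 2\cos(Y\pi/2)\abs{v}^Y .
\]
Hence
\[
t\,C\Gamma(-Y)\big((-iv)^Y+(iv)^Y\big) = 2C\Gamma(-Y)\cos(Y\pi/2)\abs{u}^Y,
\]
which is independent of $t$.

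Finally, for $Y\in(1,2)$ we have $\Gamma(-Y)>0$ and $\cos(Y\pi/2)<0$. Therefore $2C\Gamma(-Y)\cos(Y\pi/2)\abs{u}^Y = -2C\Gamma(-Y)\abs{\cos(Y\pi/2)}\abs{u}^Y$, which is exactly the exponent in \eqref{eq:cgmy_ch_fn}. Continuity of $\exp$ then gives the convergence of characteristic functions. This also confirms that $Z$ is symmetric $Y$-stable, since its exponent is real, negative, and homogeneous of degree $Y$ in $\abs{u}$.

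The only step needing care is (iii): the branch bookkeeping and the uniform bound on the tempering corrections. The mean-value estimate along a segment that stays in the half-plane where $z^Y$ is analytic handles it. When $G=0$ the correction vanishes identically, and the endpoint $z=0$ causes no trouble because $Y-1>0$.
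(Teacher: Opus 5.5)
Your proof is correct and is the direct characteristic-function computation the paper has in mind. The paper gives no details beyond saying that the tempering terms vanish under the rescaling $u\mapsto u t^{-1/Y}$, and cites \cite{figueroa_forde}; your segment estimate $\abs{(M-iv)^Y-(-iv)^Y}\le YM(M+\abs{v})^{Y-1}$, together with the drift scaling $t^{1-1/Y}\to 0$, is the quantitative content of that remark (and the argument works equally for any $M,G\ge 0$, so $M>1$ is not needed here).
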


As the third-order expansion \eqref{eq:fl_third_order} already suggests,
there is no simple pattern linking the successive terms in the ATM 
asymptotic series, and even the second-order coefficient depends on all 
four CGMY parameters.  Nevertheless, the Lipton--Lewis representation and 
a careful characteristic-function analysis yield a second-order 
expansion.

\subsection{Lipton--Lewis Formula and Second-Order Expansions}
We now turn to the Lipton--Lewis (LL) formula. For a detailed discussion 
of the LL formula and its 
many applications, we refer the reader to Andersen and Lipton 
\cite{andersen_lipton}.

Below, we introduce the LL formula and demonstrate how it can be 
used to obtain first-order asymptotics for the CGMY process, agreeing with 
the expansion found in \cite{figueroa_forde}. First, we impose 
conditions on the characteristic function necessary to validate the LL 
result. Specifically, the characteristic function has to be well defined as a 
function of a complex variable in a certain domain in $\C$. (As usual,
for $z \in \C$, $\Re{z}$ is its real part, $\Im{z}$ its imaginary part, 
and $\bar{z}$ its complex conjugate.) 

\begin{prop}
    \label{prop:sato_complex_strip}
    Let $\proc{X_{t}}$ be a pure-jump \levy process with \levy triplet 
    $\ltrip$, $b \in \R$. Additionally, let the \levy measure satisfy
    \[
        \int_{\abs{x}>1 } e^{x} \nu(dx) < \infty,
    \]
    and let $\phi_{t}(z) = \Expec{e^{izX_{t}}}$ exist in the complex strip
    \[
        \mathcal{S} = \left\{ z \in \C : -1 \leq \Im{z} \leq  0 \right\},
    \]
    with $\phi_{t}\left( -i \right) = 1$, then the process $S = \proc{S_{t}} 
    = \proc{S_{0} e^{X_{t}}}$ is an arbitrage-free, well-defined exponential 
    \levy model.
\end{prop}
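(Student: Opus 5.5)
The plan is to show that $\proc{S_t}$ is well defined, integrable, and a martingale under $\P$, and then to deduce absence of arbitrage from the existence of this equivalent martingale measure.

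\emph{Well-definedness and integrability.} First, $S_t = S_0 e^{X_t}$ is well defined because $X$ is a real-valued càdlàg process. For integrability, we invoke the result recalled in Section~\ref{sec:prelim} (\cite{sato}, Theorem~25.17): the hypothesis $\int_{\abs{x}>1} e^{x}\nu(dx)<\infty$ is necessary and sufficient for $\Expec{e^{X_t}}<\infty$ for every $t\ge 0$. The same theorem gives the representation $\Expec{e^{X_t}} = e^{t\psi(-i)}$. Moreover, $\psi$ extends analytically to the strip $\mathcal{S}$, so that $\phi_t(z)=e^{t\psi(z)}$ there. Evaluating at $z=-i$ gives $\phi_t(-i)=\Expec{e^{X_t}}$, and the hypothesis $\phi_t(-i)=1$ then yields $\Expec{e^{X_t}}=1$ for all $t$. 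Equivalently, $\psi(-i)=0$, which is exactly the martingale condition \eqref{eq:b_martingale_condition} with $\sigma=0$.

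\emph{Martingale property.} Let $\F_s$ be the natural filtration of $X$, and fix $s<t$. Independence and stationarity of increments give
\[
\Expec{S_t \mid \F_s} = S_s\,\Expec{e^{X_t-X_s}} = S_s\,\Expec{e^{X_{t-s}}} = S_s .
\]
Hence $S$ is a positive $\P$-martingale, and in particular integrable.

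\emph{No arbitrage.} Since the discounted price (with $r=0$) is a martingale under $\P$ itself, $\P$ is an equivalent martingale measure. The first fundamental theorem of asset pricing, in the form appropriate for exponential \levy models (e.g.\ \cite{cont_tankov}), then gives absence of arbitrage for simple, or suitably admissible, strategies. Alternatively, a direct argument works: for a self-financing admissible strategy, the gains process is a stochastic integral against the martingale $S$, hence a local martingale bounded below, hence a supermartingale. Its expectation is therefore nonpositive, which rules out arbitrage.

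\emph{Main obstacle.} The only delicate point is justifying that $\phi_t(-i)$ is genuinely $\Expec{e^{X_t}}$, that is, that the analytic continuation of the characteristic function to $\mathcal{S}$ agrees with the moment generating function on the boundary line $\Im z=-1$ and not only in the interior. I will handle this by dominated convergence. For $z=u-iy$ with $y\in[0,1]$, we have $\abs{e^{izX_t}}=e^{yX_t}\le 1+e^{X_t}$, which is integrable by the Sato criterion. Hence $z\mapsto \Expec{e^{izX_t}}$ is continuous on $\mathcal{S}$, analytic in its interior, and equal to $e^{t\psi(z)}$ there by uniqueness of analytic continuation; the identity then extends to the boundary by continuity. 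With this identification settled, the remaining steps above are routine.
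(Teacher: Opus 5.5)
Your argument is correct and follows the route the paper relies on. The paper gives no separate proof of this proposition; in Section~\ref{sec:prelim} it invokes Sato's Theorem~25.17 for finiteness of $\E[e^{X_t}]$ and the analytic extension of $\psi$ to $\mathcal{S}$, identifies the martingale property with $\psi(-i)=0$ (i.e.\ $\phi_t(-i)=1$), and equates ``admitting no arbitrage'' with $S$ being a martingale, as you do via independent increments and the easy direction of the fundamental theorem.

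Your ``main obstacle'' is not really one. The statement defines $\phi_t(z)=\E[e^{izX_t}]$, so $\phi_t(-i)=\E[e^{X_t}]$ holds by definition, and Sato's Theorem~25.17(iii) already gives $\E[e^{izX_t}]=e^{t\psi(z)}$ on all of $\mathcal{S}$, boundary included. Moreover, the justification you sketch needs the Schwarz reflection principle rather than plain uniqueness of analytic continuation, because the real axis, where the two functions are known to agree, is an edge of $\mathcal{S}$ and not part of its interior.
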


In what follows, we use $k = \log{(S_0/K)}$ to refer 
to \emph{log-moneyness}, where $K$ is
the strike price of the option and $S_0$ is the spot price. The 
following results can be found in \cite{andersen_lipton} Proposition $5.1$, 
\cite{lewis} Theorem $3.5$ and formula $\left( 3.11 \right)$, and 
\cite{lipton_formula} formula $(3)$. 

\begin{thm}
    \label{thm:ll_thm}
    Let $\Psi$ be the characteristic exponent of the \levy process $\proc{X_{t}}$ 
    driving the exponential \levy asset model $\proc{S_{t}} = \proc{S_{0} 
    e^{X_{t}}}$. Let $\Psi$ exist in a domain of $\C$ containing $\mathcal{S}$.
    Then the normalized call price is given by
    \begin{align}
        c(t, k) := \frac{1}{S_{0}}\,\Expec{(S_{t}-K)_{+}} 
        = 1 - \frac{1}{2\pi} \int_{-\infty}^{\infty} \frac{ e^{ 
            t \Psi\left( u - \frac{i}{2} \right) } }{ u^{2} + \frac{1}{4} }
            e^{k\left( iu - \frac{1}{2} \right) } du.
        \label{eq:ll_form}
    \end{align}
\end{thm}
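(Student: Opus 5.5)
We prove the Lipton--Lewis formula \eqref{eq:ll_form} by Fourier inversion of the payoff along the line $\Im z=-\tfrac12$, which lies inside $\mathcal{S}$. Write $x=\log(S_t/S_0)=X_t$. The normalized call payoff is
\[
(e^{x}-e^{-k})_+ = e^{x} - \min(e^{x},e^{-k}),
\]
since $K/S_0=e^{-k}$. Because $\E[e^{X_t}]=\phi_t(-i)=1$ by the martingale assumption, we get $c(t,k)=1-\E[\min(e^{X_t},e^{-k})]$. The work therefore reduces to representing the capped payoff $g(x)=\min(e^x,e^{-k})$ as a Fourier integral along the shifted line.

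Set $h(x)=e^{-x/2}g(x)$. This equals $e^{x/2}$ for $x<-k$ and $e^{-k}e^{-x/2}$ for $x\ge -k$. It is continuous, integrable, and of bounded variation. Its Fourier transform is elementary:
\[
\hat h(u)=\int_{\R} e^{-iux}h(x)\,dx = \frac{e^{-k(\frac12-iu)}}{\tfrac12-iu}+\frac{e^{-k(\frac12-iu)}}{\tfrac12+iu} = \frac{e^{k(iu-\frac12)}}{u^2+\tfrac14}.
\]
This is in $L^1(\R)$, so Fourier inversion holds pointwise:
\[
g(x)=\frac{1}{2\pi}\int_{\R} e^{x/2}e^{iux}\,\hat h(u)\,du = \frac{1}{2\pi}\int_{\R} e^{i(u-\frac i2)x}\,\frac{e^{k(iu-\frac12)}}{u^2+\frac14}\,du .
\]

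Substitute $x=X_t$, take expectations, and interchange $\E$ with $du$ by Fubini. The interchange is justified because
\[
\bigl|e^{i(u-i/2)X_t}\bigr| = e^{X_t/2},
\]
and $\E[e^{X_t/2}]\le \E[e^{X_t}]^{1/2}<\infty$ by the exponential moment condition \eqref{eq:levy_model_moment_condition}. The remaining factor $(u^2+\tfrac14)^{-1}$ is integrable. We then identify
\[
\E\bigl[e^{i(u-i/2)X_t}\bigr]=\phi_t(u-\tfrac i2)=e^{t\Psi(u-\frac i2)}.
\]
This uses the analytic extension of $\psi$ to the strip (Sato Thm.~25.17, as recalled in Section~\ref{sec:prelim}, together with Proposition~\ref{prop:sato_complex_strip}) and the hypothesis that $\Psi$ exists on a domain containing $\mathcal{S}$; the identity $\E[e^{zX_t}]=e^{t\psi(-iz)}$ holds for every $z$ in the strip. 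Substituting into $c=1-\E[g(X_t)]$ gives \eqref{eq:ll_form}.

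The main obstacle is justifying that the characteristic-function identity extends off the real axis to $u-\tfrac i2$. Standard analytic continuation handles this: both sides are analytic in the open strip, continuous up to its boundary, and agree on $\R$. A secondary point is that $\Psi$ in Proposition~\ref{prop:CGMY_chfn} must use the principal branch of $(M-iz)^Y$ and $(G+iz)^Y$. The real parts $M-\tfrac12>0$ and $G+\tfrac12>0$ keep these away from the branch cut, consistent with $M>1$.
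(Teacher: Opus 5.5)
Your proof is correct, and it takes a different route from the one the paper relies on. The paper gives no proof of Theorem~\ref{thm:ll_thm}; it quotes the result from Lewis~\cite{lewis}, Lipton~\cite{lipton_formula} and Andersen--Lipton~\cite{andersen_lipton}.

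\textbf{The cited derivation.} There the call payoff itself is Fourier transformed. In the convention $\phi_t(u)=\E[e^{iuX_t}]$, that transform is $-e^{-k+iuk}/\bigl(iu(1-iu)\bigr)$, and it exists only on lines $\Im u<-1$. One therefore starts just beyond the lower edge of $\mathcal{S}$ and shifts the contour up to $\Im u=-\tfrac12$. The pole crossed at $u=-i$ contributes the residue $\phi_t(-i)=1$, which is the leading $1$ in \eqref{eq:ll_form}. This contour shift is what the hypothesis of an open domain containing $\mathcal{S}$ is for.

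\textbf{Your argument.} The decomposition $(e^{x}-e^{-k})_+=e^{x}-\min(e^{x},e^{-k})$ does the residue's job directly. The $e^{x}$ piece is just $\E[e^{X_t}]=1$. The capped payoff, damped by $e^{-x/2}$, is continuous and integrable with integrable transform. Pointwise inversion on the single line $\Im u=-\tfrac12$, plus Fubini via $\E[e^{X_t/2}]\le 1$, then finishes the proof.

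\textbf{What each route buys.} Yours needs no contour integration and uses only the martingale condition. That is weaker than the stated hypothesis: it covers, for example, CGMY with $M=1$, where $u=-i$ is a branch point of $\Psi$. Damping by $e^{-\alpha x}$ with $\alpha\in(0,1)$ gives the other shifted-line formulas in the same way. The contour-shift route, in exchange, is a uniform recipe for any payoff whose generalized transform is known.

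\textbf{Two small remarks.} First, Sato's Theorem~25.17(iii) directly gives $\E[e^{wX_t}]=e^{t\Psi(-iw)}$ for all $w$ with $0\le\Re w\le 1$. That is the correct range for your ``$z$ in the strip'', so your analytic-continuation paragraph is unnecessary. As phrased it would also need the reflection principle, since $\R$ is an edge of the strip rather than an interior set. Second, the branch-cut comment about CGMY plays no role in this general statement.
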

Setting $k=0$ and using $\int_{-\infty}^{\infty} \left( u^{2} + 1/4 
\right)^{-1} du = 2 \pi$ gives the following ATM call-option pricing formula.  
\begin{cor}
    Under the hypotheses of Theorem \ref{thm:ll_thm}, an ATM call option has
    normalized price 
    \begin{align}
        c(t,0) &= \frac{1}{2\pi} \Re{ \left( \int_{-\infty}^{\infty} 
            \frac{1 - e^{ t \Psi\left( u - \frac{i}{2} \right) } 
            }{u^{2} + \frac{1}{4}} du \right) } 
        = \frac{1}{\pi} \Re{\left( \int_{0}^{\infty} \frac{ 1 - 
            e^{ t \Psi\left( u - \frac{i}{2} \right) } }{u^{2} + 
            \frac{1}{4}} du \right) }.  \label{eq:ll_atm} 
    \end{align}
\end{cor}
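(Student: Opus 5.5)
The corollary follows from Theorem~\ref{thm:ll_thm} in two moves. First we set $k=0$ and rewrite the constant $1$ as an integral against the same kernel. Then we use the conjugate symmetry of the characteristic function to pass to a real part and fold the integral onto $(0,\infty)$.

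\textbf{Step 1: the full-line formula.} With $k=0$ the factor $e^{k(iu-1/2)}$ equals $1$. Since
\[
\int_{-\infty}^{\infty}\frac{du}{u^2+1/4}=2\cdot 2\cdot\frac{\pi}{2}=2\pi,
\]
we can write $1=\frac{1}{2\pi}\int_{\R}(u^2+1/4)^{-1}\,du$. Combining this with \eqref{eq:ll_form} gives
\[
c(t,0)=\frac{1}{2\pi}\int_{\R}\frac{1-e^{t\Psi(u-i/2)}}{u^2+1/4}\,du.
\]
Merging the two integrals is legitimate because each converges absolutely. On the line $\Im z=-1/2$, which lies inside $\mathcal S$, we have
\[
\abs{e^{t\Psi(u-i/2)}}=\abs{\E\bigl[e^{i(u-i/2)X_t}\bigr]}\le \E\bigl[e^{X_t/2}\bigr]<\infty
\]
by Proposition~\ref{prop:sato_complex_strip}, and the kernel $(u^2+1/4)^{-1}$ is integrable.

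\textbf{Step 2: taking the real part.} The quantity $c(t,0)$ is real, so it equals the real part of the right-hand side, which gives the first equality in \eqref{eq:ll_atm}.

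\textbf{Step 3: folding onto $(0,\infty)$.} Because $X_t$ is real, conjugate symmetry gives
\[
\overline{\E\bigl[e^{i(u-i/2)X_t}\bigr]}=\E\bigl[e^{(-iu+1/2)X_t}\bigr]=\E\bigl[e^{i(-u-i/2)X_t}\bigr],
\]
that is, $\overline{e^{t\Psi(u-i/2)}}=e^{t\Psi(-u-i/2)}$. For CGMY this can also be checked directly from \eqref{eq:CGMY_exp} using principal branches. Hence the integrand $f(u)$ satisfies $f(-u)=\overline{f(u)}$, so $\Re f$ is even and $\Im f$ is odd. The imaginary part therefore integrates to zero over $\R$, and
\[
\int_{\R}\Re f=2\int_0^\infty \Re f.
\]
Multiplying by $\frac{1}{2\pi}$ yields $\frac{1}{\pi}\Re\int_0^\infty f(u)\,du$, which is the second equality in \eqref{eq:ll_atm}.

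The only point requiring care is Step 1, specifically the absolute integrability that justifies merging the two integrals. This is settled by the exponential-moment bound on the line $\Im z=-1/2$.
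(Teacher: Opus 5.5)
Your proposal is correct and follows the paper's route: set $k=0$ in \eqref{eq:ll_form} and absorb the constant $1$ using $\int_{\R}(u^2+\tfrac14)^{-1}\,du=2\pi$. You also supply two justifications the paper leaves implicit, and both are sound: the absolute integrability via $\abs{e^{t\Psi(u-i/2)}}\le \E[e^{X_t/2}]$, and the conjugate symmetry $\overline{e^{t\Psi(u-i/2)}}=e^{t\Psi(-u-i/2)}$ that folds the integral onto $(0,\infty)$.
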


We now use the LL formula to derive call-price asymptotics for the CGMY 
model.  The proof of Theorem~\ref{thm:CGMY_first_order} follows the 
argument of \cite{andersen_lipton}, but completes it at a point where their 
derivation requires further justification (see the discussion following 
equation~\eqref{eq:0_to_eps}).  We first record a representation of 
the ATM call price that introduces the notation used throughout.
\begin{prop}
    \label{prop:L_rep}
    Let $\proc{X_{t}}$ be a CGMY process with $M>1$ and $\proc{S_t} = 
    \proc{S_0 e^{X_t}}$ be a martingale. Then the normalized call-price function
    satisfies
    \begin{align}
        c(t,0) = t^{1/Y} \calL\left( t \right),
        \label{eq:cgmy_L}
    \end{align}
    where 
    \begin{align}
        \calL\left( t \right) = \frac{1}{\pi} \Re{\left( \int_{0}^{\infty} \frac{ 1 - 
                \exp{\left( \theta(t,v) \right)}}{v^2 + \frac{1}{4}t^{2/Y}}
                dv \right) }, \label{eq:ll_atm_trans}
    \end{align}
    with 
    \begin{align}
        \theta(t,v) &= i v \tilde{b} t^{1-1/Y} + \frac{\tilde{b}}{2} t +
        C \Gamma\left( -Y \right)\Biggl[ 
        \left( \left( M - \frac{1}{2} \right)t^{1/Y} - iv \right)^{Y} + 
        \left( \left( G + \frac{1}{2} \right)t^{1/Y} + iv \right)^{Y} \notag \\
        & \hspace{4.2in} - M^{Y} t - G^{Y} t \Biggr] \notag  \\
        &= i v \tilde{b} t^{1-1/Y} + \kappa t +
            C \Gamma\left( -Y \right)\left( 
            \left( \Mt t^{1/Y} - iv \right)^{Y} + 
            \left( \Gt t^{1/Y} + iv \right)^{Y} \right), \label{eq:exp_fn}
    \end{align}
    $\tilde{b}$ given by \eqref{eq:CGMY_martingale}, and 
    $\kappa = \tilde{b}/2 - C \Gamma(-Y) (M^{Y} + G^{Y})$. Moreover, 
    the real part of \eqref{eq:exp_fn} has the representation
    \begin{align}
        r(t,v) &:= \Re{\left( \theta(t,v) \right) } \notag \\
        &= \kappa t + C \Gamma(-Y)   \left[ \left( \Mt^2   
        t^{2/Y} + v^2 \right)^{Y/2}   
        \cos \left( Y \arctan \left(
        -\frac{v}{\Mt t^{1/Y}} \right) \right) \right. \notag \\
        &\;\;\; + \left. \left( \Gt^2   t^{2/Y} + v^2
        \right)^{Y/2}   \cos \left( Y \arctan \left(
        \frac{v}{\Gt t^{1/Y}} \right) \right) \right].
        \label{eq:real_theta}
    \end{align}
\end{prop}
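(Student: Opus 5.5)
The proof is a direct substitution into the ATM Lipton--Lewis formula \eqref{eq:ll_atm}, followed by an algebraic simplification of the exponent. Four things need checking: the hypotheses of Theorem~\ref{thm:ll_thm}, the change of variables, the form of $\theta(t,v)$, and its real part.

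\textbf{Hypotheses.} The exponent \eqref{eq:CGMY_exp} extends analytically to $\{-M<\Im z<G\}$, using principal branches of $(M-iz)^Y$ and $(G+iz)^Y$. On this strip $\Re(M-iz)=M+\Im z>0$ and $\Re(G+iz)=G-\Im z>0$. Since $M>1$ and $G\ge 0$, the strip contains $\mathcal{S}$, except possibly the boundary line $\Im z=0$ when $G=0$. There $G+iu=iu$ still has a well-defined principal power, and continuity up to the boundary suffices. Condition \eqref{eq:CGMY_martingale} gives $\Psi(-i)=0$. So Theorem~\ref{thm:ll_thm} and its corollary \eqref{eq:ll_atm} apply.

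\textbf{Change of variables.} Substitute $u=v\,t^{-1/Y}$, so $du=t^{-1/Y}dv$ and
\[
u^2+\tfrac14=t^{-2/Y}\bigl(v^2+\tfrac14 t^{2/Y}\bigr).
\]
The prefactor becomes $t^{-1/Y}\cdot t^{2/Y}=t^{1/Y}$, which gives \eqref{eq:cgmy_L} with $\theta(t,v):=t\Psi(vt^{-1/Y}-i/2)$.

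\textbf{Form of $\theta$.} Expand $t\Psi$ at $u-i/2$ with $u=vt^{-1/Y}$.
\begin{itemize}
\item The drift term: $t\,i(u-i/2)\tilde b = iv\tilde b\,t^{1-1/Y}+\tilde b t/2$.
\item The $M$-power: $M-i(u-i/2)=(M-\tfrac12)-ivt^{-1/Y}$, so
\[
t\bigl(M-i(u-i/2)\bigr)^Y=\bigl((M-\tfrac12)t^{1/Y}-iv\bigr)^Y .
\]
\item The $G$-power: in the same way, $t\bigl(G+i(u-i/2)\bigr)^Y=\bigl((G+\tfrac12)t^{1/Y}+iv\bigr)^Y$.
\end{itemize}
Pulling the factor $t$ inside $(\cdot)^Y$ is legitimate because $t^{1/Y}>0$ and the principal branch is positively homogeneous: $(\lambda z)^Y=\lambda^Y z^Y$ for $\lambda>0$ and $\Re z>0$. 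This yields the first display for $\theta$. Setting $\Mt=M-\tfrac12$ and $\Gt=G+\tfrac12$ and collecting the constant terms into $\kappa$ gives \eqref{eq:exp_fn}.

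\textbf{Real part.} Write $\Mt t^{1/Y}-iv$ in polar form. Because $\Mt>\tfrac12>0$, it lies in the right half-plane, so its argument is exactly $\arctan\bigl(-v/(\Mt t^{1/Y})\bigr)$, with no $\pm\pi$ correction. Hence
\[
\Re\bigl(\Mt t^{1/Y}-iv\bigr)^Y=\bigl(\Mt^2t^{2/Y}+v^2\bigr)^{Y/2}\cos\!\Bigl(Y\arctan\!\Bigl(-\frac{v}{\Mt t^{1/Y}}\Bigr)\Bigr).
\]
The same holds for $\Gt t^{1/Y}+iv$, since $\Gt\ge\tfrac12>0$. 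The term $iv\tilde b\,t^{1-1/Y}$ is purely imaginary. Since $C\Gamma(-Y)$ and $\kappa$ are real, this gives \eqref{eq:real_theta}.

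The only delicate point is the branch issue: the identity $t(\cdot)^Y=(t^{1/Y}\cdot)^Y$ and the arctan representation of the argument both depend on the shifted arguments staying in the right half-plane. That is guaranteed by $\Mt>0$ (from $M>1$) and $\Gt>0$, so I do not expect a genuine obstacle.
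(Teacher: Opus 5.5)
Your proposal is correct and follows the same route as the paper: substitute $v = ut^{1/Y}$ in \eqref{eq:ll_atm}, then write $\Mt t^{1/Y}-iv$ and $\Gt t^{1/Y}+iv$ in polar form. Your extra checks fill in details the paper leaves implicit: that $\Psi(-i)=0$, that positive homogeneity of the principal power lets you move $t$ inside, and that both shifted arguments lie in the right half-plane so their argument is exactly the stated arctan.
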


\begin{thm}
    \label{thm:CGMY_first_order}
    Let $\proc{X_{t}}$ be a CGMY process with $M>1$ and let $\proc{S_{t}} = 
    \proc{S_{0} e^{X_{t}}}$ be a martingale, then the first-order normalized 
    call-price can be represented as
    \begin{align}
        c\left( t, 0 \right) = d_{1} t^{1/Y} + o\left( t^{1/Y} \right),
        \label{eq:cgmy_firstorder}
    \end{align}
    as $t \rightarrow 0$, with $d_{1}$ given by 
    \begin{align}
        d_{1} &= \frac{1}{\pi} \Re{\left( \int_{0}^{\infty} \frac{1 - 
            \exp{\left( \theta_{0}(u) \right)}}{ u^{2}} du \right) } 
            = \frac{1}{\pi} \Gamma\left( 1 - \frac{1}{Y} \right) \left( 
            2 C \Gamma\left( -Y \right) \abs{\cos{\left( \frac{\pi Y}{2} 
            \right) } } \right)^{1/Y}, 
            \label{eq:cgmy_d1} 
    \end{align}
    with 
    \begin{align}
        \theta_{0}\left( u \right) := C \Gamma\left( -Y \right) \left( 
        \left( -i \right)^{Y} + i^{Y} \right) \abs{u}^{Y}.
        \label{eq:cgmy_exp0}
    \end{align}
\end{thm}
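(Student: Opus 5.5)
Starting from Proposition~\ref{prop:L_rep}, $c(t,0)=t^{1/Y}\calL(t)$, so it suffices to show $\calL(t)\to d_1$ as $t\downarrow 0$. The idea is to pass to the limit inside the integral \eqref{eq:ll_atm_trans} by dominated convergence.

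\textbf{Pointwise limit.} For each fixed $v>0$, $\theta(t,v)\to \theta_0(v)$. Indeed $iv\tilde b\,t^{1-1/Y}\to 0$ since $Y>1$, $\kappa t\to 0$, and $(\Mt t^{1/Y}-iv)^Y+(\Gt t^{1/Y}+iv)^Y\to ((-i)^Y+i^Y)v^Y$ by continuity of the principal power away from the branch cut. The denominator also satisfies $v^2+\tfrac14 t^{2/Y}\to v^2$. Hence the integrand converges pointwise to $(1-e^{\theta_0(v)})/v^2$. This limit is integrable because $\Re\theta_0(v)=2C\Gamma(-Y)\cos(\pi Y/2)v^Y=-\sigma_Y v^Y<0$: near $0$ the numerator is $O(v^Y)$ with $Y>1$, and at infinity it is bounded.

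\textbf{Domination.} This is the main obstacle, and it is exactly where Andersen--Lipton's ``$O(\varepsilon)$'' claim needs justification. I would split the integral at a fixed $v=1$.
\begin{itemize}
\item On $[1,\infty)$, I need $\Re\theta(t,v)\le K$ uniformly for small $t$. Then $|1-e^{\theta}|\le 1+e^{K}$ and $1/v^2$ dominates. From \eqref{eq:real_theta}, $r(t,v)\le \kappa t + C\Gamma(-Y)(\cdots)$. The bracket is a real-analytic perturbation of $2\cos(\pi Y/2)v^Y$, and for $v\ge1$ and small $t$ it is bounded above by $-\tfrac12\sigma_Y v^Y$. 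Consequently $|e^{\theta}|\le e^{\kappa t}$ there.
\item On $(0,1]$, use $|1-e^{z}|\le |z|e^{\max(\Re z,0)}$, so it suffices to bound $|\theta(t,v)|$. Here $\Re z\le O(t)$ by the Lévy--Khintchine structure: $\Re(t\Psi(u-i/2))\le \Re(t\Psi(-i/2))$, since $|\E e^{(iu+1/2)X_t}|\le \E e^{X_t/2}$. Moreover, $\theta(t,v)=t\Psi(t^{-1/Y}v-i/2)$.
\end{itemize}

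\textbf{The region near zero.} For $v\le t^{1/Y}$, $|\theta|\le C't$ because $\Psi$ is bounded on compacts, so the contribution of this piece is $\int_0^{t^{1/Y}} C't/(\tfrac14t^{2/Y})\,dv=O(t^{1-1/Y})\to0$. For $t^{1/Y}\le v\le1$, the growth estimate $|\Psi(w-i/2)|\le A(1+|w|)^Y$ gives $|\theta|\le A'(t+v^Y)$. Dividing by $v^2$ yields an integrand bounded by $A't/v^2+A'v^{Y-2}$. The second term is integrable, and the first integrates to $O(t^{1-1/Y})$.

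\textbf{Conclusion.} I would therefore not use literal domination on all of $(0,1]$, but rather split off $[0,t^{1/Y}]$ and the $t/v^2$ piece, both of which vanish in the limit, and apply dominated convergence to the remainder. This gives $\calL(t)\to \frac1\pi\Re\int_0^\infty (1-e^{\theta_0(u)})u^{-2}\,du$.

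\textbf{Closed form.} Write $\theta_0(u)=-\sigma_Y u^Y$ with $\sigma_Y=2C\Gamma(-Y)|\cos(\pi Y/2)|$; this is real, because $(-i)^Y+i^Y=2\cos(\pi Y/2)$, which is negative for $Y\in(1,2)$. Integrating by parts,
\[
\int_0^\infty(1-e^{-\sigma_Y u^Y})u^{-2}\,du=\sigma_Y Y\int_0^\infty u^{Y-2}e^{-\sigma_Y u^Y}\,du .
\]
The substitution $s=\sigma_Y u^Y$ turns this into $\sigma_Y^{1/Y}\Gamma(1-1/Y)$, which is \eqref{eq:cgmy_d1}.
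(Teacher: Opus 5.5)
Your argument is correct and is essentially the paper's proof. Both apply dominated convergence away from the origin, and both use the growth bound $|\theta(t,v)|\lesssim t+v^{Y}$ (the paper's rescaled estimate on $\psi_0$) to show that the near-origin contribution is $O(t^{1-1/Y})$ plus a uniformly small piece; the paper controls that piece by splitting at a small $\varepsilon$, taking $\limsup_{t\to0}$ to get $O(\varepsilon^{Y-1})$ and letting $\varepsilon\to0$, whereas you use a moving cutoff at $t^{1/Y}$ and a generalized dominated convergence step. Two of your ingredients are tidy refinements: the inequality $|1-e^{z}|\le|z|e^{(\Re z)_+}$ removes the need for the paper's second-order remainder $D$, and the moment bound $|e^{\theta(t,v)}|\le e^{t\Psi(-i/2)}$ (in fact $\le 1$ by Jensen and the martingale condition) makes explicit the boundedness of $e^{\theta}$ that the paper takes for granted on $(\varepsilon,\infty)$.
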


We are in a position now to prove second-order call-price asymptotics for 
exponential CGMY processes using only asymptotic expansions involving the 
characteristic function, and we indicate how the same 
methodology can be used to study more general exponential \levy models.

\begin{thm}
    \label{thm:second_order_CGMY}
    Let $\proc{X_{t}}$ be a CGMY process with $M>1$ and let $\proc{S_{t}} = \proc{S_{0} 
    e^{X_{t}}}$ be a martingale, then the second-order normalized call-price can be 
    represented as
    \begin{align}
        c\left( t, 0 \right) = d_{1} t^{1/Y} + d_{2} t + o(t),
        \label{eq:cgmy_secondorder}
    \end{align}
    as $t \rightarrow 0$, where $d_{1}$ is as in Theorem \ref{thm:CGMY_first_order} and 
    \begin{align}
        d_{2} =
        \frac{1}{\pi} \int_{0}^{\infty} \frac{ \left( w^{2} + \frac{1}{4} 
        \right) \tht{w} - w^{2} \Re{\left( \psi_0\left( w \right) \right) }}{ w^{2}\left( 
        w^{2} + \frac{1}{4} \right) } dw, \label{eq:d2_formal}
    \end{align}
    where $\theta_0$ is given by \eqref{eq:cgmy_exp0} and
    $\psi_0(v) := \Psi(v - i/2)$ is the contour-shifted exponent:
    \begin{align}
        \psi_0\left( v \right) = i v \tilde{b} + \kappa +
            C \Gamma\left( -Y \right)\left( \left( \Mt - iv \right)^{Y} + 
            \left( \Gt + iv \right)^{Y} \right), \label{eq:psi_def}
    \end{align}
    with $\Mt = M-\tfrac12$, $\Gt = G+\tfrac12$, and
    $\kappa = \tilde b /2 - C\Gamma(-Y)\big(M^{Y}+G^{Y}\big)$.
\end{thm}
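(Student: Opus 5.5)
The plan is to undo the $t^{1/Y}$ rescaling of Proposition~\ref{prop:L_rep} and compare the two integrands on a common variable. Substituting $v=t^{1/Y}w$ in \eqref{eq:ll_atm_trans} and using the homogeneity $(\Mt t^{1/Y}\mp it^{1/Y}w)^{Y}=t(\Mt\mp iw)^{Y}$, I expect the exact identity $\theta(t,t^{1/Y}w)=t\,\psi_0(w)$. This gives
\[
c(t,0)=\frac1\pi\Re\int_0^\infty\frac{1-e^{t\psi_0(w)}}{w^2+\frac14}\,dw .
\]
Similarly, since $\theta_0$ is real and $Y$-homogeneous, \eqref{eq:cgmy_d1} rescales to $d_1t^{1/Y}=\frac1\pi\int_0^\infty (1-e^{t\theta_0(w)})w^{-2}\,dw$. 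Hence
\[
\frac{c(t,0)-d_1t^{1/Y}}{t}=\frac1\pi\int_0^\infty \Re\, F_t(w)\,dw,
\qquad
F_t(w)=\frac{1-e^{t\psi_0(w)}}{t\,(w^2+\frac14)}-\frac{1-e^{t\theta_0(w)}}{t\,w^2}.
\]
Pointwise, $F_t(w)\to \theta_0(w)/w^2-\psi_0(w)/(w^2+\frac14)$. The real part of this limit is exactly the integrand of \eqref{eq:d2_formal}. So the theorem reduces to justifying the interchange of limit and integral.

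\emph{Two basic facts.} First, $\Re\psi_0(w)\le 0$ for all $w$. Indeed $\abs{e^{t\psi_0(w)}}=\abs{\E e^{(iw+\frac12)X_t}}\le \E e^{X_t/2}\le(\E e^{X_t})^{1/2}=1$ by Jensen and the martingale condition. Also $\theta_0<0$, since $\Gamma(-Y)>0$ and $\cos(\pi Y/2)<0$ for $Y\in(1,2)$. Consequently $\abs{1-e^{z}}\le\abs z$ and $\abs{e^{z}-e^{z'}}\le\abs{z-z'}$ whenever $\Re z,\Re z'\le0$. Second, from \eqref{eq:psi_def} I would record, for $w\ge1$, the expansion
\[
\psi_0(w)-\theta_0(w)=i\tilde b w+Y w^{Y-1}\big[\Mt(-i)^{Y-1}+\Gt\, i^{Y-1}\big]+O(w^{Y-2}).
\]
Thus the real part of the difference is $O(w^{Y-1})$, the imaginary part is $\tilde b w+O(w^{Y-1})$, and $\Re\psi_0(w)\le -c\,w^{Y}$ for $w\ge1$.

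\emph{On $[0,1]$.} Here dominated convergence applies directly. The bounds give $\abs{F_t(w)}\le 4\abs{\psi_0(w)}+\abs{\theta_0(w)}w^{-2}\lesssim 1+w^{Y-2}$, which is integrable because $Y>1$.

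\emph{On $[1,\infty)$ — the main obstacle.} A naive bound on the difference $e^{t\theta_0}-e^{t\psi_0}$ only gives $t\abs{\tilde b}w$, which after dividing by $tw^2$ leaves the non-integrable $w^{-1}$. This is exactly where the full integrand and the real part must be kept together. I would write
\[
F_t=\frac{1-e^{t\psi_0}}{t}\Big(\frac1{w^2+\frac14}-\frac1{w^2}\Big)+\frac{e^{t\theta_0}\big(1-e^{z_t}\big)}{t\,w^2},
\qquad z_t:=t(\psi_0-\theta_0)=\rho+i\beta .
\]
The first piece is dominated by $\abs{\psi_0}w^{-4}\lesssim w^{Y-4}$ and converges pointwise, so dominated convergence handles it. For the second piece I would use
\[
\Re\big(e^{t\theta_0}(1-e^{z_t})\big)=e^{t\theta_0}\big(1-e^{\rho}\cos\beta\big),
\]
and split it further. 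The first-order contribution $-\rho$ is dominated by $C\,w^{Y-3}$, which is integrable, and it converges to $-\Re(\psi_0-\theta_0)/w^2$ as required. The remainder $1-e^{\rho}\cos\beta+\rho$ is bounded by $C(\rho^2+\beta^2)$ for $\abs{z_t}\le1$, and by $C(\abs\rho+1)$ otherwise. Since $\rho^2+\beta^2\lesssim t^2w^2$, this term contributes at most
\[
\int_1^\infty t\,e^{-c t w^{Y}}\,dw\;\lesssim\; t^{1-1/Y}\to0,
\]
where the decay factor comes from $e^{t\theta_0}$. The region where $\abs{z_t}>1$ is likewise killed by that factor, since there $w\gtrsim t^{-1}$ and the factor is $e^{-ct^{1-Y}}$. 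Hence this remainder does not contribute in the limit.

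Collecting the pieces gives $\big(c(t,0)-d_1t^{1/Y}\big)/t\to d_2$, with $d_2$ finite because the same majorants show that the integrand in \eqref{eq:d2_formal} is integrable. Delivering that $e^{-ctw^Y}$-weighted bound for the quadratic remainder, rather than a pointwise majorant, is the step I expect to require the most care.
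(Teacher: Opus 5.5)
Your proposal is correct and is essentially the paper's proof: both rescale $v=t^{1/Y}w$, keep the two integrands combined, split off the $\tfrac14$ denominator correction, and apply dominated convergence. In both, the non-integrable drift contribution $\tilde b w/w^{2}$ is neutralized by taking real parts, so that it only enters quadratically against the decay $e^{-ctw^{Y}}$.

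The paper avoids your $\abs{z_t}\le 1$ / $\abs{z_t}>1$ case split by writing $\Re\bigl(e^{t\theta_0}-e^{t\psi_0}\bigr)=\bigl(e^{t\theta_0}-e^{t\Re\psi_0}\bigr)+e^{t\Re\psi_0}\bigl(1-\cos(t\,\Im\psi_0)\bigr)$. It then uses $\sup_{t>0}t\,e^{-ctw^{Y}}\lesssim w^{-Y}$ to obtain a $t$-uniform integrable majorant, which answers your worry about finding a pointwise bound. In your version, $\rho$ is positive and of size $tw^{Y-1}$ for large $w$, so $e^{\rho}$ is not bounded; in the case $\abs{z_t}>1$ you should therefore use $e^{t\theta_0}e^{\rho}=e^{t\Re\psi_0}\le e^{-ctw^{Y}}$ rather than the bound $C(\abs{\rho}+1)$.
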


The second-order expansion in Theorem~\ref{thm:second_order_CGMY} 
recovers the well-known small-time behavior of at-the-money call prices 
for the CGMY model.  This provides a streamlined alternative 
derivation that works solely within the Fourier framework: the 
second-order term arises directly from the behavior of the characteristic 
exponent near the origin, without requiring measure transformations or 
density expansions, and the roles of the parameters $C$, $G$, $M$, and 
$Y$ are transparent at each stage.

One natural question is whether the $d_2$ term in Theorem~\ref{thm:second_order_CGMY} 
agrees with the term in the literature. For the case of \cite{fl_houdre_cgmy}, 
the answer is 
affirmative, as we verified numerically by comparing $d_2$ in \eqref{eq:d2_formal} 
with the closed-form expression
\[
d_2^{FL} = \frac{C \Gamma(-Y)}{2} \left( (M-1)^Y - M^Y - (G+1)^Y + G^Y \right),
\]
from \cite{figueroa_lopez_gong_houdre_2014}, which we denote by $d_2^{FL}$ to avoid confusion.

Figure \ref{fig:diffs} shows heatmaps of the difference between $d_2$ and $d_2^{FL}$ for 
fixed values of $Y$ and various ranges of $M$ and $G$. The differences are uniformly 
of order $10^{-7}$, consistent with the 
numerical tolerance of the quadrature, confirming agreement between the two expressions.
\begin{figure}[ht]
    \centering
    \includegraphics[width=0.7\linewidth]{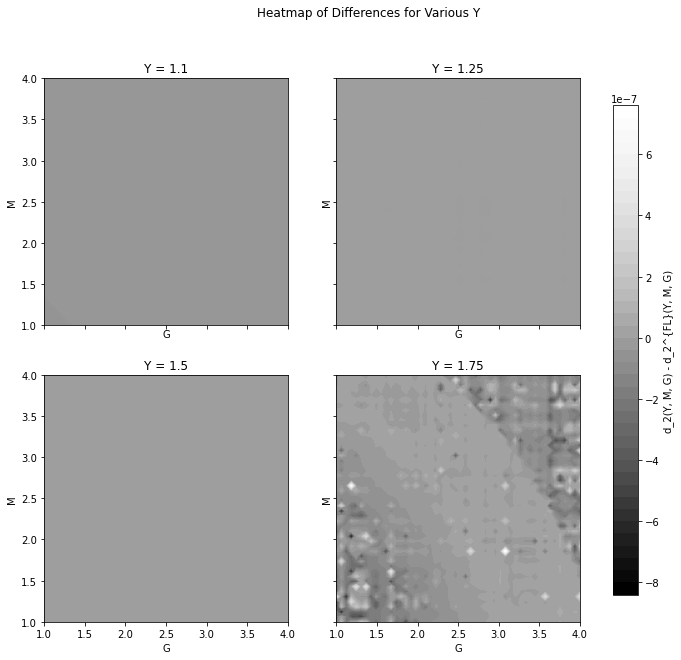}
    \caption{$d_2 - d_2^{FL}$ for various values of $Y$ and ranges of $M$ and $G$ with
    $C=1$.
    All differences are of order $10^{-7}$.}
    \label{fig:diffs}
\end{figure}

\section{Higher-order ATM Expansion via the Characteristic Exponent}
\label{sec:formal_expansion}
In this section, we extract the higher-order terms of the ATM call-price expansion. 
We first present a formal derivation using a combined-integrand approach and Laplace 
approximations to build intuition and identify the closed-form coefficients. We then 
justify this expansion by introducing a dynamic cutoff to 
carefully control the asymptotic error terms.

For a CGMY process with parameters $(C,G,M,Y)$ and $1<Y<2$,
the characteristic exponent is
\[
\Psi(u)
=
i u \tilde b
\;+\;
C\,\Gamma(-Y)
\left[
    (M - iu)^{Y}
  + (G + iu)^{Y}
  - M^{Y}
  - G^{Y}
\right],
\qquad u\in\R,
\]
where
$
\tilde b
=
b
+ \int_{\abs{x}>1} x\,\nu(dx),
$
and under the martingale condition,
\[
\tilde b
=
- C\,\Gamma(-Y)
\left[
    (M - 1)^{Y}
  + (G + 1)^{Y}
  - M^{Y}
  - G^{Y}
\right].
\]

In the Lipton--Lewis representation, we need the exponent evaluated
along the vertical shift $u \mapsto u - i/2$:
\begin{align}
\Psi\!\left(u - \frac{i}{2}\right)
=
i\!\left(u - \frac{i}{2}\right)\tilde b
\;+\;
C\,\Gamma(-Y)
\left[
    \left(\Mt - iu\right)^{Y}
  + \left(\Gt + iu\right)^{Y}
  - M^{Y}
  - G^{Y}
\right], \label{eq:psi}
\end{align}
where, as in Section~\ref{sec:cgmy_second}, $\Mt = M - 1/2$ and 
$\Gt = G + 1/2$.
We expand the terms $(\Mt - iu)^{Y}$ and $(\Gt + iu)^{Y}$ as
\begin{align}
    \left(\Mt - iu\right)^{Y}
      &= (-iu)^{Y} \left( \frac{\Mt}{-iu} + 1 \right)^{Y} \notag \\
      &= (-iu)^{Y} \left( 1 - \frac{Y \Mt}{iu}
        + \frac{Y(Y-1) \Mt^{2}}{2 (iu)^{2}}
        + \cdots \right), \label{eq:M_term} \\
    \left(\Gt + iu\right)^{Y}
      &= (iu)^{Y} \left( \frac{\Gt}{iu} + 1 \right)^{Y} \notag \\
      &= (iu)^{Y} \left( 1 + \frac{Y \Gt}{iu}
        + \frac{Y(Y-1) \Gt^{2}}{2 (iu)^{2}}
        + \cdots \right). \label{eq:G_term}
\end{align}
This gives, as $\abs{u}\to\infty$, the expansion
\begin{align}
    \Psi\!\left(u - \frac i2\right)
    \approx -\sigma_{Y} \abs{u}^{Y}
    + \text{ terms of order } \abs{u}^{Y-1}, \abs{u}^{Y-2}, \ldots
    + \text{ drift term } i \tilde b\,u, \label{eq:psi_approx}
\end{align}
where $\sigma_{Y} = 2C\,\Gamma(-Y)\,\abs{\cos(\pi Y/2)}$ as before.

We start from the Lipton--Lewis ATM formula
\[
    c(t,0)
    = \frac{1}{\pi}\,
      \Re\!\left(
         \int_{0}^{\infty}
         \frac{1 - \exp\!\big(t\,\Psi(u - i/2)\big)}
              {u^{2} + \tfrac14}\,du
      \right),
\]
where $\Psi$ is the characteristic exponent of the \levy process $X$.

To expose the stable domain-of-attraction scaling, we perform the change of
variables $u = v/t^{1/Y}$, $du = t^{-1/Y}\,dv$, and define
\[
    H(t,v)
    := t\,\Psi\!\left( t^{-1/Y}v - \frac{i}{2} \right).
\]
Substituting into the LL representation yields the rescaled form
\[
    c(t,0)
    = \frac{t^{1/Y}}{\pi}\,
      \Re\!\left(
        \int_{0}^{\infty}
           \frac{1 - e^{H(t,v)}}
                {v^{2} + \tfrac14 t^{2/Y}}\,dv
      \right).
\]
Separating the stable part of the exponent,
\[
    H(t,v)
    = -\sigma_{Y} v^{Y} + h(t,v),
    \qquad
    \sigma_{Y} := 2C\,\Gamma(-Y)\,\abs{\cos(\pi Y/2)},
\]
where $h(t,v)$ collects all the correction terms, and setting
$h_{0}(v) := -\sigma_{Y} v^{Y}$, the binomial expansion and the 
drift term give
\begin{align}
    H(t, v)
      &= - \sigma_{Y} v^{Y}
         + t^{1-1/Y} \alpha(v)
         + t^{1/Y} \beta(v)
         + \text{higher-order terms}
      := h_0(v) + h(t,v), \label{eq:asym_H}
\end{align}
where $\alpha(v) = iv\tilde{b}$ is the drift correction and 
$\beta(v) = C\Gamma(-Y)\,Y\,v^{Y-1}\bigl[(\Mt + \Gt)
\cos\tfrac{(Y-1)\pi}{2} + i(\Gt - \Mt)\sin\tfrac{(Y-1)\pi}{2}\bigr]$ 
is the first binomial correction (the $n=1$ term in 
\eqref{eq:M_term}--\eqref{eq:G_term}).

For the denominator, we use the expansion
\begin{align}
    \frac{1}{v^{2} + \tfrac14 t^{2/Y}}
    = \frac{1}{v^{2}}
      \left(
         1 - \frac{t^{2/Y}}{4v^{2}} + O(t^{4/Y})
      \right). \label{eq:denom_asym}
\end{align}
Continuing formally, we expand the exponential as
\begin{align}
    \exp{\left( H(t,v) \right)}
      &= e^{h_0(v)} \exp{\left( h(t,v) \right)} \notag \\
      &= e^{h_0(v)} \left( 1 + h(t,v) + \frac{1}{2} h(t,v)^2 
         + \frac{1}{6} h(t,v)^3 + \cdots \right). 
         \label{eq:exp_H_form}
\end{align}
Combining \eqref{eq:denom_asym} and \eqref{eq:exp_H_form}, the 
ATM call price takes the canonical asymptotic form
\[
    c(t,0)
    = \frac{t^{1/Y}}{\pi}\,
      \Re\!\left(
         \int_{0}^{\infty}
           \frac{1 - e^{-\sigma_{Y} v^{Y}}\,e^{h(t,v)}}{v^{2}}
           \left[ 1 - \frac{t^{2/Y}}{4v^{2}} + O(t^{4/Y}) \right]
         dv
      \right).
\]
To identify the low-order exponents, it suffices to keep only 
$\alpha$ and $\beta$ in $h$.  Expanding the exponential:
\begin{align}
    \exp{(h(t,v))}
      &= 1
         + t^{1-1/Y} \alpha(v)
         + t^{1/Y} \beta(v) \notag \\
      &\quad
         + \frac{1}{2} t^{2-2/Y} \alpha(v)^2
         + t \,\alpha(v) \beta(v)
         + \frac{1}{2} t^{2/Y} \beta(v)^2 \notag \\
      &\quad
         + \frac{1}{6} t^{3-3/Y} \alpha(v)^3 + \cdots .
         \label{eq:exp_h_expansion}
\end{align}
Thus
\begin{align*}
    1 - \exp{(H(t,v))}
      &= 1 - e^{h_0(v)}
         - e^{h_0(v)} \Big( t^{1-1/Y} \alpha(v)
                             + t^{1/Y} \beta(v)
                             + \tfrac12 t^{2-2/Y} \alpha(v)^2
                             + t \,\alpha(v) \beta(v) \\
      &\qquad\qquad\qquad\qquad\;\;
                             + \tfrac12 t^{2/Y} \beta(v)^2
                             + \tfrac16 t^{3-3/Y} \alpha(v)^3
                             + \cdots \Big),
\end{align*}
and each resulting term picks up an additional factor of 
$t^{1/Y}$ from the prefactor.  The exponent of each term is therefore
\begin{align}
    \rho = \frac{1}{Y} + (\text{exponent from } e^{h}) 
    + \frac{2j}{Y}, \label{eq:rho_exponent}
\end{align}
where the first summand comes from the prefactor, the second from 
the expansion of $e^h$ (i.e., $1-1/Y$ per drift factor and 
$1/Y$ per binomial factor), and the third from the expansion of 
the denominator $1/(v^2 + \tfrac{1}{4}t^{2/Y})$.  Setting $j = 0$ 
and keeping only the terms from $\alpha$ and $\beta$, the 
candidate exponents beyond $1/Y$ and $1$ are
\[
    2 - \frac{1}{Y},\quad 
    \frac{2}{Y},\quad 
    1 + \frac{1}{Y},\quad 
    3 - \frac{2}{Y}.
\]
However, not all of these carry nonzero coefficients.  As we show 
in Section~\ref{subsec:formal_d3}, the cubic drift 
$(i\tilde{b}\,v)^3$ is purely imaginary, so the exponent 
$3 - 2/Y$ is absent from the expansion.  More generally, odd 
powers of the drift always vanish under $\Re$.  The effective 
third-order behavior is thus governed by the two exponents 
$2 - 1/Y$ and $2/Y$, which cross at $Y = 3/2$, recovering the 
bifurcation of \eqref{eq:fl_third_order}.

We present the resulting exponents graphically in 
Figure~\ref{fig:Y_order}, distinguishing curves with nonzero 
coefficients (solid) from those whose coefficients vanish 
(dashed).
\begin{figure}[t]
    \centering
    \begin{subfigure}[t]{0.48\textwidth}
        \centering
        \includegraphics[width=\linewidth]{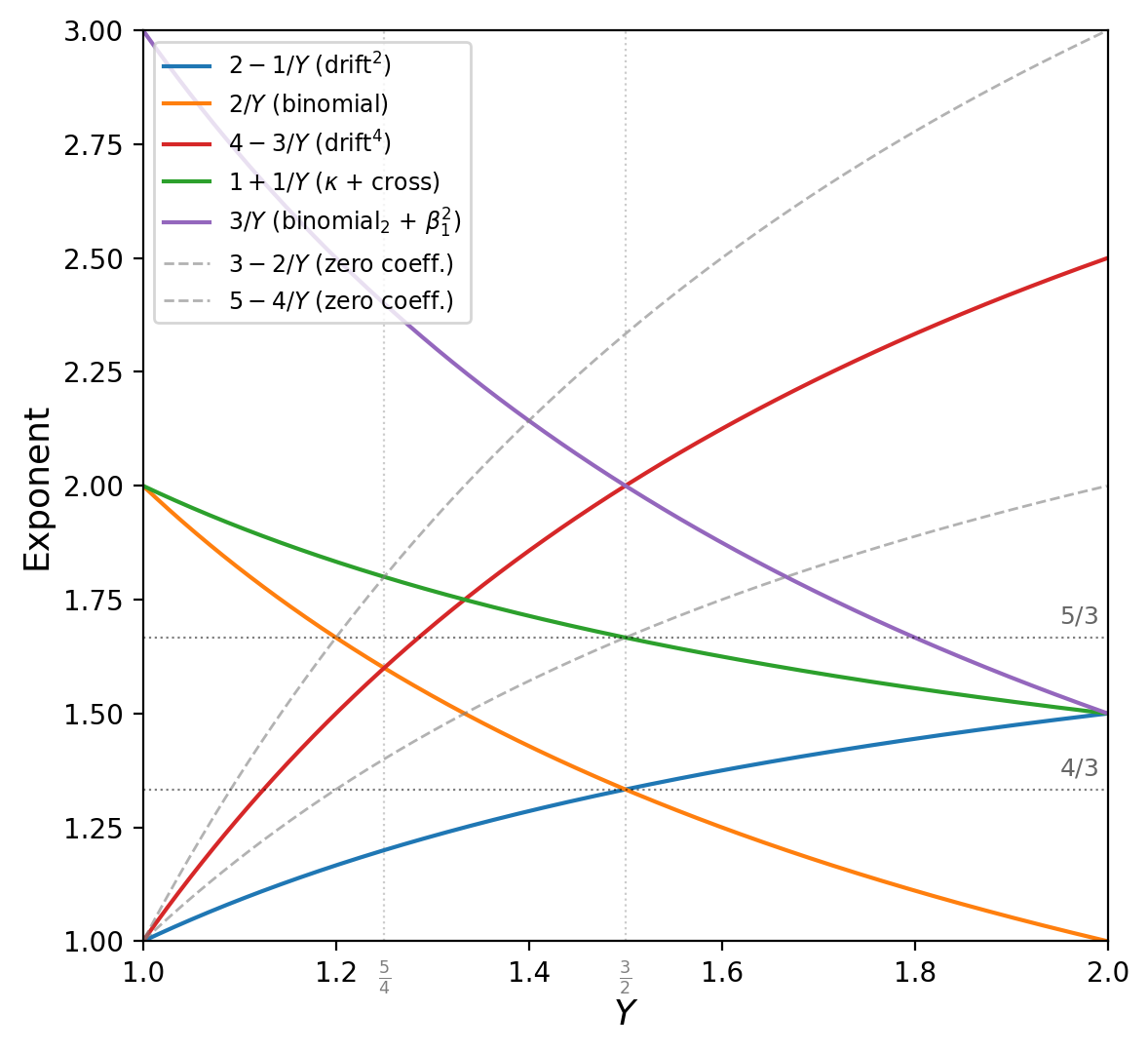}
        \caption{All candidate exponents.  Solid curves carry 
        nonzero coefficients; dashed curves have zero 
        coefficient (odd drift powers).}
        \label{fig:Y_order}
    \end{subfigure}
    \hfill
    \begin{subfigure}[t]{0.48\textwidth}
        \centering
        \includegraphics[width=\linewidth]{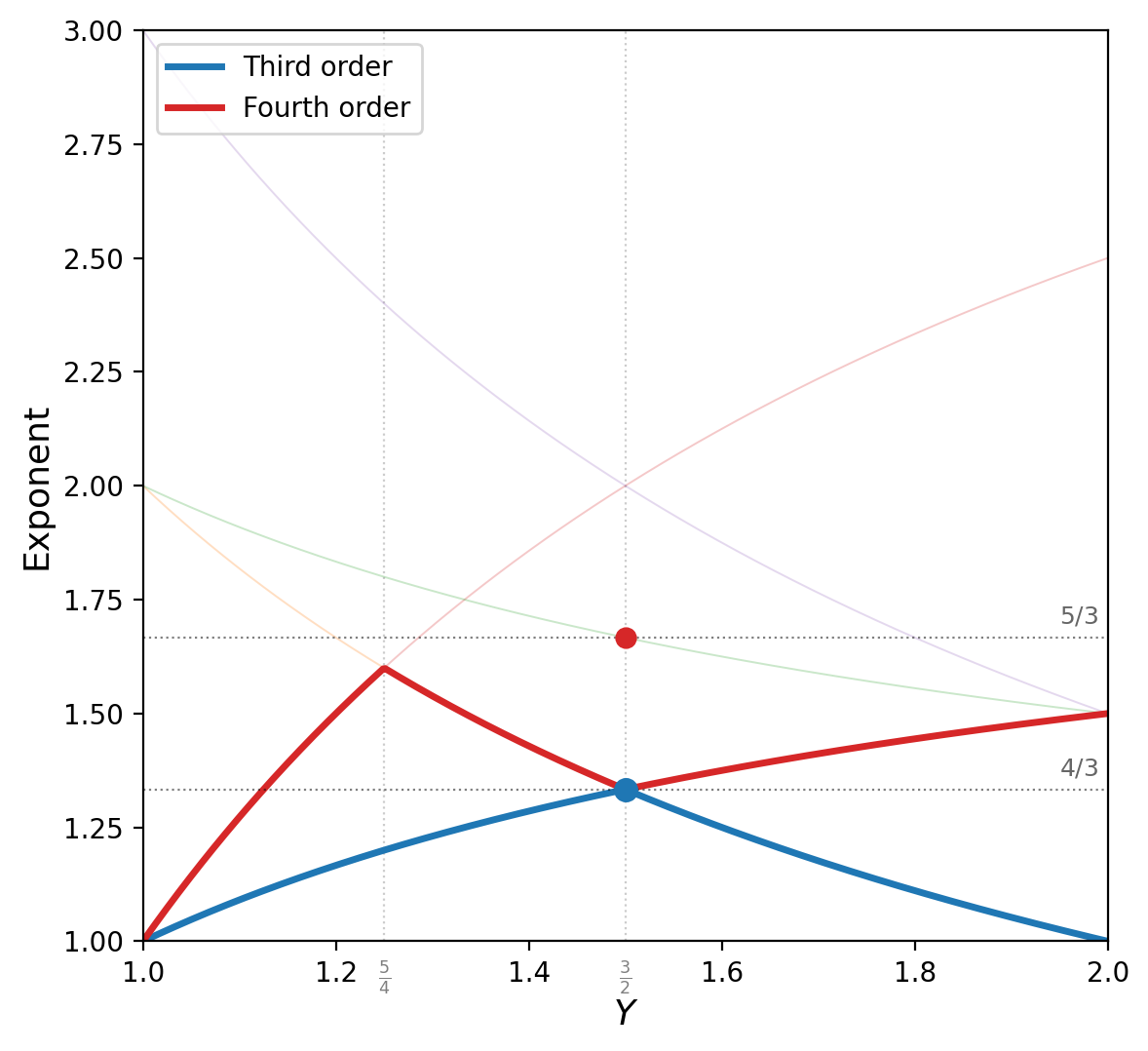}
        \caption{Effective third- and fourth-order exponents.  
        The blue dot at $(3/2,\,4/3)$ marks the coalescence; 
        the red dot at $(3/2,\,5/3)$ marks the fourth-order 
        value at the critical point.}
        \label{fig:Y_order_selected}
    \end{subfigure}
    \caption{Candidate third- and fourth-order exponents as 
    functions of $Y \in (1,2)$.  Vertical dotted lines mark the 
    bifurcation values $Y = 5/4$ and $Y = 3/2$.}
    \label{fig:Y_comp}
\end{figure}

We now turn to deriving the coefficients.  The first-order 
coefficient is
\begin{align}
    d_1 := \frac{1}{\pi} \int_{0}^{\infty} \frac{1 - e^{-\sigma_{Y} u^Y}}{u^2} du. \label{eq:d1}
\end{align}

\begin{prop}[Formal first-order coefficient]
\label{prop:formal_d1}
The first-order coefficient in the formal expansion satisfies
\begin{align}
    d_1 = \frac{1}{\pi} \int_{0}^{\infty} \frac{1 - e^{-\sigma_{Y} u^Y}}{u^2} du 
    = \frac{1}{\pi}\;\Gamma\!\left(1 - \frac{1}{Y}\right) \sigma_{Y}^{1/Y},
    \label{eq:d1_closedform}
\end{align}
where $\sigma_{Y} = 2C\Gamma(-Y)\abs{\cos(Y\pi/2)}$.
\end{prop}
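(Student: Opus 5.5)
The plan is to evaluate the integral in \eqref{eq:d1_closedform} directly by integration by parts, and then reduce it to a gamma integral by a power substitution. The integrand $u^{-2}(1-e^{-\sigma_Y u^Y})$ is integrable near $0$, since $1-e^{-\sigma_Y u^Y}\sim \sigma_Y u^Y$ with $Y>1$, so it is $O(u^{Y-2})$ and $Y-2>-1$. It is also integrable at infinity, being $O(u^{-2})$ there. Hence $d_1$ is a finite positive number.

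\textbf{Step 1 (integration by parts).} Write $u^{-2}\,du = d(-u^{-1})$. Then
\[
\int_0^\infty \frac{1-e^{-\sigma_Y u^Y}}{u^2}\,du
= \Big[-\frac{1-e^{-\sigma_Y u^Y}}{u}\Big]_0^\infty
+ \int_0^\infty \frac{\sigma_Y Y u^{Y-1} e^{-\sigma_Y u^Y}}{u}\,du .
\]
The boundary term vanishes at both ends. At $u=0$ it behaves like $\sigma_Y u^{Y-1}\to 0$ because $Y>1$, and at $u=\infty$ it is bounded by $1/u\to 0$. This is the only point where the hypothesis $Y\in(1,2)$ is really used.

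\textbf{Step 2 (substitution).} In the remaining integral $\sigma_Y Y\int_0^\infty u^{Y-2}e^{-\sigma_Y u^Y}\,du$, set $s=\sigma_Y u^Y$. Then $u=(s/\sigma_Y)^{1/Y}$ and $du=\tfrac1Y \sigma_Y^{-1/Y}s^{1/Y-1}\,ds$. Collecting powers gives
\[
u^{Y-2}\,du=\tfrac1Y\,\sigma_Y^{-(Y-1)/Y}\,s^{-1/Y}\,ds ,
\]
so the integral equals
\[
\sigma_Y^{1/Y}\int_0^\infty s^{-1/Y}e^{-s}\,ds=\sigma_Y^{1/Y}\,\Gamma\!\left(1-\tfrac1Y\right).
\]
The gamma integral converges because $1-1/Y>0$. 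Dividing by $\pi$ yields \eqref{eq:d1_closedform}.

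\textbf{Step 3 (consistency).} Finally, I would check consistency with Theorem~\ref{thm:CGMY_first_order}. Compute $(-i)^Y+i^Y=2\cos(\pi Y/2)$, which is negative for $Y\in(1,2)$. Also $\Gamma(-Y)>0$ on $(1,2)$. Together these give $\theta_0(u)=-\sigma_Y|u|^Y$, which is real. Hence the $\Re$ in \eqref{eq:cgmy_d1} is superfluous, and the two expressions for $d_1$ coincide.

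There is no real obstacle here. The only care needed is the vanishing of the boundary term at $0$ and the sign bookkeeping that makes $\sigma_Y>0$.
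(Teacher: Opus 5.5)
Your proof is correct and is essentially the paper's argument. The paper substitutes $s=\sigma_Y u^Y$ first and then quotes $\int_0^\infty s^{\alpha-1}(1-e^{-s})\,ds=-\Gamma(\alpha)$ with $\alpha=-1/Y$, together with $-\Gamma(-1/Y)=Y\,\Gamma(1-1/Y)$; you integrate by parts before substituting, which is exactly how that identity is proved, so your version is simply self-contained.
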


\begin{proof}
Write $I = \int_0^\infty u^{-2}(1 - e^{-\sigma_Y u^Y})\,du$ and 
make the substitution $u = \sigma_Y^{-1/Y}\,s^{1/Y}$, 
$du = \frac{1}{Y}\sigma_Y^{-1/Y}\,s^{1/Y - 1}\,ds$, to obtain
\[
I = \frac{\sigma_Y^{1/Y}}{Y}\int_0^\infty 
\frac{1 - e^{-s}}{s^{1/Y + 1}}\,ds.
\]
By the standard identity $\int_0^\infty s^{\alpha - 1}(1 - e^{-s})\,ds 
= -\Gamma(\alpha)$ for $-1 < \alpha < 0$ (here $\alpha = -1/Y$), 
together with the functional equation 
$-\Gamma(-1/Y) = Y\,\Gamma\!\left(\frac{Y-1}{Y}\right)$, 
we get
\[
I = \frac{\sigma_Y^{1/Y}}{Y}\cdot Y\,
\Gamma\!\left(\frac{Y-1}{Y}\right) 
= \sigma_Y^{1/Y}\,\Gamma\!\left(\frac{Y-1}{Y}\right). \qedhere
\]
\end{proof}

\noindent This agrees with the first-order coefficient from 
Theorem~\ref{thm:CGMY_first_order}, since 
$\sigma_Y^{1/Y} = (2C\Gamma(-Y)\abs{\cos(\pi Y/2)})^{1/Y}$ and 
$\Gamma(1-1/Y) = \Gamma((Y-1)/Y)$.

\subsection{Formal second-order coefficient}

We now determine the formal second-order term $d_2 t$ in the 
expansion $c(t,0) = d_1 t^{1/Y} + d_2 t + \cdots$\,. To that end, 
we work with the rescaled form and set 
$\varepsilon := \tfrac{1}{2}t^{1/Y}$. Write
\begin{align}
    c(t,0) - d_1 t^{1/Y} &= \frac{t^{1/Y}}{\pi}\,\RE{
    \int_0^\infty 
    \left[\frac{1 - e^{H(t,v)}}{v^2 + \varepsilon^2} 
    - \frac{1 - e^{-\sigma_Y v^Y}}{v^2}\right] dv }. 
    \label{eq:formal_remainder}
\end{align}
The key observation is that the two integrands in 
\eqref{eq:formal_remainder} must be kept together.  Combining them 
over a common denominator gives
\begin{align}
    c(t,0) - d_1 t^{1/Y} &= \frac{t^{1/Y}}{\pi}\,\RE{
    \int_0^\infty 
    \frac{v^2\bigl(1 - e^{H(t,v)}\bigr) 
    - (v^2 + \varepsilon^2)\bigl(1 - e^{-\sigma_Y v^Y}\bigr)}
    {v^2(v^2 + \varepsilon^2)}\,dv }.
    \label{eq:formal_combined}
\end{align}
Now substituting $v = t^{1/Y} w$ and using 
$H(t, t^{1/Y} w) = t\,\psi_0(w)$ and 
$-\sigma_Y (t^{1/Y} w)^Y = t\,\tht{w}$, together with the prefactor 
$t^{1/Y}/\pi$ and $dv = t^{1/Y}\,dw$, we obtain
\begin{align}
    c(t,0) - d_1 t^{1/Y} &= \frac{1}{\pi}\,\RE{
    \int_0^\infty 
    \frac{w^2\bigl(1 - e^{t\psi_0(w)}\bigr) 
    - (w^2 + \tfrac{1}{4})\bigl(1 - e^{t\tht{w}}\bigr)}
    {w^2(w^2 + \tfrac{1}{4})}\,dw }.
    \label{eq:formal_rescaled}
\end{align}
For each fixed $w$, expand the exponentials in $t$:
\[
    1 - e^{t\psi_0(w)} = -t\,\psi_0(w) + O(t^2), \qquad 
    1 - e^{t\tht{w}} = -t\,\tht{w} + O(t^2).
\]
At first order in $t$, the numerator in \eqref{eq:formal_rescaled} 
becomes
$-t\bigl[w^2\,\psi_0(w) 
- (w^2 + \tfrac{1}{4})\,\tht{w}\bigr]$,
and taking real parts yields the candidate second-order coefficient
\begin{align}
    d_2^{\text{formal}} &= \frac{1}{\pi}
    \int_0^\infty 
    \frac{(w^2 + \tfrac{1}{4})\,\tht{w} 
    - w^2\,\RE{\psi_0(w)}}
    {w^2(w^2 + \tfrac{1}{4})}\,dw,
    \label{eq:formal_d2}
\end{align}
which agrees with the coefficient $d_2$ from 
Theorem~\ref{thm:second_order_CGMY}.

The integrand in \eqref{eq:formal_d2} is in $L^1[0,\infty)$.  
By partial fractions,
\[
    \frac{(w^2 + \tfrac{1}{4})\,\tht{w} 
    - w^2\,\RE{\psi_0(w)}}
    {w^2(w^2 + \tfrac{1}{4})}
    = \frac{\tht{w}}{w^2} 
    - \frac{\RE{\psi_0(w)}}{w^2 + \tfrac{1}{4}}.
\]
Near $w = 0$, we have $\tht{w}/w^2 = -\sigma_Y w^{Y-2}$, which is 
integrable since $Y - 2 \in (-1,0)$, while 
$\RE{\psi_0(w)}/(w^2 + \tfrac{1}{4})$ is bounded.  Near 
$w = \infty$, the difference 
$\RE{\psi_0(w)} - \tht{w} = O(w^{Y-1})$ gives $O(w^{Y-3})$ decay 
in the combined integrand, which is integrable since 
$Y - 3 < -1$.

One might wish to deal with the expansion by splitting the two 
terms in the numerator of \eqref{eq:formal_remainder} and 
handling them separately.  Such a decomposition identifies the 
correct \emph{orders} (a denominator correction at order $t$ 
and exponent corrections at higher orders) but gives the wrong 
\emph{coefficients}: the second-order term depends only on $C$ 
and $Y$, missing the $G$ and $M$ dependence entirely. The source of the 
shortcoming is that the split is not uniform 
near $v = 0$: each piece is individually 
$O(t^{(Y-1)/Y})$, which strictly dominates $t$ since 
$(Y-1)/Y < 1$, and the cancellation to order $t$ survives only 
when the full integrand is kept intact, as in 
\eqref{eq:formal_combined}. 

\subsection{Formal higher-order coefficients}
\label{subsec:formal_d3}

We now extract the terms beyond $d_2 t$ in the expansion.  By 
\eqref{eq:fl_third_order}, the third-order behavior depends on 
whether $Y$ is above or below $3/2$: the exponent is 
$2 - 1/Y$ for $Y \leq 3/2$ (drift-squared mechanism) and $2/Y$ 
for $Y \geq 3/2$ (first binomial correction).  We show that the 
combined-integrand approach recovers both coefficients $a_{2,1}$ 
and $a_{1,2}$ in closed form, and also identifies an intermediate 
term $a_{4,1}\,t^{4-3/Y}$ from the quartic drift that is relevant 
for $Y < 5/4$.  All three terms are present for all 
$Y \in (1,2)$; only their relative ordering changes at the 
bifurcation points $Y = 3/2$ and $Y = 5/4$.

We denote the higher-order coefficients by $a_{n,m}$, reserving
$d_1$ and $d_2$ for the leading-order coefficients whose notation
is standard in the literature.  The second subscript indicates the
mechanism: $m = 1$ for \emph{drift} terms arising from even powers
of $(i\tilde{b}\,w)^{2k}$, which produce Laplace integrals
against $e^{-\sigma_Y tw^Y}$; and $m = 2$ for \emph{binomial}
terms arising from the tempering corrections
$\beta_n\,w^{Y-n}$, which produce integrals against
$1 - e^{-\sigma_Y tw^Y}$.  The first subscript indicates the
source: $n = 2k$ for the $\tilde{b}^{2k}$ drift power (so
$a_{2,1}$ is the drift-squared coefficient,
$a_{4,1}$ the quartic drift), and $n$ for the $n$-th binomial
correction (so $a_{1,2}$ is the first binomial coefficient).

Define $f(x) := 1 - e^x + x$, so that $f(x) = -x^2/2 + O(x^3)$ 
as $x \to 0$.  Then \eqref{eq:formal_rescaled} can be rewritten 
as
\begin{align}
    R_3(t) := c(t,0) - d_1 t^{1/Y} - d_2 t 
    &= \frac{1}{\pi}\,\RE{
    \int_0^\infty 
    \frac{w^2\,f\!\left(t\psi_0(w)\right) 
    - (w^2 + \tfrac{1}{4})\,f\!\left(t\tht{w}\right)}
    {w^2(w^2 + \tfrac{1}{4})}\,dw },
    \label{eq:R3_integral}
\end{align}
since the linear-in-$t$ terms are exactly those subtracted by 
$d_2 t$.

The global Taylor expansion $f(x) \approx -x^2/2$ applied to the 
integrand in \eqref{eq:R3_integral} gives a candidate at order 
$t^2$, but the resulting integrand has $O(w^{2Y-3})$ decay for 
large $w$, which is not in $L^1[0,\infty)$ when $Y > 1$.  The 
higher-order terms therefore cannot be extracted from a single 
Taylor expansion over all of $[0,\infty)$; instead, they arise 
from the interplay between a Taylor region (small $w$) and a 
Laplace region (large $w$).

Fix a cutoff $\Lambda > 0$ and split 
$R_3(t) = R_3^{\mathrm{in}}(t) + R_3^{\mathrm{out}}(t)$ with 
integrals over $[0,\Lambda]$ and $[\Lambda,\infty)$, respectively.

\medskip\noindent\textbf{Inner region $[0,\Lambda]$.}  
For $w \in [0,\Lambda]$, we have $\abs{t\psi_0(w)} \leq C_\Lambda\,t$ 
uniformly, so the Taylor expansion of $f$ converges uniformly.  
Since the $t^2$ integrand behaves like $w^{2Y-3}$ near infinity 
(and is thus locally integrable on $[0,\Lambda]$ since $2Y-3 > -1$ 
for $Y > 1$), the inner region contributes at most $O(t^2 
\Lambda^{2Y-2})$, which for fixed $\Lambda$ is 
$o(t^{2/Y})$ and $o(t^{2-1/Y})$ since both exponents are 
strictly less than $2$ for $Y \in (1,2)$.

\medskip\noindent\textbf{Outer region $[\Lambda,\infty)$.}  
For $w$ large, $w^2 + \tfrac{1}{4} \approx w^2$, and we write 
$\psi_0(w) = \tht{w} + \delta(w)$ where 
$\delta(w) := \psi_0(w) - \tht{w}$ collects all the tempering 
corrections.  From the binomial expansion 
\eqref{eq:M_term}--\eqref{eq:G_term}, the leading terms in 
$\delta(w)$ for large $w$ are
\[
    \delta(w) = \beta_1\,w^{Y-1} + i\tilde{b}\,w 
    + \kappa + O(w^{Y-2}),
\]
where $\beta_1 = \beta(w)/w^{Y-1}$ is the constant from the 
$n=1$ binomial term, and the drift $i\tilde{b}\,w$ and constant 
$\kappa$ come from the non-binomial parts of $\psi_0$.

The expansion of $f(t\psi_0) - f(t\tht{w})$ is controlled by 
two types of correction: the \emph{linear} correction 
$f'(t\tht{w})\cdot t\,\delta$, which produces terms with the 
factor $(1 - e^{-\sigma_Y tw^Y})$, and the 
\emph{higher-order} corrections 
$f^{(n)}(t\tht{w})\cdot (t\,\delta)^n/n!$ for $n \geq 2$, 
which produce terms with the factor 
$e^{-\sigma_Y tw^Y}$.  We treat the drift-squared 
piece first, since its coefficient can be compared directly 
with existing results.

\medskip\noindent\textbf{Drift-squared piece ($a_{2,1}$).}  
The quadratic correction gives
\begin{align}
    \frac{1}{2}f''(t\tht{w})\cdot(t\,\delta)^2 
    = -\frac{1}{2}\,e^{-\sigma_Y t w^Y}\,(t\,\delta(w))^2.
    \label{eq:f_diff_quad}
\end{align}
The dominant real contribution from $\delta^2$ for large $w$ comes 
from the drift: $(i\tilde{b}\,w)^2 = -\tilde{b}^2\,w^2$.  
(The cross-term $2\beta_1 w^{Y-1}\cdot i\tilde{b}\,w$ 
contributes exponent $1 + 1/Y$, not $2 - 1/Y$, since its 
$w$-power is $Y$; the $\beta_1^2\,w^{2Y-2}$ 
term contributes order $t^{3/Y}$.) 
The leading real contribution from the quadratic correction is
therefore
\begin{align}
    \frac{t^2\,\tilde{b}^2}{2\pi}
    \int_\Lambda^\infty 
    e^{-\sigma_Y t w^Y}\,dw.
    \label{eq:R3_drift}
\end{align}
The standard Laplace integral 
$\int_0^\infty e^{-\lambda u^Y}\,du = 
\lambda^{-1/Y}\,\Gamma(1/Y)/Y$ (with $\lambda = \sigma_Y t$) 
gives
\begin{align}
    \int_0^\infty e^{-\sigma_Y t w^Y}\,dw 
    = \frac{(\sigma_Y t)^{-1/Y}}{Y}\,\Gamma\!\left(\frac{1}{Y}\right),
    \label{eq:laplace_exp}
\end{align}
so the factor $t^2 \cdot (\sigma_Y t)^{-1/Y} = 
\sigma_Y^{-1/Y}\,t^{2-1/Y}$, giving a contribution at order 
$t^{2-1/Y}$ with coefficient
\begin{align}
    a_{2,1} = \frac{\tilde{b}^2\,\sigma_Y^{-1/Y}}{2\pi Y}\,
    \Gamma\!\left(\frac{1}{Y}\right).
    \label{eq:a21_formula}
\end{align}
This agrees with the expression 
$a_{2,1} = \tilde{\gamma}^2 p_Z(1,0)/2$ from 
\cite{fl_houdre_cgmy}, where 
$p_Z(1,0) = \sigma_Y^{-1/Y}\,\Gamma(1/Y)/(\pi Y)$ is the 
density of the symmetric $Y$-stable law $Z_1$ at the origin 
and $\tilde{\gamma} = \tilde{b}$ in the pure-jump case 
($\sigma = 0$).  Indeed, $\tilde{\gamma}^2\,p_Z(1,0)/2 = 
\tilde{b}^2\,\sigma_Y^{-1/Y}\,\Gamma(1/Y)/(2\pi Y) = a_{2,1}$.  
Our derivation via the Laplace integral 
\eqref{eq:laplace_exp} provides an independent route to 
this coefficient directly from the Lipton--Lewis 
representation, without the measure-change and stable-density 
machinery of \cite{fl_houdre_cgmy}.

\medskip\noindent\textbf{Cubic drift ($t^{3-2/Y}$): zero 
coefficient.}
The next candidate from the exponent lattice would be the cubic 
drift at $t^{3-2/Y}$, arising from 
$(i\tilde{b}\,w)^3 = -i\tilde{b}^3\,w^3$, which is purely 
imaginary.  Since $e^{-\sigma_Y tw^Y}$ is 
real, the real part of the integrand vanishes, and the exponent 
$3 - 2/Y$ carries no term in the expansion.  More generally, 
all odd powers $(i\tilde{b}\,w)^{2j+1}$ are purely imaginary,
so the exponents $\{(2j+1) - 2j/Y : j \geq 1\}$ are
absent from the expansion. This is why the line $3 - 2/Y$ 
appears dashed in Figure~\ref{fig:Y_order}.

\medskip\noindent\textbf{Quartic drift piece ($a_{4,1}$).}
The next nonzero even drift power is the quartic.  Proceeding 
exactly as for $a_{2,1}$, the $n = 4$ Taylor coefficient of $f$ 
gives
\[
    \frac{f^{(4)}(t\tht{w})}{4!}\,(t\,\delta)^4 
    = -\frac{1}{24}\,e^{-\sigma_Y t w^Y}\,(t\,\delta(w))^4.
\]
The dominant real contribution from $\delta^4$ for large $w$ is 
$(i\tilde{b}\,w)^4 = \tilde{b}^4\,w^4$, giving the 
integral
\[
    -\frac{\tilde{b}^4\,t^4}{24\pi}
    \int_0^\infty 
    e^{-\sigma_Y t w^Y}\,w^{2}\,dw
    = -\frac{\tilde{b}^4\,t^4}{24\pi}\;
    \frac{(\sigma_Y t)^{-3/Y}}{Y}\,\Gamma\!\left(\frac{3}{Y}\right).
\]
Since $t^4 (\sigma_Y t)^{-3/Y} = \sigma_Y^{-3/Y}\,t^{4-3/Y}$, 
this gives a contribution at order $t^{4-3/Y}$ with coefficient
\begin{align}
    a_{4,1} = -\frac{\tilde{b}^4\,\sigma_Y^{-3/Y}}{24\pi Y}\,
    \Gamma\!\left(\frac{3}{Y}\right).
    \label{eq:a41_formula}
\end{align}
For $Y < 5/4$, the exponent $4 - 3/Y$ lies strictly between 
$2 - 1/Y$ and $2/Y$ (since $4 - 3/Y < 2/Y$ iff $Y < 5/4$), 
so this term is needed in the expansion.  For $Y > 5/4$, 
the exponent exceeds $\max(2-1/Y,\,2/Y)$ and the term is 
absorbed into the remainder.  The coefficient is proportional 
to $\tilde{b}^4$ and is typically very small.

\medskip\noindent\textbf{Binomial piece ($a_{1,2}$).}  
The linear correction gives
\begin{align}
    f(t\psi_0) - f(t\tht{w}) 
    &\approx \bigl(1 - e^{t\tht{w}}\bigr)\,t\,\delta(w).
    \label{eq:f_diff_linear}
\end{align}
The factor $1 - e^{t\tht{w}} = 
1 - e^{-\sigma_Y t w^Y}$ concentrates at large $w$ as 
$t \downarrow 0$.  The dominant 
contribution from $\delta(w)$ for large $w$ is the $n=1$ 
binomial term $\beta_1\,w^{Y-1}$.  The drift term 
$i\tilde{b}\,w$ produces a purely imaginary integrand against 
the real factor $(1 - e^{-\sigma_Y tw^Y})$, so its real part vanishes.  The constant 
$\kappa$ contributes order $t^{1+1/Y}$, which exceeds both 
$t^{2/Y}$ and $t^{2-1/Y}$ for $Y > 1$.  The leading outer 
contribution from the linear correction is therefore
\begin{align}
    \frac{t\,\Re{\beta_1}}{\pi}
    \int_\Lambda^\infty 
    (1 - e^{-\sigma_Y t w^Y})\,w^{Y-3}\,dw.
    \label{eq:R3_binomial}
\end{align}
By the standard Laplace identity, for $-1 < \alpha < 0$,
\begin{align}
    \int_0^\infty (1 - e^{-\lambda u^Y})\,u^{\alpha Y - 1}\,du 
    = -\frac{1}{Y}\,\lambda^{-\alpha}\,\Gamma(\alpha).
    \label{eq:laplace_identity}
\end{align}
With $\alpha = 1 - 2/Y$ (so that $\alpha Y - 1 = Y - 3$ and 
$-\alpha = 2/Y - 1$), the condition $\alpha \in (-1,0)$ holds 
for all $Y \in (1,2)$.  Setting $\lambda = \sigma_Y t$, the 
factor $t \cdot (\sigma_Y t)^{2/Y - 1} = 
\sigma_Y^{(2-Y)/Y}\,t^{2/Y}$, giving a contribution at order 
$t^{2/Y}$ with coefficient
\begin{align}
    a_{1,2} = -\frac{\Re{\beta_1}}{\pi Y}\,
    \sigma_Y^{(2-Y)/Y}\,\Gamma\!\left(1 - \frac{2}{Y}\right)
    = -\frac{C\,\Gamma(-Y)\,(\Mt + \Gt)\,
    \sin\!\frac{Y\pi}{2}}{\pi}\;
    \Gamma\!\left(1 - \frac{2}{Y}\right)\,
    \sigma_Y^{(2-Y)/Y},
    \label{eq:a12_formula}
\end{align}
where we used 
$\Re{\beta_1} = CY\Gamma(-Y)(\Mt + \Gt)\sin(Y\pi/2)$, which 
follows from $\cos\tfrac{(Y-1)\pi}{2} = \sin\tfrac{Y\pi}{2}$.

\medskip
We note that the integrands in \eqref{eq:R3_drift} and 
\eqref{eq:R3_binomial} use the large-$w$ approximation 
$w^2/(w^2 + 1/4) \approx 1$ in place of the exact LL 
denominator.  This is valid because 
the integrands concentrate at 
$w \sim (\sigma_Y t)^{-1/Y} \to \infty$ as $t \downarrow 0$.

\begin{fclaim}[Expansion beyond second order]
\label{prop:formal_d3}
For a CGMY process with $Y \in (1,2)$, the formal 
expansion of the ATM call price is
\begin{align}
    c(t,0) = d_1\,t^{1/Y} + d_2\,t + a_{2,1}\,t^{2-1/Y} 
    + a_{4,1}\,t^{4-3/Y} + a_{1,2}\,t^{2/Y} 
    + o\!\left(t^{\max(2-1/Y,\, 2/Y)}\right),
    \label{eq:formal_five_terms}
\end{align}
as $t \downarrow 0$, where $d_1$ and $d_2$ are as in 
Theorems~\ref{thm:CGMY_first_order} 
and~\ref{thm:second_order_CGMY}, and $a_{2,1}$, $a_{4,1}$, 
$a_{1,2}$ are given by \eqref{eq:a21_formula}, 
\eqref{eq:a41_formula}, and \eqref{eq:a12_formula}, 
respectively.  All three terms are present for all 
$Y \in (1,2)$; for $Y > 5/4$, the quartic drift term 
$a_{4,1}\,t^{4-3/Y}$ is absorbed into the remainder.

The effective ordering depends on $Y$:
\begin{itemize}
\item For $1 < Y < 5/4$: the third order is 
$a_{2,1}\,t^{2-1/Y}$, the fourth is $a_{4,1}\,t^{4-3/Y}$, 
and $a_{1,2}\,t^{2/Y}$ is fifth order.
\item For $5/4 < Y < 3/2$: the third order is 
$a_{2,1}\,t^{2-1/Y}$ and the fourth is $a_{1,2}\,t^{2/Y}$.
\item For $Y = 3/2$: $a_{2,1}$ and $a_{1,2}$ coalesce at 
$t^{4/3}$.
\item For $3/2 < Y < 2$: the third order is 
$a_{1,2}\,t^{2/Y}$ and the fourth is $a_{2,1}\,t^{2-1/Y}$.
\end{itemize}
The cubic drift exponent $3 - 2/Y$ is absent from the 
expansion because $(i\tilde{b}\,w)^3$ is purely imaginary.  
This shifts the effective bifurcation from the lattice 
prediction of $Y = 4/3$ (where $3 - 2/Y$ would cross 
$2/Y$) to $Y = 5/4$ (where the quartic drift $4 - 3/Y$ 
crosses $2/Y$); see Figure~\ref{fig:Y_order_selected}.
\end{fclaim}

\begin{rem}
The coefficient $a_{2,1}$ depends on $\tilde{b}^2$, the squared 
martingale drift, and $a_{4,1}$ is proportional to 
$-\tilde{b}^4$ (the quartic drift correction reduces the 
call price).  Since 
$\tilde{b} = -C\Gamma(-Y)[(M-1)^Y + (G+1)^Y - M^Y - G^Y]$, 
both encode the asymmetry of the tempering; they vanish 
identically when $G = M - 1$, i.e.\ when the shifted parameters 
satisfy $\Gt=\Mt$.

The coefficient $a_{1,2}$ depends on $C$, $Y$, and the sum 
$\Mt + \Gt = M + G$, but not on the difference 
$\Mt - \Gt = M - G - 1$.  The skewness enters $\beta_1$ only 
through its imaginary part, which drops out when taking real 
parts in \eqref{eq:f_diff_linear}.
\end{rem}

The success of the current method at these higher orders relies on 
the assumption that the Taylor expansion of the tempering terms can 
be integrated over the entire domain. However, because the drift term 
$tw$ grows without bound for large $w$, the formal Taylor expansion of 
$e^{t\delta(w)}$ is not uniformly valid on $[0, \infty)$. To elevate 
Formal Claim~\ref{prop:formal_d3} to a theorem, we must introduce 
a dynamic cutoff that separates the region where the Taylor expansion 
holds uniformly from the deep tail where the stable exponential dominates.

\subsection{Proof of the Higher-Order Expansion}
\label{subsec:high_order_proof}

We now prove the higher-order expansion~\eqref{eq:formal_five_terms}, 
corroborating the coefficients derived formally in Section~\ref{subsec:formal_d3}.

\begin{thm}
\label{thm:rigorous_high_order}
Let $\proc{X_t}$ be a CGMY process with $Y \in (1,2)$
satisfying the martingale condition
\eqref{eq:CGMY_martingale}. As $t \downarrow 0$, the
normalized ATM call price admits the asymptotic expansion
\begin{align}
    c(t,0) &= d_1 t^{1/Y} + d_2 t + a_{2,1} t^{2-1/Y} 
    + a_{4,1} t^{4-3/Y} + a_{1,2} t^{2/Y}
    + o(t^{\max\left(2-1/Y, 2/Y \right)}),
    \label{eq:rigorous_expansion}
\end{align}
for $Y \in (7/6,\,2)$ where the coefficients $d_1, d_2, a_{2,1}, a_{4,1},$ and
$a_{1,2}$ are given by \eqref{eq:cgmy_d1},
\eqref{eq:d2_formal}, \eqref{eq:a21_formula},
\eqref{eq:a41_formula}, and \eqref{eq:a12_formula},
respectively. For $Y \in (1,\,7/6]$,
additional even-order drift terms
$a_{2k,1}\,t^{2k-(2k-1)/Y}$ with closed-form
coefficients \eqref{eq:d_drift_general} enter
at or before order $t^{2/Y}$; the
remainder $o(t^{\max(2-1/Y,\,2/Y)})$ holds after
including all such terms through order
$\mathcal{K}(Y) = \left\lfloor 1/(2(Y-1)) \right\rfloor$.
\end{thm}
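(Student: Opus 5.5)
The plan is to start from the exact remainder representation \eqref{eq:R3_integral}, which is available once Theorems~\ref{thm:CGMY_first_order} and~\ref{thm:second_order_CGMY} are assumed. The first step is the exact algebraic identity
\[
f(t\psi_0)-f(t\theta_0) \;=\; \bigl(1-e^{t\theta_0(w)}\bigr)\,t\delta(w)\;-\;e^{t\theta_0(w)}\bigl(e^{t\delta(w)}-1-t\delta(w)\bigr),
\qquad \delta=\psi_0-\theta_0 .
\]
I write the numerator of \eqref{eq:R3_integral} as $w^2[f(t\psi_0)-f(t\theta_0)]-\tfrac14 f(t\theta_0)$ and divide by $w^2(w^2+\tfrac14)$. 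The isolated piece $-\tfrac14 f(t\theta_0)/(w^2(w^2+\frac14))$ is $O(t^2)$ by the inner-region bound of Section~\ref{subsec:formal_d3} near $0$ and by $|f(t\theta_0)|\le 2t\sigma_Y w^Y$ at infinity. Since $t^2=o(t^{\max(2-1/Y,2/Y)})$, this piece is negligible. The same argument disposes of the replacement of $1/(w^2+\frac14)$ by $1/w^2$ in the remaining pieces, because the difference is $O(w^{-4})$. I then introduce three regions: a fixed inner interval $[0,\Lambda]$, a core $[\Lambda,W_t]$, and a tail $[W_t,\infty)$, with dynamic cutoff $W_t=t^{-\gamma}$ and $1/Y<\gamma<1$. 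On the core $|t\delta(w)|\lesssim tw\le t^{1-\gamma}\to0$, so Taylor expansion in $t\delta$ is uniform there. On the tail $tW_t^Y=t^{1-\gamma Y}\to\infty$, so the stable factor is superpolynomially small.

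\textbf{Linear piece.} This term involves no Taylor expansion, so I treat it on all of $[\Lambda,\infty)$. I insert $\delta(w)=\beta_1w^{Y-1}+i\tilde b w+\kappa+O(w^{Y-2})$ from \eqref{eq:M_term}--\eqref{eq:G_term}.
\begin{itemize}
\item The drift contribution is purely imaginary against a real weight and drops under $\Re$.
\item The $\beta_1$ contribution is evaluated by \eqref{eq:laplace_identity} over $[0,\infty)$. The part over $[0,\Lambda]$ is $O(t^2)$, since $1-e^{-\sigma_Y tw^Y}\le \sigma_Y t w^Y$. This yields $a_{1,2}t^{2/Y}$.
\item The $\kappa$ and $O(w^{Y-2})$ contributions are bounded by $t\int(1-e^{-\sigma_Y tw^Y})w^{-2}\,dw\asymp t^{1+1/Y}$ and $t^{3/Y}$ respectively. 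Both are $o(t^{\max(2-1/Y,2/Y)})$.
\end{itemize}

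\textbf{Nonlinear piece on the core.} I expand $e^{t\delta}-1-t\delta=\sum_{n=2}^{N}(t\delta)^n/n!+O(|t\delta|^{N+1})$. For each $n$ I expand $\delta^n$ multinomially in $i\tilde bw$, $\beta_1w^{Y-1}$, $\kappa$ and the rest. The pure drift monomials $(i\tilde bw)^{n}$ are real only for even $n=2k$. Integrating against $e^{-\sigma_Y tw^Y}w^{2k-2}$ via the Laplace integral \eqref{eq:laplace_exp}, and its $w^{2k-2}$ analogue, extended to $[0,\infty)$, gives $a_{2k,1}t^{2k-(2k-1)/Y}$. This matches \eqref{eq:a21_formula}, \eqref{eq:a41_formula} and the general formula \eqref{eq:d_drift_general}.

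A monomial with $p$ drift factors and $q\ge1$ factors of $\beta_1 w^{Y-1}$ has order $t^{\,p+q+(1-p-q(Y-1))/Y}$, up to $\kappa$ or lower-order factors that only raise the exponent. Its real part is nonzero only through $\Im\beta_1$ cross terms. The smallest such exponent is $1+1/Y$ (from $p=q=1$), which beats the target. I will check this systematically with the exponent formula \eqref{eq:rho_exponent}.

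The truncation remainder is bounded by
\[
t^{N+1}\int_\Lambda^{W_t}w^{N-1}e^{-c\,tw^Y}\,dw\;\lesssim\; t^{\,N+1-N/Y}.
\]
This is $o(t^{2/Y})$ once $N$ is large, and the number of genuinely contributing drift orders before $t^{2/Y}$ is exactly those $k$ with $2k-(2k-1)/Y<2/Y$. That count gives $\mathcal K(Y)$, and in particular explains why only $k\le2$ is needed when $Y>7/6$: the exponent $6-5/Y$ is below $2/Y$ iff $Y<7/6$. Replacing $\int_\Lambda^{W_t}$ by $\int_0^\infty$ in the Laplace integrals costs $O(t^{2k})$ from $[0,\Lambda]$ and an exponentially small amount from the tail.

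\textbf{Tail and main obstacle.} On $[W_t,\infty)$ I use that $\Re\psi_0(w)\le -c\,w^Y$ for large $w$ (from \eqref{eq:real_theta}), so both $|e^{t\psi_0}|$ and $e^{t\theta_0}$ are at most $e^{-c\,tw^Y}$. This gives the crude bound $|e^{t\theta_0}(e^{t\delta}-1-t\delta)|\lesssim e^{-ctw^Y}(1+tw)$. Integrated against $w^{-2}$, it is $O(e^{-c t^{1-\gamma Y}/2})$, which is negligible.

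The main obstacle is the uniform bookkeeping on the core. One must show that every mixed monomial in $\delta^n$ (drift, binomial, $\kappa$ and the $O(w^{Y-2})$ binomial tail), after taking real parts, lands strictly beyond $t^{\max(2-1/Y,2/Y)}$ or among the listed terms. One must also check that the constraints on $\gamma$ coming from the truncation remainder and from the tail are compatible. I would first verify the compatibility of $\gamma\in(1/Y,1)$ with the choice $N\ge \mathcal K(Y)$. I would then organize the mixed terms by the lattice \eqref{eq:rho_exponent}, bounding each by its Laplace integral with $|\cdot|$ in place of $\Re$.
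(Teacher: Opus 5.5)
Your proposal is correct and follows the paper's architecture. It uses the same $f$, the same exact identity \eqref{eq:f_identity}, the same inner/core/tail split with cutoff $t^{-\rho}$, $\rho\in(1/Y,1)$, the same Laplace evaluations, and the same parity argument for odd drift powers. It departs from the paper in two places, both to your advantage.

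\emph{Linear piece.} You keep $t\delta(1-e^{t\theta_0})$ on all of $[\Lambda,\infty)$ and apply the dynamic cutoff only to the nonlinear piece, which carries $e^{t\theta_0}$ and is negligible on the tail. The paper instead bounds the whole tail integrand crudely by $O(tw^{Y-3})$. It then separately pays for extending the binomial Laplace integral past $t^{-\rho}$. These two costs are halves of the same Laplace tail: they cancel to leading order, but the paper bounds each by $O(t^{2/Y+\varepsilon(2-Y)})$ and shows only that this is $o(t^{2/Y})$. For $Y>3/2$ the theorem asks for $o(t^{2-1/Y})$. That would need $\varepsilon(2-Y)>2-3/Y$, which together with $\varepsilon<1-1/Y$ is possible only when $Y^2-Y-1<0$, i.e.\ $Y<(1+\sqrt5)/2$. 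Your global treatment leaves only the $\kappa$ contribution, of order $t^{1+1/Y}$, and the $O(w^{Y-2})$ binomial tail, of order $t^{\min(2,3/Y)}$ up to a logarithm at $Y=3/2$. Both are $o(t^{2-1/Y})$ for every $Y<2$, so your route delivers the stated remainder on the whole range.

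\emph{Arbitrary-order expansion.} Expanding $e^{t\delta}$ to arbitrary order $N$, instead of fourth order with a hand-decomposed Lagrange remainder, settles $Y\le 7/6$ in one stroke. All $a_{2k,1}$ come out of the same computation, and the truncation error $t^{N+1-N/Y}$ is beaten by taking $N$ large. The paper's fourth-order expansion isolates only $a_{6,1}$ through the residual and asserts the general pattern.

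Three estimates need repair, none affecting the conclusion.
\begin{enumerate}
\item The isolated piece $-\tfrac14 f(t\theta_0)/(w^2(w^2+\tfrac14))$ is not $O(t^2)$ for $Y>3/2$. The quadratic bound on a fixed $[0,\Lambda]$ plus the linear bound $|f(t\theta_0)|\le \sigma_Y t w^Y$ on $[\Lambda,\infty)$ gives only $O(t)$. You must split at the Laplace scale $w_*=(\sigma_Y t)^{-1/Y}$, as the paper does in its denominator-correction paragraph. This yields $O(t^2+t^{3/Y})$, with $t^2\log(1/t)$ at $Y=3/2$; the true order is $t^{3/Y}$ when $Y>3/2$. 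It is still negligible, since $3/Y>2-1/Y$ for $Y<2$.
\item The denominator replacement is not settled by ``the difference is $O(w^{-4})$'' alone, because a crude numerator bound again gives $O(t)$. Use the pointwise bounds $t\min(\sigma_Y tw^Y,1)\,\abs{\Re\delta}\lesssim t^2w^{2Y-1}$ for the linear piece, and $e^{-\sigma_Y tw^Y}t^2w^2$ for the nonlinear piece on the core. Against $w^{-4}$ both give $O(t^2)$.
\item The drift terms to retain are those with $2k-(2k-1)/Y\le 2/Y$, not with strict inequality. At $Y=(2k+1)/(2k)$, e.g.\ $Y=7/6$, the term sits exactly at $t^{2/Y}$; this is why the floor in $\mathcal K(Y)$ includes equality.
\end{enumerate}

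Finally, mixed monomials are not real only through $\Im\beta_1$: for instance $\Re\beta_1^2$ appears at $p=0$, $q=2$. This is harmless because you bound them in absolute value.
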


\begin{rem}[Drift series and uniform expansion]
\label{rem:small_Y_drift}
The even-order pure-drift terms
$a_{2k,1}\,t^{2k-(2k-1)/Y}$ given by
\eqref{eq:d_drift_general} for $k \geq 1$ sum up to
\[
    \Phi_{\mathrm{drift}}(t) :=
    \frac{1}{\pi}\int_0^{\infty}
    \frac{e^{-\sigma_Y tw^Y}\bigl(1 - \cos(\tilde{b}\,tw)\bigr)}
    {w^2}\,dw,
\]
which is a convergent oscillatory Laplace integral
for every $t > 0$ and $Y \in (1,2)$.
Equivalently, the expansion of
Theorem~\ref{thm:rigorous_high_order} can be stated
uniformly for all $Y \in (1,2)$ as
\begin{align}
    c(t,0) = d_1\,t^{1/Y} + d_2\,t
    + \sum_{k=1}^{\mathcal{K}(Y)} a_{2k,1}\,t^{2k-(2k-1)/Y}
    + a_{1,2}\,t^{2/Y}
    + o\!\left(t^{\max(2-1/Y,\, 2/Y)}\right),
    \label{eq:uniform_expansion}
\end{align}
where $\mathcal{K}(Y) = \left\lfloor 1/\bigl(2(Y-1)\bigr)
\right\rfloor \vee 2$ and the coefficients $a_{2k,1}$ are
given by \eqref{eq:d_drift_general}.  The
$k = 1$ and $k = 2$ terms in the sum give the
drift-squared and quartic drift coefficients $a_{2,1}$
and $a_{4,1}$ from the five-term expansion
\eqref{eq:formal_five_terms}.
The $k$-th drift exponent $2k - (2k-1)/Y$ falls at or
before $2/Y$ precisely when
$k \leq 1/\bigl(2(Y-1)\bigr)$; the floor--max ensures
that $a_{2,1}$ and $a_{4,1}$ are always retained (the
latter being harmlessly absorbed into the remainder when
$Y > 5/4$).  For $Y > 7/6$ one has
$\mathcal{K}(Y) = 2$, recovering the five-term expansion
\eqref{eq:formal_five_terms}; as $Y \to 1^{+}$,
$\mathcal{K}(Y) \to \infty$ and the drift terms
proliferate, though they decrease super-factorially in $k$
(by the $(2k)!$ denominator in \eqref{eq:d_drift_general})
and are all proportional to even powers of
$\tilde{b} = -C\Gamma(-Y)[(M-1)^Y + (G+1)^Y - M^Y - G^Y]$.
In particular, they vanish identically when $G = M-1$,
i.e.\ when $\Gt = \Mt$.
\end{rem}

\begin{proof}
Define the exact higher-order remainder
\[
    R_3(t) := c(t,0) - d_1 t^{1/Y} - d_2 t.
\]
By the Lipton--Lewis representation \eqref{eq:ll_atm_trans} and the
stable rescaling $v = t^{1/Y} w$, together with the function
$f(z) := 1 - e^z + z$ (which satisfies $f(z) = -z^2/2 + O(z^3)$
as $z \to 0$ and encodes the subtraction of the first two
asymptotic orders), the remainder admits the exact integral form
\begin{align}
    R_3(t) = \frac{1}{\pi}\,\RE{
    \int_0^\infty
    \frac{w^2\,f(t\psi_0(w)) - (w^2 + \tfrac{1}{4})\,f(t\tht{w})}
    {w^2(w^2 + \tfrac{1}{4})}\,dw}.
    \label{eq:R3_exact}
\end{align}
 
To extract the asymptotic terms, we partition the
integration domain $[0,\infty)$ into three regions.  Fix a large
constant $\Lambda > 0$ and a parameter
$\rho := 1/Y + \varepsilon$ with $\varepsilon > 0$ small enough
that $1 - 1/Y - \varepsilon > 0$ (possible since $Y > 1$).
We write
\[
    R_3(t) = R_3^{\mathrm{in}}(t) + R_3^{\mathrm{core}}(t)
    + R_3^{\mathrm{tail}}(t),
\]
with integrals over $I_{\mathrm{in}} := [0,\Lambda]$,
$I_{\mathrm{core}} := [\Lambda, t^{-\rho}]$, and
$I_{\mathrm{tail}} := [t^{-\rho}, \infty)$, respectively.
 
\medskip
\noindent\textbf{Step 1: The tail region.}
We bound the integrand in \eqref{eq:R3_exact} on
$I_{\mathrm{tail}}$ by working with the combined numerator.
Write the numerator as
\begin{align}
    N(t,w) &:= w^2\,f(t\psi_0(w))
    - \bigl(w^2 + \tfrac{1}{4}\bigr)\,f(t\tht{w}) \notag \\
    &= w^2\bigl[f(t\psi_0(w)) - f(t\tht{w})\bigr]
    - \tfrac{1}{4}\,f(t\tht{w}).
    \label{eq:N_decomp}
\end{align}
For the first piece, since $f(z) = 1 - e^z + z$, we have
\begin{align}
    f(t\psi_0) - f(t\tht{w})
    &= e^{t\tht{w}} - e^{t\psi_0(w)} + t\,\delta(w),
    \label{eq:f_diff_exact}
\end{align}
where $\delta(w) := \psi_0(w) - \tht{w}$.  For $w$ large,
$\RE{\psi_0(w)} \leq -\posc\, w^Y$ for some $\posc > 0$
(by \eqref{eq:real_psi}), so on $I_{\mathrm{tail}}$,
\[
    \abs{e^{t\psi_0(w)}} = e^{t\RE{\psi_0(w)}}
    \leq e^{-\posc\, t w^Y}
    \leq e^{-\posc\, t^{1-\rho Y}}
    = e^{-\posc\, t^{-\varepsilon Y}},
\]
which is super-exponentially small.  The same bound holds for
$e^{t\tht{w}}$.  Therefore $\abs{e^{t\tht{w}} - e^{t\psi_0(w)}}$ is
super-exponentially small on $I_{\mathrm{tail}}$.
 
The $t\,\delta(w)$ term in \eqref{eq:f_diff_exact} is not
exponentially small, but after taking real parts its growth is
controlled.  By Proposition~\ref{prop:A11_est},
$\RE{\delta(w)} = O(w^{Y-1})$ for large $w$
(the drift $i\tilde{b}\,w$ is purely imaginary and drops out
under $\Re$).  Hence
\[
    \RE{w^2\bigl[f(t\psi_0) - f(t\tht{w})\bigr]}
    = O\bigl(e^{-\posc\, t^{-\varepsilon Y}}\bigr)
    + t\,\RE{\delta(w)}\,w^2
    = O\bigl(t\,w^{Y+1}\bigr),
\]
and dividing by $w^2(w^2 + \tfrac{1}{4}) \geq w^4/2$ gives an
integrand bounded by $O(t\,w^{Y-3})$ on $I_{\mathrm{tail}}$.
 
For the second piece in \eqref{eq:N_decomp},
$f(t\tht{w}) = 1 - e^{t\tht{w}} + t\tht{w}$.  Since
$e^{t\tht{w}}$ is super-exponentially small and
$\tht{w} = -\sigma_Y w^Y$, we have
$f(t\tht{w}) = 1 - \sigma_Y t w^Y + O(e^{-\posc\, t^{-\varepsilon Y}})$.
Hence $\tfrac{1}{4}\abs{f(t\tht{w})}/(w^2(w^2+\tfrac{1}{4}))
= O(tw^{Y-4}) + O(w^{-4})$.
 
Combining both pieces, the real part of the integrand on
$I_{\mathrm{tail}}$ is bounded by
$\posc\bigl(t\,w^{Y-3} + w^{-4}\bigr)$.
Integrating over $[t^{-\rho},\infty)$ and using
$Y - 3 \in (-2,-1)$:
\begin{align}
    \abs{R_3^{\mathrm{tail}}(t)}
    &\leq \frac{\posc}{\pi}\left(
    t \int_{t^{-\rho}}^{\infty} w^{Y-3}\,dw
    + \int_{t^{-\rho}}^{\infty} w^{-4}\,dw \right) \notag \\
    &= \frac{\posc}{\pi}\left(
    \frac{t\,(t^{-\rho})^{Y-2}}{2-Y}
    + \frac{(t^{-\rho})^{-3}}{3} \right) \notag \\
    &= O\!\left(t^{1+\rho(2-Y)}\right)
    + O\!\left(t^{3\rho}\right).
    \label{eq:tail_bound}
\end{align}
Since $\rho = 1/Y + \varepsilon$, the first exponent is
$1 + (1/Y + \varepsilon)(2-Y) = 2/Y + \varepsilon(2-Y)$
and the second is $3(1/Y + \varepsilon)$. The first exceeds 
$2/Y$ and the second exceeds $\max(2 - 1/Y,\, 2/Y)$ for all $Y \in (1,2)$,
so $R_3^{\mathrm{tail}}(t) = o(t^{2/Y})$.  Since the
drift-squared integrand involves the factor
$e^{-\sigma_Y t w^Y}$, which is super-exponentially small
on $I_{\mathrm{tail}}$, the tail does not contribute to
the coefficients $a_{2,1}$ or $a_{4,1}$.
 
\medskip
\noindent\textbf{Step 2: The inner region.}
On the fixed bounded interval $[0,\Lambda]$, the functions
$\psi_0(w)$ and $\tht{w}$ are uniformly bounded, so
$t\psi_0(w) \to 0$ and $t\tht{w} \to 0$ uniformly as
$t \downarrow 0$.  We split the integrand by partial fractions:
\[
    \frac{w^2\,f(t\psi_0(w)) - (w^2+\tfrac{1}{4})\,f(t\tht{w})}
    {w^2(w^2+\tfrac{1}{4})}
    = \frac{f(t\psi_0(w))}{w^2+\tfrac{1}{4}}
    - \frac{f(t\tht{w})}{w^2}.
\]
For the first term, $f(z) = -z^2/2 + O(z^3)$ gives
$f(t\psi_0(w)) = O(t^2)$ uniformly on $[0,\Lambda]$, and
$w^2 + \tfrac{1}{4} \geq \tfrac{1}{4}$, so this piece
contributes $O(t^2)$ to the integral.  For the second term,
$\tht{w} = -\sigma_Y w^Y$ gives
$f(t\tht{w}) = -t^2 \sigma_Y^2 w^{2Y}/2 + O(t^3 w^{3Y})$,
whence $f(t\tht{w})/w^2 = O(t^2 w^{2Y-2})$.  Since
$2Y - 2 > 0$ for $Y \in (1,2)$, this is integrable on
$[0,\Lambda]$, and its integral is $O(t^2)$.  Combining:
\[
    R_3^{\mathrm{in}}(t) = O(t^2).
\]
Since $2 > \max(2-1/Y,\, 2/Y)$ for $Y \in (1,2)$, the inner
region contributes only to the remainder.
 
\medskip
\noindent\textbf{Step 3: The core Laplace region.}
On the expanding interval $[\Lambda, t^{-\rho}]$, we write
$\psi_0(w) = \tht{w} + \delta(w)$ where
$\delta(w) = i\tilde{b}\,w + \beta_1\,w^{Y-1} + \kappa
+ O(w^{Y-2})$
collects all the tempering corrections.  We decompose the
numerator \eqref{eq:N_decomp} as
\begin{align}
    N(t,w) &= w^2\bigl[f(t\psi_0) - f(t\tht{w})\bigr]
    - \tfrac{1}{4}\,f(t\tht{w}).
    \label{eq:N_core}
\end{align}
For the first term, using \eqref{eq:f_diff_exact} and the
factorization $e^{t\psi_0} = e^{t\tht{w}}\,e^{t\delta}$, we
obtain the exact identity
\begin{align}
    f(t\psi_0) - f(t\tht{w})
    &= t\,\delta\bigl(1 - e^{t\tht{w}}\bigr)
    - e^{t\tht{w}}\bigl(e^{t\delta} - 1 - t\,\delta\bigr).
    \label{eq:f_identity}
\end{align}
We now expand each factor.  Since
$w \leq t^{-\rho}$ and $\abs{\delta(w)} \leq \posc\,w$ for
$w \geq \Lambda$, the quantity
$\abs{t\,\delta(w)} \leq \posc\,t^{1-\rho}
= \posc\,t^{1-1/Y-\varepsilon} \to 0$
uniformly on $I_{\mathrm{core}}$.  Taylor's theorem with
Lagrange remainder therefore gives, for each fixed $w$,
\begin{align}
    e^{t\delta} - 1 - t\delta
    &= \frac{(t\delta)^2}{2}
    + \frac{(t\delta)^3}{6}
    + \frac{(t\delta)^4}{24}
    + \frac{(t\delta)^5}{120}\,e^{\xi},
    \label{eq:taylor_exp}
\end{align}
where $\abs{\xi} \leq \abs{t\,\delta}$.  Since
$\abs{t\,\delta} \to 0$ uniformly, $e^{\xi}$ is uniformly
bounded, and the fifth-order remainder is
$O(t^5\,w^5)$.
 
We substitute \eqref{eq:taylor_exp} into
\eqref{eq:f_identity}, multiply by $w^2$, divide by
$w^2(w^2 + \tfrac{1}{4})$, and take real parts.  In the
core region, $w \geq \Lambda$ so
$w^2/(w^2 + \tfrac{1}{4}) = 1 + O(w^{-2})$, and we
replace the exact denominator with $w^{-2}$ at the cost
of an error $O(w^{-4})$ in the integrand.  Specifically,
\[
    \frac{w^2}{w^2(w^2+\tfrac{1}{4})}
    = \frac{1}{w^2+\tfrac{1}{4}}
    = \frac{1}{w^2} - \frac{1}{4w^2(w^2+\tfrac{1}{4})},
\]
and the error introduced by this replacement is bounded
by $\posc\,w^{-4}$ times the numerator, which contributes
$O(t^2)$ to the integral (see Step~4 below).
 
We now identify each asymptotic contribution from
the leading terms of \eqref{eq:taylor_exp}, using
$\delta(w) = i\tilde{b}\,w + \beta_1\,w^{Y-1}
+ \kappa + O(w^{Y-2})$.
 
\medskip
\noindent\emph{Linear term
($a_{1,2}\,t^{2/Y}$).}
The first term on the right-hand side of
\eqref{eq:f_identity} gives
\[
    \RE{t\,\delta(w)\,(1 - e^{t\tht{w}})}
    = t\,(1 - e^{-\sigma_Y tw^Y})\,\RE{\delta(w)}.
\]
The drift component $i\tilde{b}\,w$ is purely imaginary
and does not contribute.  The leading real part is
$\RE{\beta_1}\,w^{Y-1}$, producing the integral
\[
    \frac{t\,\RE{\beta_1}}{\pi}
    \int_{\Lambda}^{t^{-\rho}}
    (1 - e^{-\sigma_Y tw^Y})\,w^{Y-3}\,dw.
\]
By the Laplace identity \eqref{eq:laplace_identity}
with $\alpha = 1 - 2/Y \in (-1,0)$ and
$\lambda = \sigma_Y t$, the full-line integral evaluates to
$-(\sigma_Y t)^{2/Y-1}\,\Gamma(1-2/Y)/Y$, giving a
contribution at order $t^{2/Y}$ with coefficient
$a_{1,2}$ as in \eqref{eq:a12_formula}.  The constant
$\kappa$ in $\delta$ produces a Laplace integral at order
$t^{1+1/Y}$, which exceeds both $t^{2/Y}$ and $t^{2-1/Y}$
for $Y > 1$ and hence enters the remainder.
 
\medskip
\noindent\emph{Quadratic term ($a_{2,1}\,t^{2-1/Y}$).}
The quadratic part of \eqref{eq:taylor_exp} gives
\[
    -\RE{e^{t\tht{w}}\,\tfrac{1}{2}(t\delta)^2}
    = -\tfrac{1}{2}\,e^{-\sigma_Y tw^Y}\,
    t^2\,\RE{\delta(w)^2}.
\]
For large $w$, the dominant real contribution from
$\delta^2$ is $(i\tilde{b}\,w)^2 = -\tilde{b}^2\,w^2$,
which gives
\[
    \frac{\tilde{b}^2\,t^2}{2\pi}
    \int_{\Lambda}^{t^{-\rho}}
    e^{-\sigma_Y tw^Y}\,dw.
\]
By \eqref{eq:laplace_exp}, the full-line integral is
$(\sigma_Y t)^{-1/Y}\,\Gamma(1/Y)/Y$, producing a
contribution at order $t^{2-1/Y}$ with coefficient
$a_{2,1}$ as in \eqref{eq:a21_formula}.
 
The cross-term
$2(i\tilde{b}\,w)(\beta_1\,w^{Y-1})$ has real part
$-2\tilde{b}\,\Im(\beta_1)\,w^{Y}$.  Integrated against
$e^{-\sigma_Y tw^Y}$ and the $w^{-2}$ denominator
factor, this produces a contribution at exponent
$1 + 1/Y$, which resides in the remainder.
 
\medskip
\noindent\emph{Cubic term (vanishing).}
The cubic part of \eqref{eq:taylor_exp} contributes
\[
    -\RE{e^{t\tht{w}}\,\tfrac{1}{6}(t\delta)^3}.
\]
The dominant pure-drift component
$(i\tilde{b}\,w)^3 = -i\tilde{b}^3\,w^3$ is purely
imaginary.  Since $e^{-\sigma_Y tw^Y}$ is real, the
real part of this term vanishes identically.  The
remaining cross-terms produce contributions at exponent
$2$ or higher and are absorbed into the remainder.
 
\medskip
\noindent\emph{Quartic term ($a_{4,1}\,t^{4-3/Y}$).}
The quartic part contributes
\[
    -\RE{e^{t\tht{w}}\,\tfrac{1}{24}(t\delta)^4}.
\]
The dominant real contribution is
$(i\tilde{b}\,w)^4 = \tilde{b}^4\,w^4$, giving
\[
    -\frac{\tilde{b}^4\,t^4}{24\pi}
    \int_{\Lambda}^{t^{-\rho}}
    e^{-\sigma_Y tw^Y}\,w^{2}\,dw.
\]
The full-line integral evaluates to
$(\sigma_Y t)^{-3/Y}\,\Gamma(3/Y)/Y$, producing a
contribution at order $t^{4-3/Y}$ with coefficient
$a_{4,1}$ as in \eqref{eq:a41_formula}.
 
\medskip
\noindent\emph{Denominator correction.}
The second term in \eqref{eq:N_core},
$-\tfrac{1}{4}\,f(t\tht{w})$, requires care because
the Taylor bound $\abs{f(t\tht{w})} \leq \posc\,t^{2}w^{2Y}$
holds only where $\sigma_{Y}tw^{Y}$ is bounded, and the
resulting integrand $t^{2}w^{2Y-4}$ fails to be in
$L^{1}[\Lambda,\infty)$ when $Y > 3/2$.  We use instead the
elementary bound $\abs{f(-x)} \leq \min(x^{2}/2,\,x)$ for
$x \geq 0$ and split the core at the Laplace scale
$w_{*} := (\sigma_{Y}t)^{-1/Y}$.  On $[\Lambda,\,w_{*}]$,
where $\sigma_{Y}tw^{Y} \leq 1$, the quadratic bound gives
\[
    \int_{\Lambda}^{w_{*}}
    \frac{\posc\,t^{2}\,w^{2Y}}{w^{2}(w^{2}+\tfrac{1}{4})}\,dw
    \;\leq\;
    \posc\,t^{2}\left(1 + w_{*}^{2Y-3}\right)
    \;=\;
    O\!\left(t^{2}\right) + O\!\left(t^{3/Y}\right).
\]
(The integral $\int_{\Lambda}^{w_{*}} w^{2Y-4}\,dw$ is dominated by $w_{*}^{2Y-3}$
when $Y > 3/2$ and by $\Lambda^{2Y-3}$ when $Y < 3/2$; both contributions
are safely absorbed since $\min(2,\,3/Y) > \max(2/Y,\,2-1/Y)$
for all $Y \in (1,2)$.)
On $[w_{*},\,t^{-\rho}]$, where $\sigma_{Y}tw^{Y} \geq 1$,
the linear bound $\abs{f(-x)} \leq x$ gives
\[
    \int_{w_{*}}^{t^{-\rho}}
    \frac{\posc\,t\,w^{Y}}{w^{2}(w^{2}+\tfrac{1}{4})}\,dw
    \;\leq\;
    \posc\,t\,w_{*}^{Y-3}
    \;=\;
    O\!\left(t^{3/Y}\right).
\]
Since $3/Y > \max(2-1/Y,\,2/Y)$ for all $Y \in (1,2)$
(as $4/Y > 2$ iff $Y < 2$), both contributions are
absorbed into the remainder.
 
\medskip
\noindent\emph{Taylor remainder.}
The fifth-order remainder in \eqref{eq:taylor_exp} is
$R_5 = (t\delta)^5\,e^{\xi}/120$ with $\abs{\xi}\leq\abs{t\delta}$.
Writing $\delta = i\tilde{b}\,w + \delta_{\mathrm{sub}}$
where $\delta_{\mathrm{sub}} := \beta_1\,w^{Y-1} + \kappa
+ O(w^{Y-2})$ collects the non-drift corrections, we
decompose as
\[
    R_5 \;=\; e^{it\tilde{b}w}\bigl(e^{t\delta_{\mathrm{sub}}}
    - P_4(t\delta_{\mathrm{sub}})\bigr)
    \;+\;\bigl(e^{it\tilde{b}w}\,P_4(t\delta_{\mathrm{sub}})
    - P_4(t\delta)\bigr),
\]
where $P_4(z) := \sum_{k=0}^{4} z^k/k!$ denotes the
fourth-order partial exponential.
 
\smallskip
\noindent\emph{Sub-leading remainder.}
Since $\abs{\delta_{\mathrm{sub}}(w)} \leq \posc\,w^{Y-1}$
for $w \geq \Lambda$ and $\abs{e^{it\tilde{b}w}} = 1$, the
first term is such that
\[
\abs{e^{it\tilde{b}w}(e^{t\delta_{\mathrm{sub}}} -
P_4(t\delta_{\mathrm{sub}}))} \leq
\posc\,\abs{t\delta_{\mathrm{sub}}}^5 \leq
\posc\,t^5\,w^{5(Y-1)}.
\]
Integrating against $e^{-\sigma_Y tw^Y}/w^2$,
the integral
\[
\int_{\Lambda}^{\infty} w^{5(Y-1)-2}\,
e^{-\sigma_Y tw^Y}\,dw
=
\begin{cases}
O\!\left(t^{-(5Y-6)/Y}\right), & Y>6/5,\\[2mm]
O(1), & 1<Y\le 6/5,
\end{cases}
\]
produces a contribution of order $t^{5-(5Y-6)/Y} = t^{6/Y}$.
Since $6/Y > 2/Y$ for all $Y > 0$, this term is
absorbed into the remainder for every $Y \in (1,2)$.
 
\smallskip
\noindent\emph{Cross terms.}
The second term,
$e^{it\tilde{b}w}\,P_4(t\delta_{\mathrm{sub}})
- P_4(it\tilde{b}w + t\delta_{\mathrm{sub}})$,
collects monomials
$(it\tilde{b}w)^j\,(t\delta_{\mathrm{sub}})^m/(j!\,m!)$
with $j + m \geq 5$.  We separate the pure-drift
terms ($m = 0$, $j \geq 5$) from the mixed terms
($m \geq 1$, $j \geq 5 - m$).
 
For the mixed terms: each factor of
$t\delta_{\mathrm{sub}}$ contributes $O(tw^{Y-1})$
and each factor of $it\tilde{b}w$ contributes
$O(tw)$, giving an integrand of order
$t^{j+m}\,w^{j + m(Y-1)-2}\,e^{-\sigma_Y tw^Y}$.
By a standard Laplace calculation, the resulting
exponent is $j(Y-1)/Y + (m+1)/Y$.  The minimum over
$j \geq 5 - m$, $m \geq 1$ occurs at $j = 4$,
$m = 1$, giving the order $t^{4(Y-1)/Y + 2/Y} =
t^{4-2/Y}$.  Since $4 - 2/Y > 2/Y$ whenever $Y > 1$,
all mixed terms are absorbed.
 
\smallskip
\noindent\emph{Pure-drift residual.}
The pure-drift terms $\sum_{j \geq 5}
(it\tilde{b}w)^j/j!$ sum to
$e^{it\tilde{b}w} - P_4(it\tilde{b}w)$, whose real
part is
\[
    \cos(\tilde{b}\,tw) - 1 + \tfrac{(\tilde{b}tw)^2}{2}
    - \tfrac{(\tilde{b}tw)^4}{24}.
\]
By the alternating-series bound for cosine, this is
bounded in absolute value by
$(\tilde{b}tw)^6/720$, giving a Laplace contribution
of order $t^{6-5/Y}$.  Since $6 - 5/Y > 2/Y$ is equivalent
to $Y > 7/6$, the pure-drift residual is absorbed
into the remainder for $Y > 7/6$.
 
For $Y \leq 7/6$, the sixth-order drift term
at order $t^{6-5/Y}$ lies strictly between $t^{4-3/Y}$
(the quartic drift, already included) and $t^{2/Y}$ (the
binomial term).  Its coefficient is
\begin{align}
    a_{6,1} = \frac{\tilde{b}^6\,\sigma_Y^{-5/Y}}{720\,\pi\,Y}
    \,\Gamma\!\left(\frac{5}{Y}\right),
    \label{eq:a61_formula}
\end{align}
which follows the same pattern as $a_{2,1}$ and
$a_{4,1}$. More generally, the $k$-th even-order drift produces a 
term $t^{2k-(2k-1)/Y}$ with coefficient 
\begin{align}
    a_{2k,1} = \frac{(-1)^{k+1}\,\tilde{b}^{2k}\,
    \sigma_Y^{-(2k-1)/Y}}{(2k)!\,\pi\,Y}
    \,\Gamma\!\left(\frac{2k-1}{Y}\right),
    \qquad k \geq 1.
    \label{eq:d_drift_general}
\end{align}
The exponent in the term $a_{2k,1}\,t^{2k-(2k-1)/Y}$ is such that
$2k-(2k-1)/Y > 2/Y$ if and only if $Y > (2k+1)/(2k)$.
Thus, for any fixed $Y \in (1,2)$, only finitely many
drift terms fall below $t^{2/Y}$; their number is
$\mathcal{K}(Y) = \left\lfloor 1/(2(Y-1)) \right\rfloor$,
which grows without bound as $Y \to 1^+$.  Since $6-5/Y > 4-3/Y$ 
for all $Y > 1$ (as $2 > 2/Y$), each additional drift term is
strictly smaller than the preceding one; and since all
non-drift contributions are absorbed at order
$t^{6/Y}$ or $t^{4-2/Y}$, the only source of terms
at or before order $t^{2/Y}$ is this drift series.
 
For $Y > 7/6$, no additional drift terms are needed
and the remainder is
$o(t^{\max(2-1/Y,\,2/Y)})$ as claimed.  For
$Y \leq 7/6$, the expansion remains valid with
the remainder replaced by
$o(t^{\min(2/Y,\,2\mathcal{K}+2-(2\mathcal{K}+1)/Y)})$,
where $\mathcal{K} = \mathcal{K}(Y)$ and the drift coefficients
$a_{6,1}, a_{8,1}, \ldots, a_{2\mathcal{K},1}$ are included
as additional explicit terms via
\eqref{eq:d_drift_general}.
 
\medskip
\noindent\textbf{Step 4: Domain extension.}
The Laplace integrals computed in Step~3 are over the
truncated interval $[\Lambda, t^{-\rho}]$; we now extend
them to $(0,\infty)$ and bound the errors.
 
\emph{Lower extension $[0,\Lambda]$.}
On this bounded interval, all integrands from Step~3 are
smooth and bounded.  For example, the drift-squared
integrand satisfies
$\int_0^{\Lambda} t^2\,\tilde{b}^2\,e^{-\sigma_Y tw^Y}\,dw
\leq t^2\,\tilde{b}^2\,\Lambda = O(t^2)$.
Similarly, the binomial integrand satisfies
$\int_0^{\Lambda} t\,(1 - e^{-\sigma_Y tw^Y})\,w^{Y-3}\,dw
\leq t \int_0^{\Lambda} \sigma_Y t w^{2Y-3}\,dw
= O(t^2)$,
using $1 - e^{-x} \leq x$ for $x \geq 0$.  Since
$2 > \max(2-1/Y,\,2/Y)$, these are in the remainder.
 
\emph{Upper extension $[t^{-\rho},\infty)$.}
For the terms involving $e^{-\sigma_Y tw^Y}$ (the
drift-squared and quartic integrands), the extension
adds a super-exponentially small quantity by the
argument of Step~1, since
$e^{-\sigma_Y tw^Y} \leq e^{-\sigma_Y t^{-\varepsilon Y}}$
on $[t^{-\rho},\infty)$.
 
For the binomial integrand, which involves the factor
$1 - e^{-\sigma_Y tw^Y} \leq 1$, the extension adds
\[
    t\,\abs{\RE{\beta_1}}
    \int_{t^{-\rho}}^{\infty} w^{Y-3}\,dw
    = \frac{t\,\abs{\RE{\beta_1}}}{2-Y}\,
    (t^{-\rho})^{Y-2}
    = O\!\left(t^{2/Y + \varepsilon(2-Y)}\right).
\]
Since $\varepsilon(2-Y) > 0$, this exceeds $t^{2/Y}$
and is absorbed into the remainder.
 
\emph{Denominator replacement.}
Replacing $1/(w^2 + \tfrac{1}{4})$ with $1/w^2$ on
$I_{\mathrm{core}}$ introduces an error of
$1/(4w^2(w^2+\tfrac{1}{4})) \leq \posc\,w^{-4}$ in the
integrand.  For the drift-squared term, the additional
error is bounded by
$\posc\,t^2 \int_{\Lambda}^{\infty}
e^{-\sigma_Y tw^Y}\,w^{-2}\,dw = O(t^2)$,
and similarly for the other terms.  As before, $O(t^2)$
is in the remainder.
 
\medskip
\noindent\textbf{Conclusion.}
Combining Steps~1--4, for $Y>7/6$ the remainder $R_3(t)$ equals the
sum of the three extracted Laplace contributions, at
orders $t^{2-1/Y}$ (coefficient $a_{2,1}$),
$t^{4-3/Y}$ (coefficient $a_{4,1}$), and $t^{2/Y}$
(coefficient $a_{1,2}$), plus an error of $o(t^{\max(2-1/Y,\,2/Y)})$.
For $Y\le 7/6$, one must also include the additional drift terms
$a_{6,1}, a_{8,1}, \ldots, a_{2\mathcal{K},1}$ from \eqref{eq:d_drift_general}
before taking the remainder.
\end{proof}

\subsection{Numerical verification of the higher-order coefficients}
\label{subsec:formal_numerical}

We verify the higher-order coefficients $d_2$, $a_{2,1}$, and $a_{1,2}$ 
numerically.  For $d_2$, the integral formula 
\eqref{eq:formal_d2} was already compared with the closed-form 
expression $d_2^{FL}$ from \cite{fl_houdre_cgmy} in 
Figure~\ref{fig:diffs}, confirming agreement to the tolerance 
of the quadrature ($\sim 10^{-7}$).

For $a_{2,1}$ and $a_{1,2}$, we verify the formulas 
\eqref{eq:a21_formula} and \eqref{eq:a12_formula} by two 
independent methods, both based on deterministic quadrature.

\medskip\noindent\textbf{Method 1: Laplace integrals.}
We evaluate the Laplace integrals
\begin{align}
    L_{2,1}(t) &:= \frac{\tilde{b}^2\,t^2}{2\pi}\,
    \int_0^\infty 
    \frac{w^{2}\,e^{-\sigma_Y t w^Y}}
    {w^2 + \tfrac{1}{4}}\,dw,
    \label{eq:L21_laplace} \\
    L_{1,2}(t) &:= \frac{\Re{\beta_1}}{\pi}\,t\,
    \int_0^\infty 
    \frac{w^{Y-1}\bigl(1 - e^{-\sigma_Y t w^Y}\bigr)}
    {w^2 + \tfrac{1}{4}}\,dw,
    \label{eq:L12_laplace}
\end{align}
which retain the exact denominator $w^2 + \tfrac{1}{4}$ from the 
Lipton--Lewis representation rather than the large-$w$ 
approximations used in the derivation, and extract 
$L_{2,1}(t)/t^{2-1/Y}$ and $L_{1,2}(t)/t^{2/Y}$.  As 
$t \downarrow 0$, the factors $e^{-\sigma_Y tw^Y}$ and 
$(1 - e^{-\sigma_Y tw^Y})$ concentrate at 
$w \sim (\sigma_Y t)^{-1/Y} \to \infty$ where 
$(w^2 + \tfrac{1}{4})^{-1} \to w^{-2}$, so the ratios should 
converge to $a_{2,1}$ and $a_{1,2}$, respectively.

Table~\ref{tab:a21_laplace} confirms this for $a_{2,1}$, using 
parameter sets with $Y < 3/2$ where the drift-squared term is 
the leading third-order correction.

\begin{table}[ht]
\centering
\begin{tabular}{c c c c}
\hline
$t$ & $C{=}1,G{=}3,M{=}5$ & $C{=}1,G{=}3,M{=}5$ 
    & $C{=}1,G{=}3,M{=}5$ \\
    & $Y{=}1.2$ & $Y{=}1.3$ & $Y{=}1.4$ \\
\hline
$a_{2,1}$ (formula) & $0.008981$ & $0.015382$ & $0.027278$ \\
\hline
$10^{-2}$ & $0.95748$ & $0.94573$ & $0.93216$ \\
$10^{-3}$ & $0.99347$ & $0.99031$ & $0.98620$ \\
$10^{-4}$ & $0.99903$ & $0.99834$ & $0.99731$ \\
$10^{-5}$ & $0.99986$ & $0.99972$ & $0.99948$ \\
$10^{-6}$ & $0.99998$ & $0.99995$ & $0.99990$ \\
$10^{-7}$ & $1.00000$ & $0.99999$ & $0.99998$ \\
\hline
\end{tabular}
\caption{Ratio $L_{2,1}(t)/\bigl(a_{2,1}\,t^{2-1/Y}\bigr)$ for 
various parameter sets with $Y < 3/2$.  All ratios 
converge to $1$, confirming \eqref{eq:a21_formula}.}
\label{tab:a21_laplace}
\end{table}

Table~\ref{tab:a12_laplace} provides the analogous verification 
for $a_{1,2}$: for each parameter set, $L_{1,2}(t)/t^{2/Y}$ 
converges to the formula value with ratio approaching $1.000000$ 
as $t$ decreases.

\begin{table}[ht]
\centering
\begin{tabular}{c c c c c}
\hline
$t$ & $C{=}1,G{=}3,M{=}5$ & $C{=}1,G{=}3,M{=}5$ 
    & $C{=}1,G{=}3,M{=}5$ & $C{=}2,G{=}2,M{=}3$ \\
    & $Y{=}1.7$ & $Y{=}1.8$ & $Y{=}1.9$ & $Y{=}1.75$ \\
\hline
$a_{1,2}$ (formula) & $24.437$ & $29.730$ & $49.359$ & $36.297$ \\
\hline
$10^{-2}$ & $0.98782$ & $0.99140$ & $0.99396$ & $0.98291$ \\
$10^{-3}$ & $0.99794$ & $0.99864$ & $0.99905$ & $0.99710$ \\
$10^{-4}$ & $0.99967$ & $0.99980$ & $0.99987$ & $0.99956$ \\
$10^{-5}$ & $0.99995$ & $0.99997$ & $0.99998$ & $0.99994$ \\
$10^{-6}$ & $0.99999$ & $1.00000$ & $1.00000$ & $0.99999$ \\
$10^{-7}$ & $1.00000$ & $1.00000$ & $1.00000$ & $1.00000$ \\
\hline
\end{tabular}
\caption{Ratio $L_{1,2}(t)/\bigl(a_{1,2}\,t^{2/Y}\bigr)$ for 
various parameter sets and values of $t$.  All ratios 
converge to $1$ as $t \downarrow 0$, confirming the formula 
\eqref{eq:a12_formula}.}
\label{tab:a12_laplace}
\end{table}

\medskip\noindent\textbf{Method 2: Full Lipton--Lewis 
remainder.} We compute the remainder $R_3(t)$ directly from the 
combined-integrand formula \eqref{eq:R3_integral}, which avoids 
catastrophic cancellation by using $f(x) = 1 - e^x + x$ to 
subtract the first two orders exactly, and then verify the 
expansion \eqref{eq:formal_five_terms}, established 
in Theorem~\ref{thm:rigorous_high_order}, by subtracting the
higher-order terms.

For $Y > 3/2$, the leading third-order term is 
$a_{1,2}\,t^{2/Y}$, and we report 
$R_3(t)/(a_{1,2}\,t^{2/Y})$; this should converge to $1$.

For $Y < 3/2$, more care is needed.  The leading 
third-order term is $a_{2,1}\,t^{2-1/Y}$, but the \emph{next} 
candidate depends on $Y$: the formal exponent lattice 
places the cubic drift correction at $t^{3-2/Y}$, which 
lies below $t^{2/Y}$ when $Y < 4/3$ and above it when 
$Y > 4/3$.  Since $(i\tilde{b}\,w)^3$ is purely imaginary, 
the coefficient at $t^{3-2/Y}$ vanishes, and the actual 
next nonzero term after $a_{2,1}\,t^{2-1/Y}$ is 
$a_{4,1}\,t^{4-3/Y}$ for $Y < 5/4$ or $a_{1,2}\,t^{2/Y}$ for 
$Y > 5/4$.  We confirm this numerically by defining
\[
    R_4(t) := R_3(t) - a_{2,1}\,t^{2-1/Y}
\]
and checking that $R_4(t)/t^{3-2/Y} \to 0$ for $Y < 4/3$ 
(Table~\ref{tab:cubic_vanishes}), and that 
$R_4(t)/(a_{1,2}\,t^{2/Y}) \to 1$ for $5/4<Y<3/2$; for $1<Y<5/4$ 
one must instead check
\[
R_5(t)/(a_{1,2}\,t^{2/Y}) \to 1,
\]
where $R_5(t):=R_3(t)-a_{2,1}t^{2-1/Y}-a_{4,1}t^{4-3/Y}$ (see Table~\ref{tab:both_terms}). 

\begin{table}[ht]
\centering
\begin{tabular}{c c c c}
\hline
$t$ & $C{=}1,G{=}3,M{=}5$ & $C{=}1,G{=}3,M{=}5$ 
    & $C{=}1,G{=}3,M{=}5$ \\
    & $Y{=}1.15$ & $Y{=}1.2$ & $Y{=}1.3$ \\
\hline
$10^{-2}$ & $3.12$ & $5.12$ & $13.0$ \\
$10^{-3}$ & $1.85$ & $3.98$ & $16.4$ \\
$10^{-4}$ & $0.903$ & $2.56$ & $17.5$ \\
$10^{-5}$ & $0.385$ & $1.45$ & $16.8$ \\
$10^{-6}$ & $0.151$ & $0.767$ & $15.2$ \\
$10^{-7}$ & $0.056$ & $0.386$ & $13.3$ \\
\hline
\end{tabular}
\caption{$R_4(t)/t^{3-2/Y}$ for $Y < 4/3$.  The convergence 
toward zero confirms that the cubic drift coefficient at 
$t^{3-2/Y}$ vanishes (as expected, since 
$(i\tilde{b}\,w)^3$ is purely imaginary).  For $Y = 1.3$ 
the gap $2/Y - (3-2/Y) = 4/Y - 3 \approx 0.08$ is 
small, so the approach to zero is very slow.}
\label{tab:cubic_vanishes}
\end{table}

\begin{table}[ht]
\centering
\begin{tabular}{c c c c c c}
\hline
$t$ & $R_5/(a_{1,2}\,t^{2/Y})$
    & \multicolumn{2}{c}{$R_4(t)/(a_{1,2}\,t^{2/Y})$} 
    & \multicolumn{2}{c}{$R_3(t)/(a_{1,2}\,t^{2/Y})$} \\
    & $Y{=}1.2$ & $Y{=}1.3$ & $Y{=}1.4$ 
    & $Y{=}1.7$ & $Y{=}1.9$ \\
\hline
$10^{-2}$ & $0.242$ & $0.380$ & $0.493$ & $0.742$ & $0.886$ \\
$10^{-3}$ & $0.406$ & $0.572$ & $0.684$ & $0.865$ & $0.942$ \\
$10^{-4}$ & $0.562$ & $0.728$ & $0.821$ & $0.937$ & $0.973$ \\
$10^{-5}$ & $0.688$ & $0.834$ & $0.904$ & $0.973$ & $0.989$ \\
$10^{-6}$ & $0.781$ & $0.901$ & $0.950$ & $0.989$ & $0.996$ \\
$10^{-7}$ & $0.848$ & $0.941$ & $0.974$ & $0.996$ & $0.998$ \\
$10^{-8}$ &         &         & $0.988$ & $0.998$ & $0.999$ \\
\hline
\end{tabular}
\caption{Convergence to $a_{1,2}$.  For $Y = 1.2 < 5/4$
(first column): $R_5(t)/(a_{1,2}\,t^{2/Y})$ where
$R_5 = R_3 - a_{2,1}\,t^{2-1/Y} - a_{4,1}\,t^{4-3/Y}$;
the quartic drift correction is numerically negligible
for this parameter set ($a_{4,1} \approx -2 \times 10^{-5}$),
so the entries coincide with $R_4/(a_{1,2}\,t^{2/Y})$ to the
displayed precision.
For $5/4 < Y < 3/2$ (next two columns):
$R_4(t)/(a_{1,2}\,t^{2/Y})$ where
$R_4 = R_3 - a_{2,1}\,t^{2-1/Y}$.  For $Y > 3/2$ (last two
columns): $R_3(t)/(a_{1,2}\,t^{2/Y})$.  All columns converge
toward $1$.  For $Y < 4/3$, the convergence is slower because
the gap between $3-2/Y$ (whose coefficient vanishes) and
$2/Y$ is small.  Empty entries indicate quadrature
degradation.}
\label{tab:both_terms}
\end{table}

The underlying Laplace identities 
\eqref{eq:laplace_identity} and \eqref{eq:laplace_exp} 
were each verified independently to machine precision 
($\sim 10^{-14}$) by numerical quadrature with analytical tail 
handling, for all $Y$ in the range $(1.1, 1.95)$.

\subsection{Higher-order exponent structure}
\label{subsec:higher_order}

Beyond the coefficients $a_{2,1}$, $a_{4,1}$, and $a_{1,2}$ established
in Theorem~\ref{thm:rigorous_high_order}, the formal
Laplace method of Section~\ref{subsec:formal_d3} identifies additional
terms at higher exponents.  We briefly catalog those with exponent
at most $2$; their justification is deferred to future work.

\medskip\noindent\textbf{The $\kappa$ and cross-term at 
$t^{1+1/Y}$.}
Two mechanisms contribute at exponent $1 + 1/Y$.  First, the 
linear correction \eqref{eq:f_diff_linear} with 
$\delta(w) = \kappa$ produces a Laplace integral evaluating to 
$\kappa\,d_1$.  Second, the quadratic correction 
\eqref{eq:f_diff_quad} with the cross-term 
$2(i\tilde{b}\,w)(\beta_1\,w^{Y-1})$ gives an integral against 
$e^{-\sigma_Y tw^Y}$ at the same exponent.  Since $\beta_1$ is complex, 
$\Re(i\tilde{b}\,\beta_1) = -\tilde{b}\,\Im(\beta_1) \neq 0$ 
in general.  The combined coefficient is
\begin{align}
    d_{1+1/Y} = \frac{\kappa\,\sigma_Y^{1/Y}}{\pi}\,
    \Gamma\!\left(\frac{Y-1}{Y}\right)
    + \frac{\tilde{b}\,\Im(\beta_1)}{\pi Y}\,
    \sigma_Y^{-(Y-1)/Y}\,\Gamma\!\left(\frac{Y-1}{Y}\right).
    \label{eq:a_1p1Y}
\end{align}

\medskip\noindent\textbf{The second binomial and $\beta_1^2$ 
at $t^{3/Y}$.}
The linear correction with the $n = 2$ binomial term 
$\beta_2\,w^{Y-2}$ and the quadratic correction with 
$\beta_1^2\,w^{2Y-2}$ both contribute at exponent $3/Y$.  
The combined coefficient is
\begin{align}
    d_{3/Y} = -\frac{\Re(\beta_2)}{\pi Y}\,
    \sigma_Y^{(3-Y)/Y}\,\Gamma\!\left(1 - \frac{3}{Y}\right)
    - \frac{\Re(\beta_1^2)}{2\pi Y}\,
    \sigma_Y^{(3-2Y)/Y}\,\Gamma\!\left(\frac{2Y-3}{Y}\right),
    \label{eq:a_3Y}
\end{align}
where the first Laplace identity requires $1 - 3/Y \in (-1,0)$, 
i.e., $Y > 3/2$, and the second integral (against 
$e^{-\sigma_Y tw^Y}$) converges for 
$2Y - 3 > 0$, i.e., $Y > 3/2$.

\medskip\noindent\textbf{Effective ordering and bifurcation 
structure.}
For $Y$ near $3/2$, the exponents $2 - 1/Y$ and $2/Y$ coalesce 
at $Y = 3/2$ where $2 - 1/Y = 2/Y = 4/3$, and the exponents 
$1 + 1/Y$ and $3/Y$ are nearby ($5/3$ and $2$, 
respectively, at $Y = 3/2$). In practice, the ordering of terms 
by \emph{magnitude} may differ from the ordering by 
\emph{exponent}: for example, $a_{1,2}\,t^{2/Y}$ is numerically 
much larger than $a_{2,1}\,t^{2-1/Y}$ even when $2/Y > 2 - 1/Y$ 
(i.e., for $Y < 3/2$), because $\abs{a_{1,2}} \gg \abs{a_{2,1}}$.  The 
exponent lattice describes the asymptotic ordering as 
$t \to 0$; the practical ordering at a given maturity may be 
quite different.

The general bifurcation formula is
\begin{align}
    Y = \frac{n+j}{j}, \qquad n, j \geq 1,
    \label{eq:bifurcation}
\end{align}
obtained by equating the $n$-th binomial exponent $(n+1)/Y$ with 
the $j$-th drift-power exponent $j + (1-j)/Y$.  For 
$(n,j) = (1,2)$ this gives $Y = 3/2$; for $(n,j) = (1,3)$ 
it gives $Y = 4/3$.  However, since the cubic drift ($j = 3$) 
has a zero coefficient, the exponent $3 - 2/Y$ is absent from 
the expansion.  The effective next bifurcation after $Y = 3/2$ 
is at $Y = 5/4$, where the quartic drift $t^{4-3/Y}$ 
crosses the binomial $t^{2/Y}$.

The behavior at $Y = 3/2$ deserves particular comment, and it is 
best to follow along with 
Figure~\ref{fig:Y_order_selected}.  For $Y$ slightly below 
$3/2$, the fourth-order exponent is $2/Y$, which approaches 
$4/3$ as $Y \to 3/2^{-}$.  For $Y$ slightly above $3/2$, the 
fourth-order exponent is $2 - 1/Y$, which also approaches $4/3$ 
as $Y \to 3/2^{+}$.  In both cases, the limiting value $4/3$ is 
precisely the third-order exponent at $Y = 3/2$.  That is, from 
both sides the fourth-order exponent tries to coalesce with the 
third.  But at $Y = 3/2$ itself, the exponent $4/3$ is already 
occupied by the third-order term, so the fourth order is forced 
to jump to the next available candidate, $1 + 1/Y = 5/3$.  The 
fourth-order exponent is thus \emph{discontinuous} as a function 
of $Y$ at $Y = 3/2$: it is equal to $4/3 + \varepsilon$ for $Y$ 
near but different from $3/2$, and then jumps to $5/3$ at the 
critical value.

The formal method extends naturally to arbitrary order. At any 
order $N$, we truncate the binomial expansion 
\eqref{eq:M_term}--\eqref{eq:G_term} at $n = N$, expand 
the exponential $e^{h_0 + h}$ to the required order via the 
Taylor series, and expand the denominator 
$1/(v^2 + \varepsilon^2)$ to the required order.  Each 
resulting integral is either of the form 
$\int_0^\infty v^p\,e^{-\sigma_Y v^Y}\,dv$ or 
$\int_0^\infty v^p\,(1 - e^{-\sigma_Y v^Y})\,dv$, and in 
either case can be computed in 
closed form via gamma functions.

\section{Conclusion and Future Work}
\label{sec:conclusion}

We have derived the second-order ATM call-price expansion for 
the exponential CGMY model with $Y \in (1,2)$ directly from 
the characteristic function, via the Lipton--Lewis 
representation and a rescaling argument tied to the stable 
domain of attraction.  The resulting integral formula for $d_2$ 
(Theorem~\ref{thm:second_order_CGMY}) agrees numerically with 
the closed-form coefficient from \cite{figueroa_lopez_gong_houdre_2014}, 
confirming that the Fourier approach recovers the same 
second-order term without recourse to measure transformations 
or density expansions.

We then extended the combined-integrand method to extract 
closed-form coefficients beyond second order.  The 
drift-squared coefficient $a_{2,1}$ at $t^{2-1/Y}$ agrees 
with the expression $\tilde{\gamma}^2 p_Z(1,0)/2$ from 
\cite{fl_houdre_cgmy}, providing an independent derivation 
via a Laplace integral against $e^{-\sigma_Y tw^Y}$.  The 
binomial coefficient $a_{1,2}$ at $t^{2/Y}$ is, to our 
knowledge, a new closed-form result; its formula involves the 
real part of the first binomial correction $\beta_1$ and a 
gamma function evaluated at $1 - 2/Y$.  We also identified a 
quartic drift term $a_{4,1}$ at $t^{4-3/Y}$, proportional to 
$\tilde{b}^4$, which is intermediate between $a_{2,1}$ and 
$a_{1,2}$ for $Y < 5/4$.

An important structural finding is that odd powers of the 
drift $(i\tilde{b}\,w)^{2j+1}$ are purely imaginary and
contribute nothing to the expansion.  This eliminates the 
exponent $3 - 2/Y$ from the lattice and shifts the effective 
bifurcation from the naive prediction of $Y = 4/3$ to the 
corrected value $Y = 5/4$.  The resulting five-term expansion
is proved in Theorem~\ref{thm:rigorous_high_order},
where a dynamic cutoff partitions the integration domain into
inner, core, and tail regions to control the interplay between
the Taylor regime and the deep tail where the stable exponential
dominates.  The expansion is verified numerically to high 
precision across the full range of $Y$.

\subsection*{Future Work}

Several extensions of the present work appear natural.  The 
characteristic-function approach used here could be applied to 
other tempered stable families (for example, the generalized 
tempered stable models considered in 
\cite{figueroa_lopez_gong_houdre_2014}) to determine the 
minimal analytic conditions needed for a second-order term.  
Extending the analysis to the close-to-the-money regime, where 
log-moneyness $k_t \to 0$ alongside maturity, would require 
expanding the full Lipton--Lewis formula \eqref{eq:ll_form} 
rather than the ATM specialization \eqref{eq:ll_atm}; the 
rescaling methodology developed here provides a natural 
starting point.

On the formal side, the exponent lattice of 
Section~\ref{subsec:higher_order} identifies the terms at 
$t^{1+1/Y}$ and $t^{3/Y}$ as the next candidates beyond the 
five-term expansion, with closed-form coefficients involving 
$\kappa$, $\Im(\beta_1)$, $\Re(\beta_2)$, and 
$\Re(\beta_1^2)$.  Proving these and, in particular, 
establishing the correct remainder bounds at these 
orders would extend the expansion to six or seven terms 
in the low-$Y$ regime.  More ambitiously, one could ask whether 
the Laplace method employed here admits a systematic inductive 
argument that produces the $n$-th coefficient for arbitrary $n$.

Finally, since the integral representation of $d_2$ requires only 
the characteristic exponent and not a closed-form second-order 
constant, the methodology developed here extends naturally to 
models where such a constant is not available.

\appendix
\section{Proofs}

\begin{proof}[Proof of Proposition~\ref{prop:L_rep}]
    The identities \eqref{eq:cgmy_L} and \eqref{eq:ll_atm_trans} follow immediately 
    by making the substitution $v = u t^{1/Y}$ in equation \eqref{eq:ll_atm}. 
    For \eqref{eq:real_theta}, use the polar coordinate representation of complex 
    numbers to rewrite 
    \[
        \left( \Mt t^{1/Y} - iu \right)^{Y} \text{\; and \;}
        \left( \Gt t^{1/Y} + iu  \right)^{Y}.
    \]
\end{proof}

\begin{rem}
    \begin{enumerate}[label=(\roman*)]
    \item Note that $\theta\left( t, v \right) \rightarrow \theta_{0}\left( v \right)$ as 
    $t \rightarrow 0$ for every $v \geq 0$, and $\theta_{0}$ is a real-valued function as 
    \[
        \left( -i \right)^{Y} + \left( i \right)^{Y} = - 2 \abs{\cos{ \left( 
    \frac{Y \pi}{2} \right) } }.
    \]
    \item We will use certain substitutions frequently. To that end, note that
    \begin{align}
        \theta\left( t, t^{1/Y} v \right) = t \psi_0\left( v \right), 
        \label{eq:theta_sub} 
    \end{align}
    where
    \begin{align}
        \psi_0\left( v \right) = i v \tilde{b} + \kappa +
            C \Gamma\left( -Y \right)\left( \left( \Mt - iv \right)^{Y} + 
            \left( \Gt + iv \right)^{Y} \right),
    \end{align}
    and
    \begin{align}
        \theta_{0}\left( t ^{1/Y} v \right) = t \theta_{0}\left( v \right).
        \label{eq:theta0_sub}
    \end{align}
\item Using \eqref{eq:real_theta} and \eqref{eq:theta_sub} leads to 
    \begin{align}
        \Re{\left( \psi_0\left( v \right) \right)} &= \frac{r\left( t, t^{1/Y}v 
        \right)}{t} \notag \\
        &= \kappa + C \Gamma(-Y)   \left[ \left( \Mt^2 + v^2 \right)^{Y/2}   
        \cos \left( Y \arctan \left(
        -\frac{v}{\Mt} \right) \right) \right. \notag \\
        &\;\;\; + \left. \left( \Gt^2 + v^2 \right)^{Y/2}   \cos \left( Y \arctan \left(
        \frac{v}{\Gt } \right) \right) \right].
        \label{eq:real_psi}
    \end{align}
    Observe that $\Re{\left( \psi_0\left( v \right) \right)} \sim -2 C 
    \Gamma\left( -Y \right)
    \abs{\cos{\left( Y \pi/2 \right) } } v^{Y}$ as $v 
    \rightarrow \infty$ where the coefficient is negative, so that $\exp{\left( \Re{\left( 
    \psi_0\left( v \right) \right) } \right)}$ is bounded by $1$ for $v\geq 0$.
    \end{enumerate}
\end{rem}

\begin{proof}[Proof of Theorem \ref{thm:CGMY_first_order}]
    We proceed as in \cite{andersen_lipton} by considering the first order of $\calL$.  
    In order to prove the result, we break $\calL$ into two parts: one where the integration
    is restricted to the interval $[0,\varepsilon]$ and the other where the integration
    is restricted to $(\varepsilon,\infty)$, with $\varepsilon>0$ small. We denote these 
    two parts as $\LL{0}{\varepsilon}$ and $\LL{\varepsilon}{\infty}$, respectively. 
    In what follows, we will expand the integrand of $\LL{0}{\varepsilon}$ around 
    the origin, and we will apply Lebesgue's Dominated Convergence Theorem to 
    $\LL{\varepsilon}{\infty}$. We use $\posc$ to represent a 
    positive constant whose value can change from line to line in the remaining work.

    To first order, we have 
    \begin{align}
        \LL{0}{\varepsilon}\left( t \right) &= \frac{1}{\pi} \Re{\left( 
            \int_{0}^{\varepsilon} 
        \frac{ 1 - \exp{\left( \theta(t,v) \right)}}{v^2 + \frac{1}{4}t^{2/Y}} dv 
        \right) } \notag \\
        &= \frac{1}{\pi} \Re{\left( \int_{0}^{\varepsilon} 
        \frac{ -\theta(t,v) + D\left( t,v \right)}{v^2 + \frac{1}{4}t^{2/Y}} dv 
        \right) } 
        = - \frac{1}{\pi} \int_{0}^{\varepsilon} 
        \frac{ \Re{ \left( \theta(t,v) \right) } }{v^2 + \frac{1}{4}t^{2/Y}} dv
        + \frac{1}{\pi} \int_{0}^{\varepsilon} 
        \frac{ \Re{ \left( D(t,v) \right) } }{v^2 + \frac{1}{4}t^{2/Y}} dv,
        \label{eq:0_to_eps}
    \end{align}
    where $D\left( t,v \right) = O\left( \theta\left( t,v \right)^{2} \right)$ as
    $t,v \rightarrow 0$. 

    Here is where our proof differs from the one presented in \cite{andersen_lipton}: 
    the authors claim that \eqref{eq:0_to_eps} is $O( \varepsilon )$ after letting
    $t \rightarrow 0$ and ignore the remainder term 
    involving $D$. We were not able to verify this claim that
    \eqref{eq:0_to_eps} is $O\left( \varepsilon \right)$ as $t \rightarrow 0$, and 
    further we obtain $O\left( \varepsilon^{Y-1} \right)$ after letting $t 
    \rightarrow 0$.

    First, we show that the remainder term is $O\left( \varepsilon^{2Y-1} \right)$
    as $t \rightarrow 0$. We estimate, for some constant $\posc>0$ whose
    value might change from line to line, and make the 
    substitution $v = t^{1/Y} w$, to obtain
    \begin{align}
        \int_{0}^{\varepsilon} \abs{\frac{\Re{\left( D\left( t,v \right) \right)}}{
            v^{2} + \frac{1}{4} t^{2/Y} } }dv 
            &\leq \int_{0}^{\varepsilon} \frac{\posc \abs{\theta\left( t,v \right)}^{2}}{
            v^{2} + \frac{1}{4} t^{2/Y} } dv \notag \\
            &= \posc t^{-1/Y} \int_{0}^{\varepsilon t^{-1/Y}} \frac{\abs{\theta\left( 
                t,t^{1/Y} w \right)}^{2}}{w^{2} + \frac{1}{4} } dw \notag \\
                &= \posc t^{-1/Y} \int_{0}^{\varepsilon t^{-1/Y}} \frac{t^{2} \abs{
                \psi_0\left( w \right)}^{2}}{w^{2} + \frac{1}{4} } dw \notag \\
                &\leq \posc t^{2-1/Y} \int_{0}^{\varepsilon t^{-1/Y}} \frac{ 
                    \left( 1 \vee w^{2Y} \right)}{w^{2} + \frac{1}{4} } dw \notag \\
                    &\leq \posc t^{2-1/Y} \int_{0}^{1} \frac{1}{w^{2}+ \frac{1}{4}} dw +
                    \posc t^{2-1/Y} \int_{1}^{\varepsilon t^{-1/Y}} \frac{
                    w^{2Y}}{w^{2}} dw \notag \\
                    &\leq \posc t^{2-1/Y} + \posc t^{2-1/Y} \int_{1}^{\varepsilon t^{-1/Y}} 
                    w^{2Y-2} dw \notag \\
                    &\leq \posc t^{2-1/Y} + \frac{\posc}{2Y-1} \varepsilon^{2Y-1}, 
                    \label{eq:ll_fo_remainder}
    \end{align}
    which evaluates to $O\left( \varepsilon^{2Y-1} \right)$ (and hence $o\left( \varepsilon^{Y-1} \right)$) after letting $t \rightarrow 0$.

    Continuing the estimation of \eqref{eq:0_to_eps} and again 
    using the substitution $v = t^{1/Y} w$, 
    \begin{align}
        \abs{ \int_{0}^{\varepsilon} \frac{ \Re{ \left( \theta(t,v) \right) } }{v^2 
            + \frac{1}{4}t^{2/Y}} dv } &= \abs{ \int_{0}^{\varepsilon t^{-1/Y}} 
            \frac{ \Re{ \left( \theta(t,t^{1/Y} w) \right) } }{t^{2/Y} w^2 + 
            \frac{1}{4}t^{2/Y}} t^{1/Y} dw } \notag \\
            &= t^{-1/Y} \abs{ \int_{0}^{\varepsilon t^{-1/Y}} 
            \frac{ \Re{ \left( t \psi_0(w) \right) } }{w^2 + \frac{1}{4}} dw }
            \leq t^{1-1/Y}  \int_{0}^{\varepsilon t^{-1/Y}} 
            \frac{ \abs{\Re{ \left( \psi_0(w) \right) } } }{w^2 + \frac{1}{4}} dw,
                \label{eq:0_to_eps_est}
    \end{align}
    where $\psi_0$ is defined in \eqref{eq:psi_def}. Here, we can use 
    \eqref{eq:real_psi} to estimate the real part of $\psi_0$ as
    \begin{align}
        t^{1-1/Y}  \int_{0}^{\varepsilon t^{-1/Y}} 
            \frac{ \abs{\Re{ \left( \psi_0(w) \right) } } }{w^2 + \frac{1}{4}} dw
            &\leq t^{1-1/Y}  \int_{0}^{\varepsilon t^{-1/Y}} 
            \frac{ \abs{\kappa} + \posc \left( w^{Y} \vee 1 \right) }{w^2 + 
            \frac{1}{4}} dw.
        \label{eq:0_to_eps_final}
    \end{align}
    Splitting the integral, it is clear that the first part can be bounded by 
    \begin{align}
        t^{1-1/Y} \int_{0}^{\infty} \frac{\abs{\kappa}}{w^{2} + \frac{1}{4}}dw = 
        t^{1-1/Y} \abs{\kappa} \pi \rightarrow 0,
        \label{eq:0_to_eps_first}
    \end{align}
    as $t \rightarrow 0$. For the second part, split up the integral
    into two further parts: one integrating on the interval $[0,1]$ and one 
    integrating on the interval $\left(1, \varepsilon t^{-1/Y} \right)$. We assume 
    that $t$ is small enough that $\varepsilon t^{-1/Y} > 1$. On the interval 
    $[0,1]$, the integral can be estimated similarly to \eqref{eq:0_to_eps_first},
    so we only consider the interval $\left( 1, \varepsilon t^{-1/Y} \right)$. Then,
    \begin{align}
        t^{1-1/Y} \int_{1}^{\varepsilon t^{-1/Y}} \frac{\posc w^{Y}}{w^{2} + \frac{1}{4}} dw
        &\leq t^{1-1/Y} \int_{1}^{\varepsilon t^{-1/Y}} \frac{\posc w^{Y}}{w^{2}} dw \notag \\
        &= \frac{\posc}{Y-1} t^{1-1/Y} \left( \left( \varepsilon t^{-1/Y} \right)^{Y-1} 
         - 1 \right)
         = \frac{\posc}{Y-1} \varepsilon^{Y-1} - \frac{\posc}{Y-1} t^{1-1/Y}.
        \label{eq:0_to_eps_second}
    \end{align}
    Combining \eqref{eq:0_to_eps_first}, \eqref{eq:ll_fo_remainder}, and 
    \eqref{eq:0_to_eps_second} proves that \eqref{eq:0_to_eps} is 
    $O\left( \varepsilon^{Y-1} \right)$ as $t \rightarrow 0$. 

    For the integral $\LL{\varepsilon}{\infty}$, our proof is much simpler. Here, 
    $\abs{\exp{\theta\left( t, v \right)}}$ is bounded above by a constant, say
    by $\posc$. We estimate the integrand of $\LL{\varepsilon}{\infty}$ as
    \begin{align*}
        \abs{\frac{1 - \Re{\left( \exp{\left( \theta(t,v \right)} \right)} }{v^{2} + 
        \frac{1}{4} t^{2/Y}}} &\leq \frac{1 + \posc}{v^2} \in L^{1}\left(
        \varepsilon,\infty \right), 
    \end{align*}
    where $\posc$ possibly depends on $\varepsilon$. 
    Thus, we apply Lebesgue's Dominated Convergence Theorem to obtain that 
    \begin{align}
        \lim_{t \rightarrow 0} \LL{\varepsilon}{\infty}\left( t \right) =
        \frac{1}{\pi}\int_{\varepsilon}^{\infty} \frac{1 - \exp{\left( \tht{v} \right)}}{v^{2}} dv.
        \label{eq:eps_inf_dct}
    \end{align}

    Finally, we have
    \begin{align}
        \limsup_{t \rightarrow 0} \biggl\lvert \calL\left( t \right) &- 
        \frac{1}{\pi} \int_{0}^{\infty} 
        \frac{1 - \exp{\left( \tht{v} \right)}}{v^{2}} dv \biggr\rvert \notag \\ 
        &= \limsup_{t \rightarrow 0} \abs{\LL{0}{\varepsilon}\left( t \right) 
        + \LL{\varepsilon}{\infty}\left( t \right) - \frac{1}{\pi} \int_{0}^{\infty} 
        \frac{1 - \exp{\left( \tht{v} \right)}}{v^{2}} dv } \notag \\
        &\leq \limsup_{t \rightarrow 0} \abs{\LL{0}{\varepsilon}\left( t \right)} +
        \frac{1}{\pi}\abs{ \int_{0}^{\varepsilon} \frac{1 - \exp{\left( \tht{v} \right)}}{v^{2}} dv } 
        \leq \posc \varepsilon^{Y-1} + \frac{1}{\pi}\abs{ \int_{0}^{\varepsilon} \frac{1 - 
            \exp{\left( \tht{v} \right)}}{v^{2}} dv }, \label{eq:LL_est} 
    \end{align}
    and \eqref{eq:LL_est} converges to $0$ as $\varepsilon \rightarrow 0$. 
    \end{proof}

    We present two technical results that we will need in the proof of the 
    second-order expansion.
    
\begin{lem}
 \label{lem:exp_diff}
    Let $f,g:\left[0, \infty \right) \rightarrow \C$ be functions such that
    $\Re{f(w)}, \Re{g(w)} < 0$ for $w$ large enough. Then for any $t>0$ and 
    $w$ large enough,
    $$
        \abs{e^{t f(w)} - e^{t g(w)}} \leq t \abs{f(w) - g(w)}.
    $$    
\end{lem}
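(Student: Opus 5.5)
The natural route is the fundamental theorem of calculus along the segment joining $tg(w)$ to $tf(w)$, combined with a bound on the modulus of the exponential along that segment. Fix $t>0$ and choose $w_0$ such that $\Re f(w)<0$ and $\Re g(w)<0$ for all $w\ge w_0$. For such $w$ define the path $z(s):=t\bigl(g(w)+s(f(w)-g(w))\bigr)$, $s\in[0,1]$. Since $z'(s)=t(f(w)-g(w))$ is constant,
\[
  e^{tf(w)}-e^{tg(w)}=\int_0^1 \frac{d}{ds}e^{z(s)}\,ds
  = t\bigl(f(w)-g(w)\bigr)\int_0^1 e^{z(s)}\,ds .
\]

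The remaining task is to show $\abs{e^{z(s)}}\le 1$ on $[0,1]$. This is a convexity argument. We have $\abs{e^{z(s)}}=e^{\Re z(s)}$ and
\[
  \Re z(s)=t\bigl((1-s)\Re g(w)+s\Re f(w)\bigr).
\]
This is a convex combination of two negative numbers multiplied by $t>0$, so it is negative, and therefore $\abs{e^{z(s)}}\le 1$ for every $s\in[0,1]$. Taking moduli in the display and using $\abs{\int_0^1 e^{z(s)}\,ds}\le \int_0^1\abs{e^{z(s)}}\,ds\le 1$ gives $\abs{e^{tf(w)}-e^{tg(w)}}\le t\abs{f(w)-g(w)}$ for all $w\ge w_0$. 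This is the claimed bound.

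No genuine obstacle arises. The only point needing care is that the hypothesis must hold pointwise at the same $w$ for both functions, so that the whole segment lies in the closed left half-plane. Note that $w_0$ depends only on $f$ and $g$, not on $t$, which is what makes the bound usable uniformly in $t$ later.

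When the lemma is applied with $f=\psi_0$ and $g=\theta_0$, the hypothesis comes from Remark (iii). That remark gives $\Re\psi_0(w)\sim-\sigma_Y w^Y<0$, and $\theta_0(w)=-\sigma_Y w^Y<0$ directly.
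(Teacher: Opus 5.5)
Your proof is correct and is essentially the paper's own argument. The paper likewise writes the difference as $\int_0^1 h'(s)\,ds$ with $h(s)=e^{t(sf(w)+(1-s)g(w))}$, then bounds $\abs{h(s)}=e^{t\,\Re(sf(w)+(1-s)g(w))}\le 1$ because that real part is a convex combination of two negative numbers.
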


\begin{proof}
    Define the function $h(s) = e^{t\left( s f(w) + (1-s) g(w)\right)}$ where 
    $s\in [0,1]$. Then, we have
    \[
    e^{t f(w)} - e^{t g(w)} = h(1) - h(0) = \int_0^1 h'(s) ds.
    \]
    Note that 
    \[h'(s) = t \left( f(w) - g(w) \right) 
    e^{t\left( s f(w) + (1-s) g(w)\right)}.\] 
    Finally, since $\Re{\left(s f(w) + (1-s) g(w)\right)} < 0$ for $w$
    large enough, we obtain
    \begin{align*}
        \abs{e^{t f(w)} - e^{t g(w)}} &= \abs{\int_0^1 h'(s) ds} \\
        &\leq \int_0^1 \abs{h'(s)} ds \\
        &= t \abs{\left( f(w) - g(w) \right)} \int_0^1 
        \abs{e^{t\left( s f(w) + (1-s) g(w)\right)}} ds \\
        &= t \abs{f(w) - g(w)} \int_0^1 
        e^{t \Re{\left(s f(w) + (1-s) g(w)\right)} } ds \\
        &\leq t \abs{f(w) - g(w)} \int_0^1 ds
        = t \abs{f(w) - g(w)},
    \end{align*}
    where we used $\abs{e^{w}} = e^{\Re{w}} \leq 1$ for $\Re{w} \leq 0$.
\end{proof}

\begin{prop}
 \label{prop:A11_est}
    There exists $w_0 > 0$ and $\eta(w_0) > 0$ such that for all $w > w_0$
    \begin{align}
        \abs{\RE{\tht{w} - \psi_0 \left( w \right)}} \leq \eta w^{Y-1} \vee \abs{\kappa}. 
    \label{eq:tht0_psi_close}
    \end{align}
\end{prop}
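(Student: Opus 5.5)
The plan is to work directly from the explicit formula \eqref{eq:real_psi} for $\RE{\psi_0(w)}$ and compare it term by term with $\tht{w} = -\sigma_Y w^Y$. The drift $i\tilde b w$ is purely imaginary, so $\RE{\psi_0(w) - \tht{w}}$ equals $\kappa$ plus
\[
C\Gamma(-Y)\,\RE{(\Mt - iw)^Y + (\Gt + iw)^Y} + \sigma_Y w^Y .
\]
We first record that $C\Gamma(-Y)\bigl((-i)^Y + i^Y\bigr) = -\sigma_Y$, by the remark following Proposition~\ref{prop:L_rep}, since for $Y\in(1,2)$ one has $\cos(Y\pi/2)<0$ and $\Gamma(-Y)>0$. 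With this, the bracket becomes
\[
C\Gamma(-Y)\,\RE{\bigl[(\Mt - iw)^Y - (-iw)^Y\bigr] + \bigl[(\Gt + iw)^Y - (iw)^Y\bigr]} .
\]
It therefore suffices to bound each of these two differences by a constant times $w^{Y-1}$ for large $w$.

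For the $\Gt$ term, write $(\Gt + iw)^Y - (iw)^Y = (iw)^Y\bigl[(1 + \Gt/(iw))^Y - 1\bigr]$ using the principal branch. This factorization is legitimate because $\arg(iw) = \pi/2$ and $\arg(1 + \Gt/(iw)) \in (-\pi/2, 0]$, so the arguments add without crossing the branch cut; the same check applies to the $\Mt$ term with $\arg(-iw) = -\pi/2$. For $\abs{z}\le 1/2$ the mean value inequality on the segment $[1,1+z]$ gives $\abs{(1+z)^Y - 1} \le Y \sup_{s\in[0,1]} \abs{1+sz}^{Y-1}\abs{z} \le Y (3/2)^{Y-1}\abs{z}$. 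Taking $z = \Gt/(iw)$, respectively $z = -\Mt/(iw)$, with $w_0 := 2\max(\Gt, \Mt)$ ensures $\abs{z}\le 1/2$. Each difference is then bounded by $Y(3/2)^{Y-1}\max(\Mt,\Gt)\, w^{Y-1}$. Because $M>1$, we have $\Mt>0$, so the quantity above is finite and positive.

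Combining the pieces gives, for $w > w_0$,
\[
\abs{\RE{\tht{w} - \psi_0(w)}} \le \abs{\kappa} + \eta' w^{Y-1} \le 2\bigl(\eta' w^{Y-1} \vee \abs{\kappa}\bigr),
\]
which is \eqref{eq:tht0_psi_close} with $\eta := 2(\eta' \vee 1)$. Alternatively, since $Y>1$, enlarging $w_0$ until $w_0^{Y-1} \ge \abs{\kappa}$ absorbs $\kappa$ entirely into the $w^{Y-1}$ term. The only delicate point is the branch bookkeeping in the factorization step, namely that $(ab)^Y = a^Y b^Y$ holds for the relevant arguments. This is handled by the argument ranges noted above, or one can bypass it by using the polar form in \eqref{eq:real_psi} together with a Taylor expansion of $(\Gt^2+w^2)^{Y/2}$ and of $\arctan(w/\Gt) = \pi/2 - \Gt/w + O(w^{-3})$. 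I expect no real obstacle beyond this.
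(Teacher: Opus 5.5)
Your proposal is correct and follows essentially the same route as the paper: discard the purely imaginary drift, treat $\kappa$ additively, and bound each binomial difference $(B\pm iw)^Y-(\pm iw)^Y=(\pm iw)^Y\bigl[(1\pm B/(iw))^Y-1\bigr]$ by a constant times $w^{Y-1}$. Your mean-value bound $\abs{(1+z)^Y-1}\le Y(3/2)^{Y-1}\abs{z}$, together with the branch check, is simply a quantitative version of the paper's first-order expansion $g(u)=YB(\pm iu)^{Y-1}+o(u^{Y-1})$. There is one small slip: $2\bigl(\eta' w^{Y-1}\vee\abs{\kappa}\bigr)$ is not bounded by $\eta w^{Y-1}\vee\abs{\kappa}$ when $\abs{\kappa}$ dominates, so keep your alternative of enlarging $w_0$ until $w_0^{Y-1}\ge\abs{\kappa}$, which gives the bound with $\eta=1+\eta'$.
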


\begin{proof}
    The difference $\tht{w} - \psi_0(w)$ consists of the drift 
    $-i\tilde{b}\,w$, the constant $-\kappa$, and two binomial corrections 
    of the form 
    \begin{align}
        g(u) = \left( B \pm i u \right)^Y - \left( \pm i u \right)^Y
    \end{align}
    where $B$ is a real value ($\Mt$ or $\Gt$).  The drift term 
    $-i\tilde{b}\,w$ is purely imaginary and vanishes under $\Re$, so only 
    the binomial corrections $g$ and the constant $\kappa$ contribute to 
    the real part.  Since $\kappa$ enters additively and 
    $\eta w^{Y-1} \rightarrow \infty$ as $w\rightarrow \infty$, it suffices 
    to take the maximum on the right-hand side of \eqref{eq:tht0_psi_close}. 
    Thus, we just need to show that 
    \begin{align}
        \abs{g(u)} \leq \eta \abs{u}^{Y-1}.
    \end{align}
    To that end, consider for large $u$ and using an analytic expansion for 
    $(1+z)^Y$ when $z \in \C$ is close to $0$, 
    \begin{align}
        g\left( u \right) &=
        \left(\pm iu\right)^Y \left[ \left(1 \pm \frac{B}{iu} \right)^Y 
         - 1 \right] \notag \\
         &= \left(\pm iu\right)^Y \left[ 1 \pm Y \frac{B}{iu} + o\left(
         u^{-1} \right)
         - 1 \right] 
         =   Y B \left( \pm i u\right)^{Y-1} + o\left( u^{Y-1} \right). 
         \label{eq:g_asymptotic}
    \end{align}
    Taking the modulus gives
    $$
    \abs{g(u)} \leq \eta \abs{u}^{Y-1} 
    $$
    for $u$ large enough. 
\end{proof}

\begin{proof}[Proof of Theorem~\ref{thm:second_order_CGMY}]
    We obtain \eqref{eq:cgmy_secondorder} by showing that the function
    \begin{align}
        R\left( t \right) := \frac{c\left( t, 0 \right)}{t^{1/Y}} - \calL\left( 0 \right),
        \label{eq:cgmy_second_remainder}
    \end{align}
    is of order $t^{1-1/Y}$. In terms of $\calL$, write
    \begin{align}
        R\left( t \right) &= \frac{1}{\pi} \Re{\left( \int_{0}^{\infty} \frac{ 1 - 
        \exp{\left( \theta\left( t, v \right) \right) }}{ v^{2} + \frac{1}{4} t^{2/Y} 
    } dv \right) } - \frac{1}{\pi} \int_{0}^{\infty} \frac{1 - 
        \exp{ \left( \tht{ v } \right) } }{ v^2 } dv \notag \\
    &= \frac{1}{\pi} \Re{\left( \int_{0}^{\infty} \frac{ 1 - 
        \exp{\left( \theta\left( t, v \right) \right) }}{ v^{2} + \frac{1}{4} t^{2/Y} 
    } dv  - \int_{0}^{\infty} \frac{1 - 
        \exp{ \left( \tht{ v } \right) } }{ v^2  } dv \right) }. \label{eq:remain_comb}
    \end{align}
    In \eqref{eq:remain_comb}, temporarily ignore the $1/\pi$ and real part and just consider
    the two integrals. 
    
    Fix $\varepsilon> 0 $ and consider the two regions $\left\{  0 < 
    t/v^{Y} < \varepsilon \right\}$ and $\left\{ t/v^{Y} \geq \varepsilon
    \right\}$. Splitting \eqref{eq:remain_comb}, we obtain
    \begin{align}
        R\left( t \right) &= \left( \int_{\left( t/\varepsilon \right)^{1/Y}}^{\infty} 
        + \int_{0}^{\left( t/\varepsilon \right)^{1/Y}}\right) \left( 
        \frac{1 - \exp{\left( \theta\left( t, v \right) \right)}}{v^{2} + \frac{1}{4} 
        t^{2/Y}} - \frac{1 - \exp{\left( \tht{ v } \right)}}{v^2} \right) dv \notag \\
        &= A_{1}\left( t, \varepsilon \right) + A_{2}\left( t, \varepsilon \right) .
        \label{eq:remain_parts}
    \end{align}
 
    First, we evaluate $A_{2}$ by combining fractions and making the substitution
    $v = t^{1/Y} w$, 
    \begin{align}
    A_{2}&\left( t, \varepsilon \right) = \int_{0}^{\left( t/\varepsilon \right)^{1/Y}}
    \left( \frac{ v^{2} \left( 1 - \exp{\left( \theta\left( t, v \right) \right)} \right) 
    - \left(  v^{2} + \frac{1}{4} t^{2/Y} \right) \left( 1 - \exp{\left( \tht{v}
    \right) } \right) }{v^{2} \left( v^{2} + \frac{1}{4} t^{2/Y} \right) } \right) dv 
    \notag \\
    &= t^{-1/Y} \int_{0}^{\varepsilon^{-1/Y}}
    \frac{ w^{2} \left( 1 - \exp{\left( \theta\left( t, t^{1/Y} w \right) 
    \right)} \right) - \left(  w^{2} + \frac{1}{4} \right) \left( 1 - \exp{\left( 
        \tht{t^{1/Y} w}
    \right) } \right) }{w^{2} \left( w^{2} + \frac{1}{4} \right) } dw \notag \\
    &= t^{-1/Y} \int_{0}^{\varepsilon^{-1/Y}}
    \left( \frac{ w^{2} \left( 1 - \exp{\left( t \psi_0\left( w \right) 
    \right)} \right) - \left(  w^{2} + \frac{1}{4} \right) \left( 1 - \exp{\left( t
        \tht{w}
    \right) } \right) }{w^{2} \left( w^{2} + \frac{1}{4} \right) } \right)  dw.
    \label{eq:A2_precise} 
    \end{align}
    Notice that for $w$ close to $0$ and $t$ close to $0$, we can formally expand the 
    numerator to first order and arrive at
    \begin{align}
        t^{1-1/Y} \int_{0}^{\varepsilon^{-1/Y}}
    \left( \frac{ \left(  w^{2} + \frac{1}{4} \right) \tht{w}- w^{2} \psi_0\left( w 
    \right) }{w^{2} \left( w^{2} + \frac{1}{4} \right) } \right)  dw. \label{eq:A2}
    \end{align}
    The integral in \eqref{eq:A2}, considered on its own, is in fact well-defined
    after taking real parts, since $\Re{\psi_0\left( w \right)} \to \kappa$ as 
    $w \to 0$ and $\tht{w}/w^2 = -\sigma_Y w^{Y-2}$ is integrable.
    Using Lebesgue's Dominated Convergence Theorem, we can show \eqref{eq:A2} holds 
    precisely, i.e.
    \begin{align}
        \lim_{t \rightarrow 0} \frac{\Re{ \left( A_{2}\left( t, \varepsilon \right) 
        \right) } }{t^{1-1/Y}} = \int_{0}^{\varepsilon^{-1/Y}}
    \left( \frac{ \left(  w^{2} + \frac{1}{4} \right) \tht{w}- w^{2} 
    \Re{ \left( \psi_0\left( w 
    \right) \right) } }{w^{2} \left( w^{2} + \frac{1}{4} \right) } \right)  dw. 
    \label{eq:A2_conv}
    \end{align}
 
    We start with the representation \eqref{eq:A2_precise} and compute
    \begin{align}
        \frac{ A_{2}\left( t, \varepsilon \right)}{t^{1-1/Y}} &= 
        t^{-1} \int_{0}^{\varepsilon^{-1/Y}}
        \left( \frac{ w^{2} \left( 1 - \exp{\left( t \psi_0\left( w \right) 
        \right)} \right) - \left(  w^{2} + \frac{1}{4} \right) \left( 1 - \exp{\left( t
            \tht{w}
        \right) } \right) }{w^{2} \left( w^{2} + \frac{1}{4} \right) } \right)  dw \notag \\
        &= t^{-1} \int_{0}^{\varepsilon^{-1/Y}}
        \left( \frac{ 1 - \exp{\left( t \psi_0\left( w \right) 
        \right)} }{\left( w^{2} + \frac{1}{4} \right) } \right)  dw - 
        t^{-1} \int_{0}^{\varepsilon^{-1/Y}}
        \left( \frac{ 1 - \exp{\left( t \tht{w}
        \right) }  }{w^{2} } \right)  dw  \notag \\
        &= A_{21}\left( t, \varepsilon \right) - A_{22}\left( t, \varepsilon \right).
        \label{eq:A2_split}
    \end{align}
    First, we estimate the integrand of $A_{22}$ as 
    \begin{align}
        \abs{ \frac{1}{t} \Re{ \left( \frac{ 1 - \exp{\left( t \tht{w} \right) }  
        }{w^{2} } \right) } } 
        &\leq \frac{1}{t} \frac{ \abs{ 1 - \exp{\left( t \tht{w} \right) } } 
        }{w^{2} } 
        \leq \frac{\abs{ \tht{w} } }{w^2}
        \leq \frac{ \posc w^{Y}}{w^2}, \label{eq:A22_bound}
    \end{align}
    where $\posc >0$ and \eqref{eq:A22_bound} is in $L^{1}\left[ 0, \varepsilon^{-1/Y} 
    \right]$. Noting that, for every $w \geq 0$, 
    \[
        \lim_{t \rightarrow 0} \frac{ 1 - \exp{\left( t \tht{w} \right) }  }{t} =
        - \tht{w},
    \]
    Lebesgue's Dominated Convergence Theorem gives
    \[
        \lim_{t \rightarrow 0} A_{22}\left( t, \varepsilon \right) = 
        - \int_{0}^{\varepsilon^{-1/Y}} \frac{\tht{w}}{w^2} dw.
    \]
    We can apply a similar argument to $A_{21}$ with the one exception being that 
    our bounding function is now
    \[
        \frac{2 \abs{\psi_0\left( w \right)}}{w^2 + \frac{1}{4}},
    \]
    where $\abs{\psi_0}$ and $1/\left( w^{2} + 1/4 \right)$ are bounded 
    on $\left[ 0, \varepsilon^{-1/Y} \right]$. Again, using Lebesgue's Dominated
    Convergence Theorem and recombining the results gives
    \[
        \lim_{t \rightarrow 0} \frac{ A_{2}\left( t, \varepsilon \right)}{ t^{1-1/Y}} = 
        \int_{0}^{\varepsilon^{-1/Y}}
        \left( \frac{ \left(  w^{2} + \frac{1}{4} \right) \tht{w}- w^{2} 
        \Re{\left( \psi_0\left( w \right) \right)} }{w^{2} \left( w^{2} + \frac{1}{4} \right) } \right)  dw.
    \]
 
    We now turn our attention to $A_1$.  Using the identity
    \begin{align}
        \frac{1}{v^{2} + \frac{1}{4} t^{2/Y}} - \frac{1}{v^{2}} 
        = \frac{-\frac{1}{4} t^{2/Y}}{v^{2}\!\left( v^{2} 
        + \frac{1}{4} t^{2/Y} \right)},
        \label{eq:pf_identity}
    \end{align}
    we decompose the integrand of $A_1$ as
    \begin{align}
        &\frac{1 - \exp{\left( \theta\left( t, v \right) \right)}}{v^{2} 
        + \frac{1}{4} t^{2/Y}} - \frac{1 - \exp{\left( \tht{v} \right)}}{v^{2}} 
        \notag \\
        &\qquad = \frac{\exp{\left( \tht{v} \right)} 
        - \exp{\left( \theta\left( t, v \right) \right)}}{v^{2} + \frac{1}{4} t^{2/Y}}
        \;-\; \frac{\frac{1}{4} t^{2/Y} \left( 1 - \exp{\left( \tht{v} \right)} 
        \right)}{v^{2}\!\left( v^{2} + \frac{1}{4} t^{2/Y} \right)}.
        \label{eq:A1_identity}
    \end{align}
    This is an exact algebraic identity, so
    \begin{align}
        A_{1}\left( t,\varepsilon \right) 
        &= \int_{\left( t/\varepsilon \right)^{1/Y}}^{\infty} 
           \frac{\exp{\left( \tht{v} \right)} - \exp{\left( \theta\left( t, v \right) 
           \right)}}{v^{2} + \frac{1}{4} t^{2/Y}}\,dv \notag \\
        &\quad - \frac{1}{4} t^{2/Y} \int_{\left( t/\varepsilon \right)^{1/Y}}^{\infty} 
           \frac{1 - \exp{\left( \tht{v} \right)}}
                {v^{2}\!\left( v^{2} + \frac{1}{4} t^{2/Y} \right)}\,dv \notag \\
        &= A_{11}\left( t, \varepsilon \right) + A_{12}\left( t, \varepsilon \right),
        \label{eq:first_part_parts}
    \end{align}
    with no approximation error.
 
    We begin with $A_{12}$.  The substitution $v = t^{1/Y} w$ gives
    \begin{align}
        \frac{A_{12}\left( t, \varepsilon \right)}{t^{1-1/Y}} 
        = -\frac{1}{4t} \int_{\varepsilon^{-1/Y}}^{\infty} 
        \frac{1 - \exp{\left( t \tht{w} \right)}}{w^{2}\!\left( w^{2} 
        + \frac{1}{4} \right)}\,dw.
        \label{eq:A12_rescaled}
    \end{align}
    Using the elementary inequality $\abs{e^z - 1} \leq \abs{z}$ for 
    $\Re{z} \leq 0$, with $z = t\tht{w}$ 
    (noting $\tht{w} \leq 0$ for all $w \geq 0$),
    \begin{align}
        \abs{\frac{1 - \exp{\left( t \tht{w} \right)}}{t \cdot w^{2}\!\left( w^{2} 
        + \frac{1}{4} \right)}} 
        \leq \frac{\abs{\tht{w}}}{w^{2}\!\left( w^{2} + \frac{1}{4} \right)}
        \leq \frac{\posc\, w^{Y}}{w^{4}} = \posc\, w^{Y-4},
        \label{eq:A12_bound}
    \end{align}
    which is in $L^{1}\!\left[ \varepsilon^{-1/Y}, \infty \right)$ since 
    $Y - 4 \in (-3,-2)$.  Since 
    $\left( 1 - e^{t\tht{w}} \right)/t \rightarrow -\tht{w}$ 
    pointwise as $t \rightarrow 0$, Lebesgue's Dominated 
    Convergence Theorem gives
    \begin{align}
        \lim_{t \rightarrow 0} \frac{\Re{\left( A_{12}\left( t, \varepsilon \right)
        \right)}}{t^{1-1/Y}} = \frac{1}{4} \int_{\varepsilon^{-1/Y}}^{\infty} 
        \frac{\tht{w}}{w^{2}\!\left( w^{2} + \frac{1}{4} \right)}\,dw.
        \label{eq:A12_coefficient}
    \end{align}
 
    We now consider $A_{11}$.  The substitution $v = t^{1/Y} w$ gives
    \begin{align}
        \frac{A_{11}\left( t, \varepsilon \right)}{t^{1-1/Y}} 
        = t^{-1} \int_{\varepsilon^{-1/Y}}^{\infty} 
        \frac{\exp{\left( t \tht{w} \right)} - \exp{\left( t \psi_0\left( w \right) 
        \right)}}{w^{2} + \frac{1}{4}}\,dw.
        \label{eq:A11_rescaled}
    \end{align}
    For $\varepsilon$ small enough that $\varepsilon^{-1/Y} > w_{0}$ (with 
    $w_{0}$ as in Proposition~\ref{prop:A11_est}), the hypotheses of 
    Lemma~\ref{lem:exp_diff} are satisfied for all $w \geq \varepsilon^{-1/Y}$, so
    \[
        \abs{e^{t \tht{w}} - e^{t \psi_0\left( w \right)}} 
        \leq t\,\abs{\tht{w} - \psi_0\left( w \right)}.
    \]
    Since $A_{11}$ appears inside $\RE{\cdot}$ in \eqref{eq:remain_comb}, 
    it suffices to bound the real part of the integrand.  Write
    \[
        \RE{e^{t\tht{w}} - e^{t\psi_0(w)}}
        = \bigl(e^{t\tht{w}} - e^{t\RE{\psi_0(w)}}\bigr)
        + e^{t\RE{\psi_0(w)}}\bigl(1 - \cos\bigl(t\,\Im{\psi_0(w)}\bigr)\bigr).
    \]
    For the first term, both $\tht{w}$ and $\RE{\psi_0(w)}$ are negative for 
    $w \geq \varepsilon^{-1/Y}$, so 
    \[
    \abs{e^{t\tht{w}} - e^{t\RE{\psi_0(w)}}} \leq t\abs{\RE{\tht{w} - \psi_0(w)}}
    \leq t\bigl(\eta\,w^{Y-1} \vee \abs{\kappa}\bigr)
    \]
    by Proposition~\ref{prop:A11_est}.  For the second, 
    $1 - \cos(x) \leq x^{2}/2$ and $\Im{\psi_0(w)} = O(w)$ give 
    $e^{t\RE{\psi_0}}\bigl(1 - \cos(t\,\Im{\psi_0})\bigr) 
    \leq \posc\,t^{2}\,w^{2}\,e^{-\posc\,t w^{Y}}$,
    and since $\sup_{t>0}\, t\,e^{-\posc\,t w^{Y}} \leq \posc\,w^{-Y}$, we obtain
    \begin{align}
        \frac{\abs{\RE{e^{t \tht{w}} - e^{t \psi_0\left( w \right)}}}}
             {t\!\left( w^{2} + \frac{1}{4} \right)}
        \leq \frac{\eta\, w^{Y-1} + \posc\, w^{2-Y}}
             {w^{2} + \frac{1}{4}},
        \label{eq:A11_bound}
    \end{align}
    which is in $L^{1}\!\left[ \varepsilon^{-1/Y}, \infty \right)$ since 
    $Y - 3 \in (-2,-1)$ and $-Y < -1$.  Since 
    $\RE{\left( e^{t\tht{w}} - e^{t\psi_0(w)} \right)}/t 
    \rightarrow \tht{w} - \RE{\psi_0\left( w \right)}$ 
    pointwise, Lebesgue's Dominated Convergence Theorem gives
    \begin{align}
        \lim_{t \rightarrow 0} \frac{\RE{\left( A_{11}\left( t, \varepsilon \right) 
    \right)}}{t^{1-1/Y}} &= \int_{\varepsilon^{-1/Y}}^{\infty} \frac{ \tht{w} 
        - \RE{\left( \psi_0\left( w \right) \right)} }{ w^{2} + \frac{1}{4} }\,dw.
        \label{eq:A11_formal}
    \end{align}
    The integral in \eqref{eq:A11_formal} is well-defined: although 
    $\tht{w}$ and $\Re{\left( \psi_0\left( w \right) \right)}$ each 
    grow like $w^{Y}$, their difference is $O\!\left( w^{Y-1} \right)$ 
    by Proposition~\ref{prop:A11_est}, so the integrand is 
    $O\!\left( w^{Y-3} \right)$.
 
    We now collect the terms.  Combining 
    \eqref{eq:A11_formal} and \eqref{eq:A12_coefficient},
    \begin{align}
        \lim_{t \rightarrow 0} \frac{\Re{\left( A_{1}\left( t, \varepsilon \right) 
        \right)}}{t^{1-1/Y}} 
        &= \int_{\varepsilon^{-1/Y}}^{\infty} \frac{ \tht{w} 
        - \Re{\left( \psi_0\left( w \right) \right)} }{ w^{2} + \frac{1}{4} }\,dw
        \;+\; \frac{1}{4} \int_{\varepsilon^{-1/Y}}^{\infty} 
        \frac{\tht{w}}{w^{2}\!\left( w^{2} + \frac{1}{4} \right)}\,dw \notag \\
        &= \int_{\varepsilon^{-1/Y}}^{\infty} 
        \frac{\left( w^{2} + \frac{1}{4} \right) \tht{w} - w^{2} 
        \Re{\left( \psi_0\left( w \right) \right)}}{w^{2}\!\left( w^{2} 
        + \frac{1}{4} \right)}\,dw.
        \label{eq:d2_a1}
    \end{align}
    Together with the $A_{2}$ limit \eqref{eq:A2_conv}, we obtain
    \begin{align}
        \lim_{t\to 0} \frac{R(t)}{t^{1-1/Y}} 
        &= \frac{1}{\pi} \left[
        \int_{\varepsilon^{-1/Y}}^{\infty} 
        \frac{\left( w^{2} + \frac{1}{4} \right) \tht{w} - w^{2} 
        \Re{\left( \psi_0\left( w \right) \right)}}{w^{2}\!\left( w^{2} 
        + \frac{1}{4} \right)}\,dw \right. \notag \\
        &\qquad\quad + \left.
        \int_{0}^{\varepsilon^{-1/Y}} 
        \frac{\left( w^{2} + \frac{1}{4} \right) \tht{w} - w^{2} 
        \Re{\left( \psi_0\left( w \right) \right)}}{w^{2}\!\left( w^{2} 
        + \frac{1}{4} \right)}\,dw \right] \notag \\
        &= \frac{1}{\pi} \int_{0}^{\infty} 
        \frac{\left( w^{2} + \frac{1}{4} \right) \tht{w} - w^{2} 
        \Re{\left( \psi_0\left( w \right) \right)}}{w^{2}\!\left( w^{2} 
        + \frac{1}{4} \right)}\,dw.
        \label{eq:d2_a2}
    \end{align}
    The integrand is $O\!\left( w^{Y-2} \right)$ as $w \rightarrow 0$ 
    (since $\tht{w} = -\sigma_{Y} w^{Y}$ and 
    $\Re{\left( \psi_0\left( w \right) \right)} \rightarrow \kappa$ as 
    $w \rightarrow 0$) and $O\!\left( w^{Y-3} \right)$ as 
    $w \rightarrow \infty$ (by Proposition~\ref{prop:A11_est}), so 
    the integral converges and equals \eqref{eq:d2_formal}.
\end{proof}

\bibliography{levywork_updated}

\end{document}